\newcommand{\Z}{\mathbb{Z}}
\newcommand{\R}{\mathbb{R}}
\newcommand{\ra}{\rightarrow}
\newcommand{\bface}{{\bf f}}
\newcommand{\bmsquare}{\boldsymbol{\msquare}}
\newcommand{\bDelta}{{\bf \Delta}}
\newtheorem{defn}{Definition}
\newtheorem{thm}{Theorem}
\newtheorem{lem}[thm]{Lemma}
\newtheorem{prop}[thm]{Proposition}
\newcommand\restr[2]{{
  \left.\kern-\nulldelimiterspace 
  #1 
  \vphantom{\big|} 
  \right|_{#2} 
  }}
\newcommand{\msquare}{{\mathord{{\scalerel*{\Box}{t}}}}}
\newcommand{\ihat}{\hat{\imath}}
\newcommand{\jhat}{\hat{\jmath}}
\begin{document}

\def\SU{\mathrm{SU}}
\def\SO{\mathrm{SO}}
\def\Spin{\mathrm{Spin}}

\title{Higher cup products on hypercubic lattices: application to lattice models of topological phases}
\author{Yu-An Chen$^{1,2}$}
\author{Sri Tata$^1$}
\affiliation{$^1$Condensed Matter Theory Center, Joint Quantum Institute, and Department of Physics, NIST/University of Maryland, College Park, Maryland 20472 USA}
\affiliation{$^2$Joint Center for Quantum Information and Computer Science, University of Maryland, College Park,
Maryland 20742, USA}

\begin{abstract}
In this paper, we derive the explicit formula for higher cup products on hypercubic lattices, based on the recently developed geometrical interpretation on the simplicial complexes. We illustrate how this formalism can elucidate lattice constructions on hypercubic lattices for various models and deriving them from spacetime actions. In particular, we demonstrate explicitly that the (3+1)D SPT $S = \frac{1}{2} \int w_2^2 + w_1^4$ (where $w_1$ and $w_2$ are the first and second Stiefel-Whitney classes) is dual to the 3-fermion Walker-Wang model constructed on the cubic lattice.
Other examples include the double-semion model, and also the ``fermionic'' toric code in arbitrary dimensions on hypercubic lattices. In addition, we extend previous constructions of exact boson-fermion dualities and the Gu-Wen Grassmann Integral to arbitrary dimensions. Another result which may be of independent interest is a derivation of a cochain-level action for the generalized double-semion model, reproducing a recently derived action on the cohomology level.
\end{abstract}

\maketitle

\twocolumngrid
\tableofcontents
\onecolumngrid

\section{Introduction}
In algebraic topology, the cup product and higher cup products are fundamental operations on cocycles and cohomology. Cup products defined on simplicial complexes were central to the early development of algebraic topology, and the higher cup products and Steenrod operations introduced in~\cite{S47} have also been valuable tools for topologists. These operations were shortly afterwards interpreted geometrically~\cite{thom1950} in terms of Poincaré duals of cohomology classes. While these operations have been central to the cohomology-level understanding of topology, cochain-level formulas on concrete simplicial complexes have often been considered opaque and unnecessary for the study of manifold topology. However, renewed interest in these cochain-level formulas has come from the physics community in the study of topological phases of matter. In particular, Gu-Wen~\cite{guWen2014Supercohomology} found that cochain-level formulas for higher-cup products and Steenrod squares pop out of fermionic path integral constructions of topological lattice models. And recently, geometric interpretations~\cite{T18,T20} of the cochain-level formulas have been found and related to combinatorial spin structures and fermions.

In this paper, we extend the construction of~\cite{T20} to the hypercubic lattice and define cochain-level formulas for the higher cup operations on hypercubic lattices. We then apply these formulas and give an expository account through the hypercubic lattice of how they can be used to \textit{systematically} derive Hamiltonians of some well-known lattice models of topological phases given cochain-level formulas of space-time partition functions of topological actions.

The organization of this paper can roughly be split into two parts; the first half Secs.~\ref{sec:Notation}-\ref{sec:cochainsOverZ} introduces our operations, while the second half Secs.~\ref{sec:FermionsAndExactBosonization}-\ref{sec:ConstructionOfLatticeModels} applies the operations to various physical models.

In Section~\ref{sec:reviewOfOlderNotations}, we review the simplicial higher cup operations and some previous partial results for conventions in the two- and three-dimensional cubical cases. In Sec.~\ref{sec:Notation}, we establish convenient notation for the hypercubic cell structures that we use throughout and define chains and cochains. In Sec.~\ref{sec:cupM_OverZ2}, we give systematic definitions of the operations in all dimensions over $\Z_2$. In Sec.~\ref{sec:cochainsOverZ}, we do the same over $\Z$. Here, we find that various sign factors that show up over $\Z$ have interpretations in terms of signed intersections of dual cells. 

In Sec.~\ref{sec:FermionsAndExactBosonization}, we discuss applications of our formalism to fermionic systems in hypercubic lattices of arbitrary dimensions. In particular, we extend the $d$-dimensional exact boson-fermion duality~\cite{chen2020BosonArbDimensions} and the Gu-Wen Grassmann Integral~\cite{guWen2014Supercohomology} in $d$-dimensions to the hypercubic case. In Sec.~\ref{sec:ConstructionOfLatticeModels}, we illustrate how to systematically construct Hamiltonian models of topological phases from topological actions in our hypercubic formalism.
Particular examples include the $\Z_n$ toric code in arbitrary dimensions, the double-semion model~\cite{levinGu2012}, the 3-fermion Walker-Wang model~\cite{BCFV14}, and fermionic toric code~\cite{kapustinThorngren2017,chen2020BosonArbDimensions}. On the 3-fermion Walker-Wang model, we briefly discuss how our formalism can simplify the recently-constructed Quantum Cellular Automaton (QCA) \cite{HFH18} \footnote{In the original construction, the algebraic relations are checked by computers using the polynomial method. In the cup product formalism, the construction can be verified by simple calculations.}. Details are deferred to a future work~\cite{myfuturework}. 

Many detailed and technical calculations are relegated to the appendices. However, Appendix~\ref{app:cochainActionGenDoubleSemion} may be of independent interest from the main threads of this paper; there we consider the generalized double-semion model~\cite{FH16} and give a cochain-level derivation of the topological action~\cite{debray2019} that describes it.

\section{Review of $\cup_i$ products on triangulations and previous conventions on a cubic lattice} \label{sec:reviewOfOlderNotations}

First, we review some concepts in algebraic topology and also introduce notations used in this paper, in particular reviewing notation of triangulations and previous constructions of hypercubic cochain operations.

We will always work with an arbitrary triangulation of a simply-connected $d$-dimensional triangulation manifold $M^d$ equipped with a branching structure. A branching structure is an assignment of arrows to each $1$-simplex that doesn't form any closed loops on triangles. This is equivalent to there being a total order on vertices of a $d$-simplex. In general, we'll write $\Delta_k = \braket{0 \cdots k}$ so that the branching structure locally orders edges as $0 \to \cdots \to k$. We often refer to vertices, edges, faces, and tetrahedra as $v,e,f,t$, respectively. A general $k$-simplex is denoted as $\Delta_k$, and we refer to $T^k(M^d)$ as the set of $k$-simplices of $M^d$.

Associated to the set of $k$-simplices of $M^d$ is the chain space $C_k(M^d,\Z)$ which is a free $\Z$-module generated by basis vectors $\{\ket{0 \cdots k}\}_{\braket{0 \cdots k} \in T^k(M^d)}$. A finite linear combination of $k$-simplices with coefficients in $\Z$ is called a $k$-chain. The $k$-chains form an abelian group denoted $C_k(M^d,\Z)$. One can also consider the reductions of these space modulo 2 to give the  $\Z_2$ chain spaces $C_k(M^d,\Z_2)$. A $\Z_2$-valued $k$-chain can be identified with a finite set of $k$-simplices (the set consisting of those simplices whose coefficients are nonzero).

Now we define cochains, over $\Z_2$. A $\Z_2$-valued $k$-cochain is a function from the set of $k$-simplices to $\Z_2$, and can be expressed in terms of the dual vectors $\{\bra{0 \cdots k}\}_{\braket{0 \cdots k} \in T^k(M^d)}$. The set of all $k$-cochains is an abelian group denoted $C^k(M^d,\Z_2)$. For example, a $0$-cochain assigns an element of $\{0,1\} \in \Z_2$ to all vertices, and a $1$-cochain assigns $0$ or $1$ to all edges. A $k$-cochain acts on $k$-chain by evaluating the cochain on each simplex of the $k$-chain and adding up the results modulo $2$. This gives the \textit{chain-cochain pairing}. For $a \in C_k(M^d,\Z_2)$ and $b \in C^k(M^d,\Z_2)$, we often use the notations
\begin{equation}
    \int_{M} (a)(b) = \int_M (b)(a) = \int_a b
\end{equation}
to denote this pairing.

For each simplex $\Delta_k$, it's useful to define the indicator cochain ${\bf \Delta}_k$. For example, a vertex $v$ is associated to a 0-cochain ${\bf v}$, which takes value $1$ on $v$, and $0$ otherwise, i.e. ${\bf v}(v') = \delta_{v,v'} $. Similarly, edges $e$ give us 1-cochains ${\bf e}$ with ${\bf e}(e') = \delta_{e,e'} $, and so forth, i.e., ${\bf f}(f) = \delta_{f, f'}$ for the face $f$ \footnote{In principle, vertices, edges, and faces are labelled by $\Delta_0$, $\Delta_1$, and $\Delta_2$, and their dual cochains are $\bDelta_0$, $\bDelta_1$, and $\bDelta_2$. For familiarity, these low dimensional simplices are often denoted as $v$, $e$, and $f$ for vertices, edges, and faces.}. All such indicator cochains will be denoted in bold. 

The boundary operator is denoted by $\partial$. For a $k$-simplex $\Delta_k = \braket{0 \cdots k}$, $\partial \Delta_k$ consists of all boundary $(k-1)$-simplices of $\Delta_k$. In other words,
\begin{equation}
    \partial \ket{0 \cdots k} = \sum_{i=0}^k \ket{0 \cdots \hat{i} \cdots k}
\end{equation}
where $\hat{i}$ means skipping over that index.

The coboundary operator $\delta$ (not to be confused with the previous Kronecker delta) acts on cochains and is the the adjoint of $\partial$. On indicator cochains, we'll have
\begin{equation}
    \delta {\bf \Delta}_k = \sum_{ \Delta_{k+1} \supset \Delta_k} {\bf \Delta}_{k+1}.
\end{equation}
where the notation $\Delta_{k+1} \supset \Delta_k$ means that $\Delta_k$ is a subsimplex of $\Delta_{k+1}$. For general cochains, the definition above extends by linearity to give
\begin{equation}
    \delta \alpha = \sum_{\Delta_k \in \alpha} \sum_{ \Delta_{k+1} \supset \Delta_k} {\bf \Delta}_{k+1}.
\end{equation}

Now, we review product operations over cochains on triangulations. These constructions require a branching structure on the triangulation in order to determine local vertex orderings, whereas the boundary operators did not.

The cup product $\cup \equiv \cup_0$ and the higher cup product $\cup_n$ were orignally defined by Steenrod \cite{S47} for an arbitrary simplicial complex, and interpreted geometrically in \cite{T18, T20}. In Ref.~\cite{CKR18, CK18}, similar definitions for $\cup$ and $\cup_1$ were given for 2D square lattices and 3D cubic lattices. Here, we'll review previous definitions on both a triangulated lattice and the cubic lattices and how they fit into the framework of the geometric interpretations.

On a triangulated space, the cup product $\cup$ of a $p$-cochain $\alpha_p$ and a $q$-cochain $\beta_q$ is a $(p+q)$-cochain defined as \cite{H00}:
\begin{equation}
    \begin{split}
        (\alpha_p \cup \beta_q ) (0, \dots, p+q) &= \alpha_p (0, 1, \dots, p) \beta_q (p, p+1, \dots, p+q) \\
        &= \alpha_p (0 \to p) \beta_q (p \to p+q).
    \end{split}
\end{equation}
where $i \to j = \{i, i+1, \cdots, j-1, j\}$ are the integers between $i$ and $j$. The cup product satisfies the important Leibniz rule:
\begin{equation}
    \delta (\alpha \cup \beta) = \delta \alpha \cup \beta + \alpha \cup \delta \beta.
\end{equation}

The $\cup_m$ product of a $p$-cochain with a $q$-cochain is a $(p+q-m)$-cochain defined as
\begin{equation} \label{eq: definition of higher cup product}
    (\alpha_p \cup_m \beta_q) (0,1,\cdots, p+q-m)= \sum_{0 \leq i_0 < i_1 < \cdots < i_m \leq p+q-m } \alpha_p(0 \to i_0,i_1 \to i_2, i_3 \to i_4, \cdots)  \beta_q(i_0 \to i_1, i_2 \to i_3, \cdots)
\end{equation}
where $\{ i_0,i_1, \dots, i_m \}$ are chosen such that the arguments of $\alpha_p$ and $\beta_q$ contain $p+1$ and $q+1$ numbers separately. For example,
\begin{equation}
    (\alpha_2 \cup_1 \beta_2) (0123)= \alpha_2(013) \beta_2(123) + \alpha_2 (023) \beta_2(012).
\end{equation}
For arbitrary $\Z_2$-cochains $\alpha$ and $\beta$, the higher cup products satisfy a recursive Leibniz-like identity.
\begin{equation}
    \delta (\alpha \cup_{m} \beta) = \delta \alpha \cup_m \beta + \alpha \cup_{m-1} \delta \beta + \alpha \cup_m \beta + \beta \cup_{m-1} \alpha.
\end{equation}
which agrees with the $\cup_0\equiv \cup$ rule setting $\cup_{-1} \equiv 0$.

Now, we describe the geometric interpretations. A $p$-cochain $\alpha$ in $M^d$ can be thought of as some $(d-p)$-dimensional surface $\alpha^\vee$ living on the dual cellulation. On the cohomology-level, it's well-known that for closed cochains, $\alpha \cup \beta$ is Poincaré-dual to the intersection of the dual submanifold $\alpha^\vee \cap \beta^\vee$. However, the formula above actually describes a certain cochain-level intersection between $\alpha^\vee, \beta^\vee$. In order to define an intersection between $\alpha^\vee, \beta^\vee$, one first needs to consider a \textit{shifted} version of the dual cellulation and also the dual submanifold $\beta^\vee_\text{shifted}$, where the shifting is determined by a vector field associated to the triangulation. Then, $\alpha \cup \beta$ is dual to $\lim_{\epsilon \to 0} \alpha^\vee \cap \beta^\vee_\text{shifted}$, where $\epsilon$ is some parameter determining the shifting. In Fig.~\ref{fig:review_cup0Int}, we illustrate this in two dimensions, and give a similar picture that corresponds to a previous $\cup_0$ definition on the square lattice, which can be interpreted by resolving the square lattice into a certain branched triangulated lattice. 

\begin{figure}[h!]
  \centering
  \includegraphics[width=\linewidth]{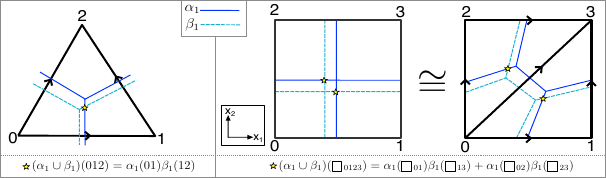}
  \caption[]{The $\cup$ product between $\alpha$ and $\beta$ can be interpreted as measuring the intersection of $\alpha^\vee$ with a shifted version of $\beta^\vee$. This can be used to define $\cup$ on a square lattice by resolving the square lattice into a triangular one.}
  \label{fig:review_cup0Int}
\end{figure}

The $\cup_m$ products are similar, except instead one considers a \textit{thickened} intersection product. In particular, one first thickens $\beta^\vee$ along $m$ vector fields and shifts along an $(m+1)^{th}$. Then, $\alpha \cup_m \beta$ is dual to $\lim_{\epsilon \to 0} \alpha^\vee \cap \beta^\vee_{\text{thickened,shifted}}$. Another way to think about this thickening is to instead locally \textit{project away} the thickening direction and measure the intersections in the projected version. This projection was indeed related to the previous definition \cite{CK18} of $\cup_1$ on a cubic lattice. This interpretation via a projection is depicted in Fig.~\ref{fig:review_cup1Projection}. 

\begin{figure}[h!]
  \centering
  \includegraphics[width=\linewidth]{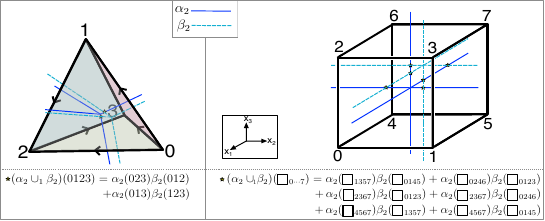}
  \caption[]{The $\cup_1$ product between $\alpha_2$ and $\beta_2$ in three dimensions can be interpreted measuring the intersections between the cochains in $\alpha_2$ on the dual 1-skeleton and $\beta_2$ on a shifted dual 1-skeleton after a certain projection onto 2D. Equivalently, it can be thought of as a thickened intersection, where on thickens $\beta$ in precisely the projection direction mentioned above. }
  \label{fig:review_cup1Projection}
\end{figure}

Although we primarily focus on $\Z_2$-valued cochains, in Sec.~\ref{sec:cochainsOverZ} we talk about how to define $\Z$-valued cochains on a hypercubic lattice and relevant product operations on them, as well how to interpret various signs that show up in the formulas.

\section{Notation for cell structure on $\Z^d$} \label{sec:Notation}

First, let's describe the notation we'll use to define cochains on a hypercube. We'll denote the hypercube in $d$-dimensions as $\msquare_d$. As a subset of $\R^d$, we'll use the following coordinates for the unit hypercube
\begin{equation}
\msquare_d = \{(x_1,\cdots,x_d) \in \R^d | x_i \in [-1,1]\}
\end{equation}
when describing our cochain operations.

\subsection{The Cells}
We will denote the set of $p$-cells of $\msquare_d$ as $F_p(\msquare_d)$. For example, the vertices will be $F_0(\msquare_d)$, and can be written as:
\begin{equation}
F_0(\msquare_d) = \{(\sigma_1, \cdots, \sigma_d) | \sigma_i = \pm 1\}
\end{equation}
We can think of such a label $(\sigma_1 \cdots \sigma_d)$ as equivalently describing a vertex's coordinate in $\R^d$. 

Now let's enumerate the $p$-cells, in $F_p(\msquare_d)$. A $p$-cell will be some hyperplane, spanned by the vectors in the directions $\{i_1 \cdots i_p\}$. But, this hyperplane doesn't completely specify the cell. There are $2^{d-p}$ total cells with the same directions, and they are labeled by which `side' of the center they're on, which is labeled by the other coordinates which do not vary. This is equivalent to a choice of $\pm 1$ for each element of $\{\ihat_1 \cdots \ihat_{d-p}\} := \{1 \cdots n\} \backslash \{i_1 \cdots i_p\}$. As such, we can choose to label the $p$-cells via a choice of $\{i_1 \cdots i_p\} \subset \{1 \cdots d\}$ and signs $\{\pm_{\ihat_1} \cdots \pm_{\ihat_{d-p}}\}$.

We can organize these labels and specify a hypercube as a tuple $(z_1 \cdots z_d)$ of length $d$ with the following entries. We'll say $z_j = \bullet$ if $j$ is a direction of the hyperplane, i.e. if $j \in \{i_1 \cdots i_p\}$. And we'll say $z_j = +$ or $z_j = -$ if $j \in \{\ihat_1 \cdots \ihat_{d-p}\}$, with the sign depending on what the corresponding coordinates of the cell (we'll sometimes write this equivalently as $z_j = \pm 1$). A cell described by $(z_1 \cdots z_d)$ will then be a $p$-cell if $p$ of the entries satisfy $z_i = \bullet$. We'll denote this cell by $P_{(z_1 \cdots z_d)}$. As a subset of $\R^d$, we can write $P_{(z_1 \cdots z_d)}$ as:

\begin{equation}
P_{(z_1 \cdots z_d)} := \{(x_1,\cdots,x_d) \in \R^d | 
x_i \in [-1,1] \text{ if } z_i = \bullet, \text{ or }  x_{\ihat} = z_{\ihat}  \text{ if } z_{\ihat} = \pm 1\}
\end{equation}

For example, we can enumerate the cells of $\msquare_2$ as:

\begin{equation}
\begin{split}
F_0(\msquare_2) &= \{(+,+),(+,-),(-,+),(-,-)\} \\
F_1(\msquare_2) &= \{(\bullet,+),(\bullet,-),(+,\bullet),(-,\bullet)\} \\
F_2(\msquare_2) &= \{(\bullet,\bullet)\}
\end{split}
\end{equation}

\subsubsection{Boundary of a cell}

Now let's describe what the boundary $(p-1)$-cells are of the $p$-cell $P_{(z_1 \cdots z_d)}$. Say $z_j = \bullet$ for exactly the $j \in \{i_1 \cdots i_p\}$. Then, the boundary $\partial P_{(z_1 \cdots z_d)}$ consists of the cells gotten by replacing exactly one $z_j = \bullet$ with either choice of $\tilde{z}_1 = +,-$. In particular,

\begin{equation}
\partial P_{(z_1 \cdots z_d)} = \{P_{(z_1 \cdots z_{j-1}, \tilde{z}_j, z_{j+1} \cdots z_d)} | j \in \{i_1 \cdots i_p\},  \tilde{z}_j \in \{+,-\} \}
\end{equation}

For example, we'd have that (using the notation $(z_1 \cdots z_d) \leftrightarrow P_{(z_1 \cdots z_d)}$)

\begin{equation}
\partial (\bullet,\bullet,+) = \{(\bullet,+,+),(\bullet,-,+),(+,\bullet,+),(-,\bullet,+)\}
\end{equation}

\subsubsection{The Dual Cells}
Now, let's describe the Poincaré dual cells of the cube. The dual cells are in bijection with the original cells, with $p$-cells mapping to $(d-p)$-cells and vice versa. We'll denote the dual of $P_{(z_1 \cdots z_d)}$ as $P^\vee_{(z_1 \cdots z_d)}$. As a subset of $\R^d$, we'll have that
\begin{equation}
P^\vee_{(z_1 \cdots z_d)} := \{(x_1,\cdots,x_d) \in \R^d | x_i = 0 \text{ if } z_i = \bullet,  x_{\ihat} \in [0,1] \text{ if } z_{\ihat} = +,  x_{\ihat} \in [-1,0] \text{ if } z_{\ihat} = -\}
\end{equation}

We will also explicitly parameterize the dual cells as follows, using the notation that $\mathbf{e}_i$ is the $i$th unit vector, with components $(\mathbf{e}_i)_j = \delta_{ij}$. Again denoting $\{\ihat_1 \cdots \ihat_{d-p}\}$ as the coordinates, $j$, with $z_j = \pm 1$,
\begin{equation} \label{dualCellParam}
P^\vee_{(z_1 \cdots z_d)} = \msquare_d \cap \{z_{\ihat_1} t_{\ihat_1} \mathbf{e}_{\ihat_1}  + \cdots +  z_{\ihat_{d-p}} t_{\ihat_{d-p}} \mathbf{e}_{\ihat_{d-p}} | t_{\ihat} \ge 0 \}
\end{equation}
Denote the set of $p$-cells as $F^\vee_p(\msquare_d)$. Then we'll have that, as sets, $F^\vee_p(\msquare_d) \equiv F_{d-p}(\msquare_d)$, and we can use the same labeling scheme to describe them. See Fig.~\ref{fig:cellDecomposition} for depictions in two and three dimensions. 

\begin{figure}[h!]
    \centering
    \begin{minipage}{\textwidth}
        \centering
        \includegraphics[width=\linewidth]{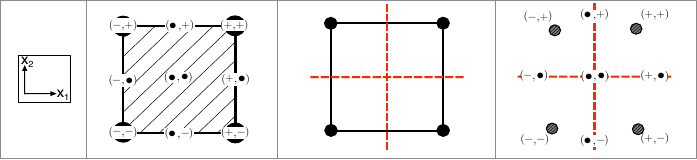}
    \end{minipage}
    \begin{minipage}{\textwidth}
        \centering
        \includegraphics[width=\linewidth]{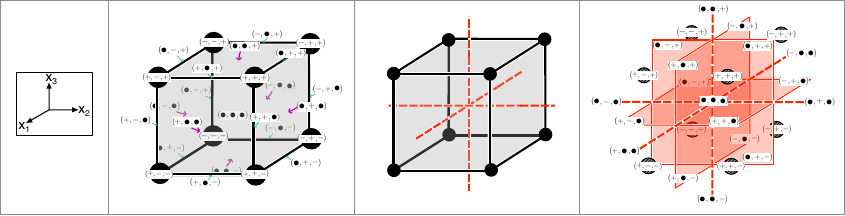}
    \end{minipage}
    \caption{The decomposition and labelings of the cells in a unit hypercube and its dual. The left panels are the original lattice with labels. The middle panels give the lattice (in black) and dual lattice (in red) superimposed on each other. The right panels give the labelings of the dual cells. (Top) Two dimensions. (Bottom) Three dimensions.}
    \label{fig:cellDecomposition}
\end{figure}

\subsection{Cochains}
We'll define a cochain $\alpha \in C^*(\msquare_d,\Z_2) = C^0(\msquare_d,\Z_2) \oplus \cdots \oplus C^d(\msquare_d,\Z_2)$ as the set of all functions from $F_*(\msquare_d) = F_0(\msquare_d) \sqcup \cdots \sqcup F_d(\msquare_d)$ to $\Z_2 = \{0,1\}$. And we'll say that $\alpha \in C^p(\msquare_d,\Z_2)$ for such functions that are only nonzero on $F_p(\msquare_d)$.
For a $p$-cell $P_{(z_1 \cdots z_d)}$ denoted as $\msquare_p$, we define the indicator $p$-cochain $\bmsquare_p$ with value $1$ on the cell $\msquare_p$ and 0 elsewhere.

Now, we define the coboundary operator, $\delta: C^*(\msquare_d) \to C^{*+1}(\msquare_d)$. Just like in simplicial cohomology, the value of $\delta \alpha$ on some $p$-cell $P_{(z_1 \cdots z_d)}$ is the sum of the values of $\alpha$ on the boundary $\partial P_{(z_1 \cdots z_d)}$, i.e.

\begin{equation}
(\delta \alpha)(z_1 \cdots z_d) = \sum_{(\tilde{z}_1 \cdots \tilde{z}_d) \in \partial(z_1 \cdots z_d)} \alpha(\tilde{z}_1 \cdots \tilde{z}_d).
\end{equation}
For example,
\begin{equation}
(\delta \alpha)(\bullet,\bullet,+) = \alpha(\bullet,+,+) + \alpha(\bullet,-,+) + \alpha(+,\bullet,+) + \alpha(-,\bullet,+).
\end{equation}

\section{$\cup_m$ products over $\Z_2$} \label{sec:cupM_OverZ2}

Let's describe the $\cup_k$ products. As described in \cite{T18,T20}, the $\cup_k$ product on cochains can be thought of as a `thickened intersection product' of cells on the dual cellulation. In particular, $\alpha \cup_k \beta$ is dual to $\lim_{\epsilon \to 0}\alpha^\vee \cap \beta^\vee_{\text{thickened,shifted}}$, where $\alpha^\vee,\beta^\vee$ are the duals of $\alpha,\beta$ respectively, $\beta^\vee_{\text{thickened,shifted}}$ is a thickened and shifted version of $\beta$, and $\epsilon$ is some scaling factor for the shifting that indicates we consider this thickened intersection in the limit that the shifting goes to zero.

In~\cite{T18,T20}, this characterization of the $\cup_k$ products was on a triangulation equipped with branching structure, where the branching structure essentially defined the vector fields. In turn, it's quite subtle to adequetely define the vector fields on a triangulation because triangulations and the associated theory of PL manifolds is subtle \footnote{In~\cite{T18}, this thickening and shifting prescription was \textit{conjectured} to reproduce Steenrod's $\cup_k$ formulas using the Halperin-Toledo vector fields~\cite{HalperinToledo}. In ~\cite{T20}, the thickening and shifting prescription was shown using a different set of vector fields defined on the interior of each simplex; explicit schemes to smooth the vector fields were not shown rigorously although it's expected that it can be done in the scope of PL manifold theory.}.
For us in the case of the hypercubic $\Z^d$ lattice embedded in $\R^d$, things are easier because we can simply choose a constant frame of vector fields to thicken and shift along. Similar to the case of triangulations in \cite{T20}, we choose vector fields that form a Vandermonde matrix. 

In the main text, we'll only describe the definitions and some key properties of our $\cup_m$ products. Although, we'll relegate many detailed calculations to the appendices. 

\subsection{Definitions and Examples}
As described above, our definitions of $\alpha \cup \beta$ and $\alpha \cup_m \beta$ were constructed based on what pairs of dual cells intersect each other upon thickening and shifting the dual cells of $\beta$, in the limit as the shifting goes to zero. In particular, the $\cup_m$ product will involve thickening $\beta$ along $m$ vector fields. By dimension-counting, if $\alpha$ a $p$-cochain and $\beta$ a $q$-cochain, then $\alpha^\vee,\beta^\vee$ consist of $(d-p)$-cells,$(d-q)$-cells respectively, and so $\alpha^\vee \cap \beta^\vee_{\text{thickened,shifted}}$ will consist of $(d-p-q+m)$-dimensional submanifolds so we'd want $\alpha \cup_m \beta$ to be a $(p+q-m)$-cochain.
With this, there's actually a subtlety that in the limit as the shifting goes to zero; in particular some of the intersecting cells degenerate to lower-dimensional cells in the limit. We do not consider these degenerate cells and instead define the intersection product in terms of those cells that survive as \textit{full $(d-p-q+m)$-dimensional cells} in the limit. The details of how to derive these formulas is given in Appendix~\ref{app:defininingCupMProducts}.

Consider a $d$-cell $(\cdots \bullet_1 \cdots \bullet_2 \cdots {\dots} \cdots \bullet_d \cdots)$, which consists of $d$ coordinates labeled $\bullet$ and all the rest $\cdots$ are in $\{+,-\}$.
Then we'll have 
\begin{equation*}
    (\alpha \cup_m \beta)(\cdots \bullet_1 \cdots {\dots} \cdots \bullet_d \cdots) = \sum_{(\text{cell-1, cell-2}) \in \mathrm{Int}_m(\cdots \bullet_1 \cdots \bullet_2 \cdots {\dots} \cdots \bullet_d \cdots) } \alpha(\text{cell-1}) \beta(\text{cell-2})
\end{equation*}
which is a a sum over all pairs $(\text{cell-1, cell-2})$ in the set $\mathrm{Int}_m(\cdots \bullet_1 \cdots \bullet_2 \cdots {\dots} \cdots \bullet_d \cdots)$ of pairs of cells whose duals survive as full-dimensional as described above.

Soon, we'll write out the explicit pairs that appear in $\mathrm{Int}(\bullet_1 {\dots} \bullet_d)$ when all coordinates are `$\bullet$'. This is actually all the information we need. In particular, all surviving pairs $(\text{cell-1, cell-2}) \in \mathrm{Int}_m(\cdots \bullet_1 \cdots \bullet_2 \cdots {\dots} \cdots \bullet_d \cdots)$ must have all of the $\cdots = \pm$ coordinates of $(\cdots \bullet_1 \cdots \bullet_2 \cdots {\dots} \cdots \bullet_d \cdots)$ shared by both of `$\text{cell-1}$' and `$\text{cell-2}$'. And, the $\bullet_1 \cdots \bullet_d$ coordinates that are possibly replaced by $\pm$ in $\text{cell-1}$, $\text{cell-2}$ will occur in the same way as they would've been in $\mathrm{Int}_m(\bullet_1 \cdots \bullet_d)$. For example, we'd have:
\begin{equation*}
    (\alpha \cup_m \beta)(\bullet, -, \bullet, \bullet , +) = \sum_{((z_1,z_2,z_3),(z'_1,z'_2,z'_3)) \in \mathrm{Int}_m(\bullet,\bullet,\bullet)} \alpha(z_1,-,z_2,z_3,+)\beta(z'_1,-,z'_2,z'_3,+)
\end{equation*}
Our descriptions of $\mathrm{Int}_m(\bullet_1 \cdots \bullet_d)$ will involve pairs of cells of dimensions $p,q$ with $p+q-m = d$. As such, these pairs correspond to the full product $C^{*_1}(\msquare_d) \times C^{*_2}(\msquare_d) \to C^{*_1 + *_2 - m}(\msquare_d)$ on the graded cochain ring on a single $\msquare_d$. To get a formula for a specific product $C^{p}(\msquare_d) \times C^{q}(\msquare_d) \to C^{p + q - m}(\msquare_d)$, one can restrict to pairs in $\mathrm{Int}_m(\bullet_1 \cdots \bullet_d)$ where $\text{cell-1}$,$\text{cell-2}$ have dimensions $p,q$ respectively.

Now we describe $\mathrm{Int}_m(\bullet_1 \cdots \bullet_d)$. In general, this set will consist of pairs $((z_1 \cdots z_d),(z'_1,\cdots,z'_d))$ such that for $m$ coordinates $1 \le i_1 < \cdots < i_m \le d$ both the $z,z'$ entries are $\bullet$; i.e. $z_{i_1} = z'_{i_1} = \cdots = z_{i_m} = z'_{i_m} = \bullet$. Given these $i_1 < \cdots < i_m$, we'll have that $(z_j,z'_j)$ for $j \not\in \{i_1, \cdots, i_m\}$ depend on the position of $j$ relative to the $i_1 < \cdots < i_m$. First, defining $i_0 := 0$ and $i_{d+1} = d+1$, we have that there is a unique $\ell(j) \in \{0, \cdots , d\}$ for which $i_{\ell(j)} < j < i_{\ell(j)+1}$. Given this, we'll have that $(z_j,z'_j) \in \{((-1)^{\ell(j)} + , \bullet), (\bullet, (-1)^{\ell(j)}-)\}$ where we mean ``$(-1)\cdot + = -$'' and ``$(-1) \cdot - = +$''. In short, the definition is:
\begin{defn}
    \begin{equation}
    \begin{split}
        \mathrm{Int}_m(\bullet_1 \cdots \bullet_d) = \bigcup\limits_{\{1 \le i_1 < \cdots i_m \le d\}} \bigg\{ ((z_1 \cdots z_d),(z'_1 \cdots z'_d)) \bigg\vert & z_{i_1} = z'_{i_1} = \cdots = z_{i_m} = z'_{i_m} = \bullet, \text{ and } \\
        & \text{ each } (z_j,z'_j) \in \{((-1)^{\ell(j)} + , \bullet), (\bullet, (-1)^{\ell(j)}-)\} \bigg\},
    \end{split}
    \end{equation}
    and 
    \begin{equation}
        (\alpha \cup_m \beta)(\bullet_1 \cdots \bullet_d) = \sum_{((z_1,\cdots,z_d),(z'_1,\cdots,z'_d)) \in \mathrm{Int}_m(\bullet_1 \cdots \bullet_d) } \alpha(z_1,\cdots,z_d) \beta(z'_1,\cdots,z'_d)
    \end{equation}
\end{defn}
In general, $(\alpha \cup_0 \beta)(z_1 \cdots z_d)$ will have $2^k$ terms, 
where $k$ is the number of elements in the set  $S \equiv \{i | z_i = \bullet \}$. For each $i \in S$, there are two choices for the arguments in $\alpha$ and $\beta$: $\alpha(\cdots, \bullet_i, \cdots) \beta (\cdots, -_i, \cdots)$ or $\alpha(\cdots, +_i, \cdots) \beta (\cdots, \bullet_i, \cdots)$
Similarly, $(\alpha \cup_m \beta)(\bullet_1 \cdots \bullet_d)$ will have ${d \choose m} \cdot 2^{d-m}$ terms which correspond to those choices $\{i_1 < \cdots < i_m\}$ and the two choices of $(z_j,z'_j)$ for $j \not\in \{i_1 \cdots i_m\}$. 
Although, only some of these terms appear in the restriction to each $C^p \times C^q \to C^{p+q-m}$. 

Now we spell out a few examples, starting with $\cup_0$. First in $d=3$, we'll have that: 
\begin{equation*}
\begin{split}
(\alpha \cup_0 \beta)(\bullet, \bullet, \bullet) = & \alpha(\bullet,\bullet,\bullet)\beta(-,-,-) + \alpha(\bullet,\bullet,+)\beta(-,-,\bullet) + \alpha(\bullet,+,\bullet)\beta(-,\bullet,-) + \alpha(+,\bullet,\bullet)\beta(\bullet,-,-) \\
+ & \alpha(\bullet,+,+)\beta(-,\bullet,\bullet) + \alpha(+,\bullet,+)\beta(\bullet,-,\bullet) + \alpha(+,+,\bullet)\beta(\bullet,\bullet,-) + \alpha(+,+,+)\beta(\bullet,\bullet,\bullet) \\
\end{split}
\end{equation*}
Note that every term $\alpha(z'_1,z'_2,z'_3)\beta(z''_1,z''_2,z''_3)$ above has exactly one of $z'_k$ or $z''_k$ equalling $\bullet$ for $k=1,2,3$. And, if $z'_k = \bullet$ then $z''_k = -$, and if $z''_k = \bullet$ then $z'_k = +$. Another example is:
\begin{equation*}
(\alpha \cup_0 \beta)(\bullet, \bullet, \pm) = \alpha(\bullet,\bullet,\pm)\beta(-,-,\pm) + \alpha(\bullet,+,\pm)\beta(-,\bullet,\pm) + \alpha(+,\bullet,\pm)\beta(\bullet,-,\pm) + \alpha(+,+,\pm)\beta(\bullet,\bullet,\pm)
\end{equation*}
where $\pm$ refers to either choice of $+$ or $-$. Note for any term $\alpha(z'_1,z'_2,z'_3)\beta(z''_1,z''_2,z''_3)$, we have that $z'_3 = z''_3 = \pm$, matching with the third term of the argument of $(\alpha \cup_0 \beta)(z_1,z_2,z_3)$. And since $z_1 = z_2 = \bullet$, the same pattern as the previous example works. That exactly one of $z'_k$ or $z''_k$ is $\bullet$ for $k=1,2$. And if $z'_k = \bullet$ then $z''_k = -$, and if $z''_k = \bullet$ then $z'_k = -$. 

As an example of the $\cup_1$ product, in $d=3$, we'll have:
\begin{equation*}
\begin{split}
(\alpha \cup_1 \beta)(\bullet, \bullet, \bullet) &
=  \alpha(\bullet,\bullet,\bullet)\beta(-,-,\bullet) + \alpha(\bullet,+,\bullet)\beta(-,\bullet,\bullet) + \alpha(+,\bullet,\bullet)\beta(\bullet,-,\bullet) + \alpha(+,+,\bullet)\beta(\bullet,\bullet,\bullet) \\
&+ \alpha(\bullet,\bullet,\bullet)\beta(-,\bullet,+) + \alpha(\bullet,\bullet,-)\beta(-,\bullet,\bullet) + \alpha(+,\bullet,\bullet)\beta(\bullet,\bullet,+) + \alpha(+,\bullet,-)\beta(\bullet,\bullet,\bullet) \\
&+ \alpha(\bullet,\bullet,\bullet)\beta(\bullet,+,+) + \alpha(\bullet,\bullet,-)\beta(\bullet,+,\bullet) + \alpha(\bullet,-,\bullet)\beta(\bullet,\bullet,+) + \alpha(\bullet,-,-)\beta(\bullet,\bullet,\bullet) 
\end{split}
\end{equation*}

See Fig.~\ref{fig:cupProductImages} for the geometric depiction of some of these examples.

\begin{figure}[h!]
  \centering
  \includegraphics[width=\linewidth]{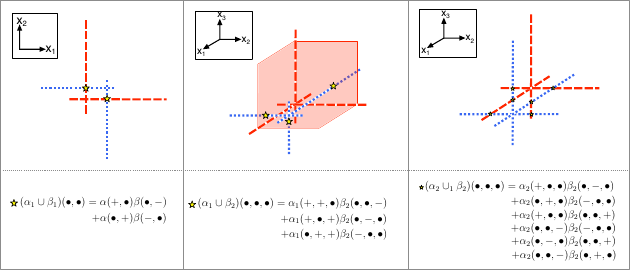}
  \caption[]{Geometric depictions of (Left) the $\cup_0$ product between 1-cochains in two dimensions, (Middle) the $\cup_0$ product between a 2-cochain and a 1-cochain in three dimensions, and (Right) the $\cup_1$ product of 2-cochains in three dimensions.} 
  \label{fig:cupProductImages}
\end{figure}

\subsection{Diagrammatics}

Now, we'll introduce a convenient way to diagrammatically represent the $\cup_m$ products. These diagrammatics will be helpful in establishing the identity $\delta (\alpha \cup_m \beta) = \delta\alpha \cup_m \beta + \alpha \cup_m \delta\beta + \alpha \cup_{m-1} \beta + \beta \cup_{m-1} \alpha$. \footnote{These diagrams are inspired by those presented in a lecture of John Morgan~\cite{morganHigherCupLecture} on simplicial higher cup products.}

First, we'll only need to care about the case of $(\alpha \cup_0 \beta)(\bullet \cdots \bullet)$ where all the arguments are all $\bullet$. This is because if any other arguments are $+$ or $-$, then the corresponding arguments for $\alpha$ and $\beta$ are fixed and equal to the $+$ or $-$. The diagrammatics are summarized in Fig.~\ref{fig:cup_mDiagrams}. 

\begin{figure}[h!]
    \centering
    \includegraphics[width=\linewidth]{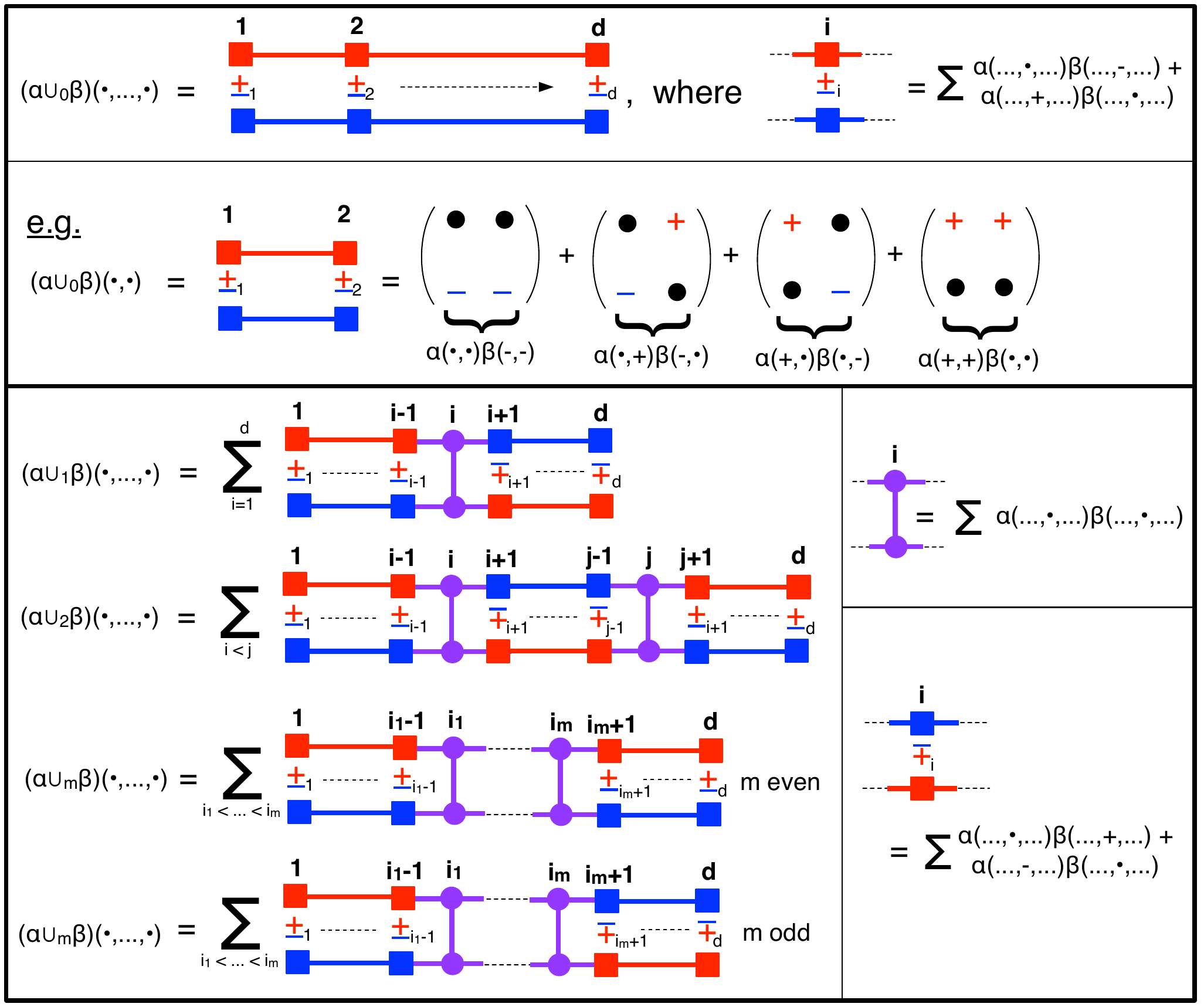}
    \caption{Diagrammatic representations of: (top) $\alpha \cup_0 \beta$ (Bottom) $\alpha \cup_m \beta$}
    \label{fig:cup_mDiagrams}
\end{figure}

In the Appendix~\ref{app:proofOfHigherCupId}, we show how to think about the diagrammatics with respect to the coboundary operators $\delta$ and show the fundamental cup and higher cup product identities:
\begin{prop} \label{prop:cup0_coboundary}
    $\delta(\alpha \cup \beta) = \delta \alpha \cup \beta + \alpha \cup \delta \beta$
\end{prop}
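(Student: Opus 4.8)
The plan is to prove the identity cell-by-cell, reducing by bilinearity to homogeneous $\alpha \in C^p(\msquare_d,\Z_2)$ and $\beta \in C^q(\msquare_d,\Z_2)$, so that both sides are $(p+q+1)$-cochains. I would first exploit the localization property already noted for $\cup_0$: any coordinate of the evaluation cell that is fixed to $+$ or $-$ is inherited identically by every factor in every term, since the coboundary only ever turns a $\bullet$ into $\pm$ and never disturbs an existing sign. Hence all such frozen coordinates pass inertly through $\delta(\alpha \cup \beta)$, through $\delta\alpha \cup \beta$, and through $\alpha \cup \delta\beta$ alike, and it suffices to verify the identity on the all-$\bullet$ top cell $(\bullet_1 \cdots \bullet_d)$ with $d = p+q+1$.

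Next I would expand each of the three quantities into a sum of elementary products $\alpha(\cdots)\beta(\cdots)$ and record, for each term, the pair $(z_j, z'_j)$ of arguments fed to $\alpha$ and $\beta$ at every coordinate $j$. Writing $L = (\bullet, -)$ and $R = (+, \bullet)$ for the two configurations that occur in the plain $\cup_0$, I claim every term of all three quantities has all coordinates but one in configuration $L$ (exactly $p$ of them) or $R$ (exactly $q$ of them), with a single distinguished coordinate carrying the effect of the coboundary. Tracking the definitions of $\delta$ and $\cup_0$ shows that the distinguished coordinate is $(+,+)$ or $(-,-)$ for $\delta(\alpha \cup \beta)$; is $(+,-)$ or $(-,-)$ for $\delta\alpha \cup \beta$ (there a spanning $\bullet$ of $\alpha$ is turned into $\pm$ while $\beta$ retains its $-$); and is $(+,+)$ or $(+,-)$ for $\alpha \cup \delta\beta$ (there a spanning $\bullet$ of $\beta$ is turned into $\pm$ while $\alpha$ retains its $+$).

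The identity then follows from a term-by-term matching over $\Z_2$. The $(+,-)$ terms of $\delta\alpha \cup \beta$ and of $\alpha \cup \delta\beta$ are literally the same elementary products, carrying the same $L/R$ pattern on the remaining coordinates and the same arguments $\alpha(\cdots + \cdots)\beta(\cdots - \cdots)$, so they cancel in the sum $\delta\alpha \cup \beta + \alpha \cup \delta\beta$. What remains is precisely the $(-,-)$ terms of $\delta\alpha \cup \beta$ together with the $(+,+)$ terms of $\alpha \cup \delta\beta$, and these coincide term-for-term with the $(-,-)$ and $(+,+)$ terms, respectively, of $\delta(\alpha \cup \beta)$. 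This is exactly the content the diagrammatics is designed to encode: each cup-product term is a staircase recording the $L/R$ pattern, and the coboundary inserts a single step whose two endpoints produce the three overlapping sign-configurations above.

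I expect the main obstacle to be the careful bookkeeping in the second step rather than any conceptual difficulty. One must verify that the localization reduction is valid simultaneously for all three quantities, and that after expanding $\delta\alpha$ (resp.\ $\delta\beta$) the partner cochain $\beta$ (resp.\ $\alpha$) really does retain the value dictated by its configuration at the distinguished coordinate, so that the three families of distinguished-coordinate pairs interlock exactly as claimed. Once this alignment is pinned down, the cancellation is forced by the absence of signs over $\Z_2$.
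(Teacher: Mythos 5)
Your argument is correct and is essentially the paper's own proof: the diagrammatic argument in Appendix~\ref{app:proofOfHigherCupId} records exactly your classification of the distinguished coordinate --- $(+,+)$ or $(-,-)$ for $\delta(\alpha\cup\beta)$, $(\pm,-)$ for $\delta\alpha\cup\beta$, and $(+,\pm)$ for $\alpha\cup\delta\beta$ --- with the $(+,-)$ terms cancelling mod $2$ and the remaining terms matching term-for-term. The only difference is presentational: the paper encodes the same bookkeeping in Fig.~\ref{cup0BoundaryProof} rather than in prose.
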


\begin{prop} \label{prop:cupM_coboundary}
    $\delta(\alpha \cup_{m} \beta) = \delta \alpha \cup_m \beta + \alpha \cup_m \delta \beta + \alpha \cup_{m-1} \beta + \beta \cup_{m-1} \alpha$.
\end{prop}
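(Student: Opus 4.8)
The plan is to verify the identity cell-by-cell, reducing everything to a single hypercube. Restricted to the piece $C^p \times C^q \to C^{p+q-m+1}$, both sides are $(p+q-m+1)$-cochains, so it suffices to evaluate them on an arbitrary $(p+q-m+1)$-cell. By the reduction principle already used to state the definition of $\cup_m$ — namely that the $\pm$ coordinates of the evaluation cell are simply inherited by both cell-1 and cell-2 — I can take the evaluation cell to be the all-bullet top cell $\msquare_{p+q-m+1}$ and carry the inert $\pm$ coordinates along at the end. Since the coboundary of an all-bullet cell is a sum over its codimension-$1$ faces, each obtained by promoting one $\bullet$ to a $+$ or a $-$, the entire computation stays local to one hypercube and is purely combinatorial. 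I would first dispatch the base case $m=0$ (Proposition~\ref{prop:cup0_coboundary}) by this method with $\cup_{-1}\equiv 0$, and then treat general $m$.

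The engine of the proof is the diagrammatic encoding of Fig.~\ref{fig:cup_mDiagrams}. Each term of $(\alpha\cup_m\beta)(\bullet_1\cdots\bullet_d)$ is recorded by its choice of the $m$ shared-bullet positions $i_1<\cdots<i_m$ together with, for every other coordinate $j$, the binary datum $(z_j,z'_j)\in\{((-1)^{\ell(j)}+,\bullet),(\bullet,(-1)^{\ell(j)}-)\}$ specifying whether $\alpha$ or $\beta$ receives the $\bullet$ in that slot. I would expand $\delta(\alpha\cup_m\beta)(\msquare_D)$ as the sum over boundary faces of $(\alpha\cup_m\beta)$ evaluated on each face, then re-expand each of those using the definition, producing a large sum indexed by (choice of promoted coordinate) $\times$ ($\mathrm{Int}_m$ data on the remaining coordinates). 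Symmetrically, I would expand the first two terms on the right-hand side, $\delta\alpha\cup_m\beta$ and $\alpha\cup_m\delta\beta$, where now the promoted coordinate lives inside an $\alpha$-block or a $\beta$-block respectively.

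The heart of the argument is a term-matching between these expansions. I expect that the generic terms — those in which the promoted coordinate lies strictly inside one of the alternating blocks — match in bijection with the terms of $\delta\alpha\cup_m\beta + \alpha\cup_m\delta\beta$ and therefore account for them exactly. The terms that are left over are the \emph{degenerate} ones, where the newly promoted coordinate sits adjacent to a shared bullet $i_k$ and merges with it; such a collision drops the number of shared bullets from $m$ to $m-1$. Tracking the two orientations in which a coordinate can collide with its neighbor — from the $\alpha$-side versus the $\beta$-side of the alternating pattern — should produce precisely the two surviving contributions $\alpha\cup_{m-1}\beta$ and $\beta\cup_{m-1}\alpha$, with the swap of $\alpha$ and $\beta$ in the last term reflecting the reversal of the alternating bullet/non-bullet roles across a collision. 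This mirrors the breakpoint-collision mechanism in the simplicial Steenrod proof, transported to the hypercubic bullet notation.

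The main obstacle will be the bookkeeping of exactly this degenerate set: one must check that every generic term cancels or matches cleanly (over $\Z_2$, so only the parity of each surviving cell's multiplicity matters), that no unwanted terms survive, and that the collision terms assemble with the correct endpoints and the correct $\alpha\leftrightarrow\beta$ ordering to yield $\alpha\cup_{m-1}\beta + \beta\cup_{m-1}\alpha$ rather than, say, two copies of the same ordering. Getting the alternating $(-1)^{\ell(j)}$ sign data to line up consistently on both sides of a collision — so that a face promoted to $+$ on one side genuinely coincides with the cell produced by the neighboring $\mathrm{Int}_{m-1}$ configuration — is the delicate step, and it is where the diagrammatic picture, in which $\delta$ acts as a local ``extend-the-diagram'' move, does the essential organizing work.
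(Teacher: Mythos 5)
Your strategy coincides with the paper's (Appendix~\ref{app:proofOfHigherCupId}): reduce to a single all-bullet cell, expand the three $\delta$-terms diagrammatically, cancel everything that retains $m$ shared bullets, and identify the residue with $\alpha\cup_{m-1}\beta+\beta\cup_{m-1}\alpha$. The first half of your plan is sound: every term of $\delta(\alpha\cup_m\beta)$ keeps all $m$ shared bullets and is matched by the ``diagonal'' terms of $\delta\alpha\cup_m\beta+\alpha\cup_m\delta\beta$ in which $\delta$ promotes a \emph{non-shared} bullet to the sign already fixed in the other factor, while the two ``off-diagonal'' promotions cancel against each other mod~2. Be aware, though, that the residual terms therefore sit inside $\delta\alpha\cup_m\beta+\alpha\cup_m\delta\beta$ --- they arise precisely when $\delta$ promotes one of the $m$ shared bullets in one factor only --- and not inside the expansion of $\delta(\alpha\cup_m\beta)$ as your wording suggests; the relevant event is that the promoted coordinate \emph{is} a shared bullet, not that it is adjacent to one.

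The genuine gap is the final step. You propose that the residual ``collision'' terms match up directly, in two orientations, with the terms of $\alpha\cup_{m-1}\beta+\beta\cup_{m-1}\alpha$. They cannot: counting choices of $\{i_1<\cdots<i_m\}$, of which shared bullet is promoted, of the sign, and of the remaining assignments gives $4m\binom{d}{m}2^{d-m}$ residual terms against $\binom{d}{m-1}2^{d-m+2}$ terms in the target, a mismatch by a factor of $d-m+1$, so no bijection exists and the overwhelming majority of the residual terms must cancel \emph{among themselves}. The paper achieves this by an iterated telescoping argument (Figs.~\ref{cupMBoundaryProof_Part1}--\ref{cupMBoundaryProof_Part4}, with the $\cup_1$ and $\cup_2$ warm-ups in Figs.~\ref{cup1BoundaryProof} and \ref{cup2BoundaryProof} already needing one and two rounds): one repeatedly performs a partial sum over one of the $i_k$, each round telescoping away interior terms and leaving only boundary terms of the summation range, and after $m$ rounds the survivors collapse to exactly $\alpha\cup_{m-1}\beta+\beta\cup_{m-1}\alpha$. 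This cascade is the actual content of the proof; without supplying it, your plan stalls exactly at the step you flag as ``delicate.''
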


\section{Cochain operations over $\Z$} \label{sec:cochainsOverZ}
Here, we describe how to extend the definition of the boundary operator $\delta$ and the $\cup_0$ product to cochains over $\Z$.

\subsection{$\delta$ over $\Z$}
First, we define the coboundary operator $\delta$ acting on a $(d-1)$-cochain $\alpha$ as
\begin{equation}
(\delta \alpha)(\bullet_1, \cdots, \bullet_d) = \sum_{\substack{\ell=1, \\ \pm \in \{+,-\}}}^d \mp(-1)^\ell \alpha(\bullet, \cdots \underbrace{,\pm,}_{\substack{\ell^\text{th} \\ \text{coord.}}} \cdots, \bullet)
\end{equation}
where $\mp = - \pm$. In general for an $n$-cochain, $\delta\alpha$ can be defined similarly to the $\Z_2$ on $(\delta \alpha)(\cdots,\bullet_{i_1},\cdots,\bullet_{i_2},\cdots,\bullet_{i_n},\cdots)$ where there are $n$ coordinates $\{\bullet_{i_k}\}_{k=1}^n$ and $(d-n)$ coordinates $\pm$. Then we can define $\delta$ by keeping all the $\pm$ coordinates fixed and only altering the $\bullet$ coordinates one at a time 
\begin{equation} \label{coboundaryDefinitionOverZ}
(\delta \alpha)(\cdots,\bullet_{i_1},\cdots,\bullet_{i_2},\cdots,\bullet_{i_n},\cdots) = \sum_{\substack{\ell=1, \\ \pm \in \{+,-\}}}^n 
\mp (-1)^\ell \alpha(\cdots \underbrace{,\pm,}_{\substack{{i_\ell}^\text{th} \\ \text{coord.}}}\cdots)
\end{equation}
and the sign $\mp(-1)^\ell$ in front of each term only depends on $\pm$ and the coordinate $\ell$ that defines the position of $\bullet_{i_\ell}$ with respect to all the other $\bullet$ positions. Note that the equation
\begin{equation}
\delta \delta \alpha = 0
\end{equation}
will follow from essentially same reasoning as in the simplicial case. 

\subsection{$\cup_0 := \cup$ over $\Z$}
Now we'll define a cup-product operation $\cup_0 := \cup$ over $\Z$. 
It satisfies the following cochain-level Leibniz-rule, with respect to the $n$-cochain $\alpha^n$ and $m$-cochain $\beta^m$
\begin{equation*}
\delta(\alpha^n \cup \beta^m) = (\delta \alpha^n) \cup \beta^m + (-1)^n \alpha^n \cup (\delta \beta^m).
\end{equation*}

Now, we restrict our attention to case where $n+m=d$ to define the $\cup$ product. The other cases where $n+m < d$ can again be specified by keeping the signs fixed away from $(n+m)$ $\bullet$-coordinates in $(\alpha \cup \beta)(\cdots,\bullet_{i_1},\cdots,\bullet_{i_2},\cdots,\bullet_{i_n},\cdots)$ as in all previous discussion. Now define
\begin{equation}
(\alpha^n \cup \beta^{d-n})(\bullet_1, \cdots, \bullet_d) = \sum_{\substack{(z_1 \cdots z_d), (z'_1 \cdots z'_d) \\ z_i = + \text{ and } z'_i = \bullet \text{, OR}\\ z_i = \bullet \text{ and } z'_i = -}} (-1)^s \alpha(z_1 \cdots z_d) \beta(z'_1 \cdots z'_d)
\end{equation}
where the sign $(-1)^s$ will be defined in a moment. First note that the (mod 2) reduction of this exactly agrees with our formulas and diagrammatics over $\Z_2$. To define $(-1)^s$ on the term associated to $(z_1 \cdots z_d),(z'_1 \cdots z'_d)$, we first define $a_1 \cdots a_{d-n}$ as the $z_i$ that are labeled $+$ and define $b_1 \cdots b_n$ as the $z'_i$ that are labeled $-$. Then, we can define $(-1)^s$ as 
\begin{equation}
(-1)^s = \underbrace{(-1)^{d-n}}_{\substack{-1 \text{ for each} \\ + \text{ label}}} \text{sgn}
\begin{pmatrix}
1   & \cdots &  \cdots & \cdots & \cdots & d \\
b_1 & \to    & b_n     & a_1    & \to    & a_{d-n}
\end{pmatrix}
\cdot (-1)^{m}
\end{equation}
with respect to the sign of the permutation taking $(1 \to d)$ to $(b_1 \cdots b_n ; a_1 \cdots a_{d-n})$. 

We note that this sign is closely related to a \textit{signed intersection} between the cell dual to $\beta$ and the cells dual to $\alpha$. In particular, the dual cells corresponding to $(z_1 \cdots z_d),(z'_1 \cdots z'_d)$ will extend out in the $b_1, \dots, b_n$ directions for $\beta$'s dual and in the $a_1,\dots,a_{d-n}$ directions for $\alpha$. As such, the above sign can roughly be thought of as representing a ``$\beta^\text{dual} \cap \alpha^\text{dual}$" with signs in analog to how the $\cup$ product in cohomology is dual to signed intersections of the dual submanifolds. 

We state the Leibniz rule.
\begin{prop} \label{prop:cup0_coboundary_overZ}
$\delta(\alpha^n \cup \beta^m) = (\delta \alpha^n) \cup \beta^m + (-1)^n \alpha^n \cup (\delta \beta^m)$ where $\alpha^n$ and $\beta^m$ are $n$- and $m$-cochains respectively.
\end{prop}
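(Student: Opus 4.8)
The plan is to evaluate both sides on a single top-dimensional cell and match coefficients term by term. By the localization rule used throughout the paper (fix the $\pm$ coordinates away from the $\bullet$'s), it suffices to verify the identity on the cell $(\bullet_1 \cdots \bullet_D)$ with $D = n+m+1$, since both sides are $(n{+}m{+}1)$-cochains. I would then expand each side into a signed sum of \emph{atoms} $\alpha(z)\,\beta(z')$, where $z$ is an $n$-cell and $z'$ an $m$-cell of $\msquare_D$: on the left via the $\Z$-coboundary formula \eqref{coboundaryDefinitionOverZ} followed by the $\cup'$ formula on the resulting $(D{-}1)$-cells, and on the right by first applying $\cup'$ and then expanding the inner $\delta\alpha$ (an $(n{+}1)$-cochain) or $\delta\beta$ (an $(m{+}1)$-cochain).

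Next I would organize the atoms combinatorially. A short dimension count shows that any atom $\alpha(z)\beta(z')$ appearing in a $\cup_0$-type product has exactly one coordinate $\ell$ at which both $z$ and $z'$ carry a sign, all other coordinates pairing a $\bullet$ with a sign. This splits the atoms into an \emph{equal} type, $(z_\ell,z'_\ell)\in\{(+,+),(-,-)\}$, and an \emph{unequal} type, $(z_\ell,z'_\ell)\in\{(+,-),(-,+)\}$. Tracking origins: the left side produces only equal-type atoms (the coboundary forces $z_\ell = z'_\ell$); the term $\delta\alpha\cup'\beta$ produces exactly those atoms with $z'_\ell=-$ (the activated coordinate lies in the block where $z'=-$), while $\alpha\cup'\delta\beta$ produces exactly those with $z_\ell=+$. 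Hence each equal-type atom appears once on the left and once on the right in precisely one of the two right-hand terms selected by the sign at $\ell$; each unequal-type atom $(+,-)$ appears once in $\delta\alpha\cup'\beta$ and once in $\alpha\cup'\delta\beta$ and must cancel; and the atom $(-,+)$ appears nowhere. This skeleton is exactly the content of the mod-$2$ identity (Proposition~\ref{prop:cup0_coboundary}), which I would cite to guarantee the bijection of atoms, so that the only genuinely new content over $\Z$ is that the signs line up.

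The proof therefore reduces to three sign identities: two matchings (for the $(-,-)$ and $(+,+)$ equal-type atoms) and one cancellation (for the $(+,-)$ unequal-type atom). Each is a statement about how the permutation sign $\mathrm{sgn}(b_1\cdots b_n\,a_1\cdots a_m)$ defining $(-1)^s$ transforms when the activated coordinate $\ell$ is relocated between the $B$-block (indices with $z'=-$) and the $A$-block (indices with $z=+$). The key lemma is that moving $\ell$ across this partition changes the sign by $(-1)$ raised to the net displacement of $\ell$ in the ordered list, which must then be combined with the coboundary's own sign $\mp(-1)^{\mathrm{rank}(\ell)}$ and the prefactors $(-1)^{|A|}$ and the global $-(-1)^n$. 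For the cancellation case, applying the displacement lemma turns the two right-hand contributions into $-(-1)^{m+n+s}$ and $+(-1)^{m+n+s}$ times the same permutation sign, which indeed sum to zero; the two matching cases work analogously.

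I expect the displacement lemma and its interaction with the coboundary sign to be the main obstacle: the bookkeeping of ranks and block positions is delicate, and the global factor $-(-1)^n$ — which distinguishes this rule from the standard simplicial one — has to be exactly the factor that makes the equal-type matchings and the unequal-type cancellation hold simultaneously. A useful sanity check, and possible alternative route, is the geometric picture: $\cup'$ is dual to the signed intersection $\beta^\vee\cap\alpha^\vee$ and $\delta$ to the dual boundary, so the identity is the cochain-level avatar of $\partial(\beta^\vee\cap\alpha^\vee)=(\partial\beta^\vee)\cap\alpha^\vee\pm\beta^\vee\cap(\partial\alpha^\vee)$; this confirms the shape of the signs but still requires the same orientation bookkeeping to make rigorous.
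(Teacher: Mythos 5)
Your proposal is correct and follows essentially the same route as the paper's proof in Appendix~\ref{app:proofOfCupM_OverZ} (specialized to $k=0$): the paper likewise classifies each atom $\alpha(z)\beta(z')$ by the unique coordinate $i$ at which both entries are signs, observes that $\delta(\alpha\cup'\beta)$ contributes only the $(+,+)$ and $(-,-)$ types while $\delta\alpha\cup'\beta$ forces $z'_i=-$ and $\alpha\cup'\delta\beta$ forces $z_i=+$, and then verifies exactly your three sign identities (its $s_1,\dots,s_6$ relations) by tracking how the permutation $\mathrm{sgn}(b_1\cdots b_n\,a_1\cdots a_{d-n})$ changes when the activated index is moved between blocks, combined with the $(-1)^{\#\text{of}+}$ and coboundary signs. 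The only remaining work in your plan is the explicit sign bookkeeping you flag, which is precisely the computation the paper carries out.
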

The proof is given in Appendix~\ref{app:proofOfCupM_OverZ} together with an analogous one for the $\cup_k$ products, defined below.

\subsection{$\cup_k$ over $\Z$}
In general, a similar pattern works for the higher cup operations over $\Z$ which we'll call the $\cup_k$ products because of the minor differences in the formulas as compared to Steenrod's formulas. We give the definition for $\alpha^{n+k} \cup_k \beta^{d-n}$ analogous to our $\Z_2$ definitions
\begin{equation} \label{cupK_OverZ_Formula}
(\alpha^{n+k} \cup_k \beta^{d-n})(\bullet_1, \cdots, \bullet_d) := \sum_{1 =: i_0 \le i_1 < \cdots < i_k \le i_{k+1} := d} \sum_{\substack{(z_1 \cdots z_d), (z'_1 \cdots z'_d) \\ \text{each } z_{i_1} = z'_{i_1} = \cdots = z_{i_k} = z'_{i_k} = \bullet \\ z_j = (-1)^{\ell(j)}+ \text{ and } z'_j = \bullet \text{, OR}\\ z_j = \bullet \text{ and } z'_j = (-1)^{\ell(j)}- \\ \text{where } i_{\ell(j)} < j < i_{\ell(j)+1}}} (-1)^s \alpha(z_1 \cdots z_d) \beta(z'_1 \cdots z'_d)
\end{equation}
Here the sign $(-1)^s$ on each depends on $(i_1 \cdots i_k)$ and $(z_1 \cdots z_d),(z'_1 \cdots z'_d)$. Again we can define $b_1 < \cdots < b_{n}$ as the coordinates in $z'_1 \cdots z'_d$ that are labeled by $\pm$. And define $a_1 \cdots a_{d-n-k}$ as the coordinates in $z_1 \cdots z_d$ labeled as $\pm$. Then we'll set
\begin{equation}
(-1)^s = (-1)^{\text{\# of total + signs}} \text{sgn}
\begin{pmatrix}
1   & \cdots & \cdots & \cdots & \cdots & \cdots & \cdots & \cdots & d \\
i_1 & \to    & i_k    & b_1    & \to    & b_{n}  & a_1    & \to    & a_{d-n-k}
\end{pmatrix}
\cdot (-1)^{k+(d-n) + (n-k-1)k}
\end{equation}
where the ``$\text{\# of total + signs}$'' is the total number of $+$ coordinates among both $(z_1 \cdots z_d), (z'_1 \cdots z'_d)$. This can also be interpreted as a signed intersection of $\beta^\vee_\text{thickened,shifted} \cap \alpha^\vee$.
Then we'll have the cochain-level identity:
\begin{prop} \label{prop:cupM_coboundary_overZ}
Let $\alpha$ be an $(n-1+k)$-cochain and $\beta$ be an $(d-n)$-cochain. Then
\begin{equation}
\begin{split}
\delta(\alpha^{n-1+k} \cup_k \beta^{d-n}) = \delta\alpha \cup_k \beta + (-1)^{n-1+k} \alpha \cup_k \delta\beta + (-1)^{d-1} \alpha \cup_{k-1} \beta + (-1)^{(n-1+k)(d-n)-1+k+d}\beta \cup_{k-1} \alpha.
\end{split}
\end{equation}
\end{prop}
Note that defining $p=n-1+k$ and $q=d-n$, the above product is exactly the recursion relation $\delta(\alpha^{p} \cup_k \beta^{q}) = \delta\alpha\cup_k \beta + (-1)^{p} \alpha \cup_k \delta\beta + (-1)^{p+q-k} \alpha \cup_{k-1} \beta + (-1)^{pq+p+q}\beta \cup_{k-1} \alpha$ of Steenrod~\cite{S47}.
See Appendix~\ref{app:proofOfCupM_OverZ} for a proof.

\section{Fermions: Exact Bosonziation and Grassmann Integral} \label{sec:FermionsAndExactBosonization}
We will now apply the $\Z_2$ higher-cup formalism to express lattice models of fermions on a hypercubic lattice. 
We first `bosonize' fermionic operator algebras that are defined on a hypercubic lattice, generalizing the constructions of~\cite{chen2020BosonArbDimensions}. The paper~\cite{chen2020BosonArbDimensions} considered a triangulated $d$-manifold where there was one copy of fermion operators $\gamma_{\Delta_d}, \gamma'_{\Delta_d}$ for each $d$-simplex. It was shown that one can express all even-fermion-parity operators in that algebra in terms of an algebra of bosonic Pauli operators $X_{\Delta_{d-1}},Z_{\Delta_{d-1}}$ associated to each $(d-1)$-simplex of the triangulation in the presence of an additional gauge constraint. The bosonization map used the $\cup_m$ products on a triangulation. We'll show here that essentially the same map holds on a hypercubic lattice if one considers the hypercubic $\cup_m$ products defined here. 

After this, we will also extend the `Gu-Wen Grassmann Integral'~\cite{guWen2014Supercohomology} $\sigma(\alpha)$ to hypercubic lattices, which can be used in space-time path-integral constructions of fermionic phases of matter. We review two definitions of $\sigma(\alpha)$, via winding numbers on winding numbers on curves and via Grassmann variables.

In Sec.~\ref{sec:fermionicToricCode}, we will define a `fermionic toric code' model in arbitary dimensions on the hypercubic lattice, similar to~\cite{chen2020BosonArbDimensions}. This will use the exact bosonization map and also have a close relation to the Grassmann integral.

\subsection{Exact Bosonization}
Now, let's spell out the exact bosonization map on the hypercubic lattice $\Z^d$, which corresponds to maps between $(d+1)$-dimensional Hamiltonians. We'll be given a system of fermions generated by $\gamma_{\msquare_d}, \gamma'_{\msquare_d}$ living at the centers of the $d$-dimensional hypercubes (or equivalently on the vertices of the dual lattice). And on the bosonic side, we'll have Pauli matrices $X_{\msquare_{d-1}},Z_{\msquare_{d-1}}$ living on $(d-1)$-simplices. The duality is:
\begin{align}
    W_{\msquare_d} := \prod_{\msquare_{d-1} \subset \msquare_d} Z_{\msquare_{d-1}} 
    &\longleftrightarrow P_{\msquare_d} := -i \gamma_{\msquare_d} \gamma'_{\msquare_d}  \label{eq:exactbosonization1}\\
    {U}_{\msquare_{d-1}} := X_{\msquare_{d-1}} 
    \left( \prod_{\msquare'_{d-1}} Z_{\msquare_{d-1}}^{\int {\bmsquare}'_{d-1} \cup_{d-2} {\bmsquare}_{d-1}} \right)
    &\longleftrightarrow S_{\msquare_{d-1}} := i \gamma_{L(\msquare_{d-1})} \gamma'_{R(\msquare_{d-1})} \label{eq:exactbosonization2}\\
    G_{\msquare_{d-2}} := \underbrace{
    \prod_{\msquare_{d-1} \supset \msquare_{d-2}} X_{\msquare_{d-1}} 
    \left( \prod_{\msquare'_{d-1}} Z_{\msquare'_{d-1}}^{\int \delta\bmsquare_{d-2} \cup_{d-2} \bmsquare'_{d-1}} \right)}_{\text{Set equal to } 1 \text{ as gauge constraint}} = 1
    &\longleftrightarrow \underbrace{S_{\delta \bmsquare_{d-2}} \prod_{\msquare_d} P_{\msquare_d}^{\int \bmsquare_{d} \cup_{d-1} \delta\bmsquare_{d-2} }}_{\text{Equals } 1 \text{ as an identity}} = 1. \label{eq:exactbosonization3}\\
\end{align}
The first two Eqs.~(\ref{eq:exactbosonization1},\ref{eq:exactbosonization2}) above are identities between the operator algebras, with the $P_{\msquare_d}$ representing fermion parity inside $\msquare_d$ and $S_{\msquare_{d-1}}$ being a hopping operator across $\msquare_{d-1}$. The last Eq.~\eqref{eq:exactbosonization3} gives the required gauge constraint on the bosonic LHS while the fermionic RHS is an identity that always holds (that we'll show soon).
The rest of this section will demonstrate this bosonization, although we defer many details to~\cite{chen2020BosonArbDimensions} because the proofs are similar. We start by explaining some notation above, in particular what we mean by $L(\msquare_{d-1}),R(\msquare_{d-1})$ and $S_\lambda$ for a general cochain $\lambda$.

First, given a $(d-1)$-hypercube $\msquare_{d-1}$, there are two $d$-hypercubes $L(\msquare_{d-1}),R(\msquare_{d-1})$ that it's adjacent to. 
Supposing $\msquare_{d-1}$ is in the $\{1 \cdots \ihat \cdots d\}$ hyperplane, the two adjacent $\msquare_d$'s will be in the two $\pm x_{\ihat}$ directions relative to $\msquare_{d-1}$, which we'll call ${^+}(\msquare_{d-1}),{^-}(\msquare_{d-1})$ respectively. We'll define
\begin{equation} \label{eq:LR_assignment_dMinus1Cubes}
    \big( L(\msquare_{d-1}) , R(\msquare_{d-1}) \big) = 
    \begin{cases}
        \big( {^-}(\msquare_{d-1}),{^+}(\msquare_{d-1}) \big) \, \text{ if } i+d = 1 \text{ (mod 2)} \\
        \big( {^+}(\msquare_{d-1}),{^-}(\msquare_{d-1}) \big) \, \text{ if } i+d = 0 \text{ (mod 2)} \\
    \end{cases}
\end{equation}
which is depicted in Fig.~\ref{fig:majoranaLR_assignment}.

\begin{figure}[h!]
  \centering
  \includegraphics[width=0.8\linewidth]{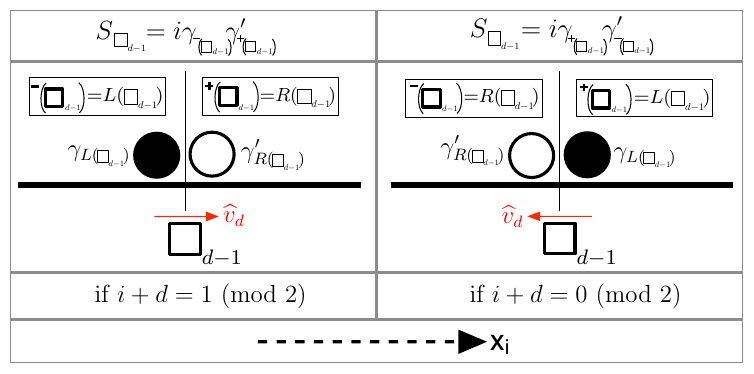}
  \caption[]{Definitions of $L(\msquare_{d-1})$ and $R(\msquare_{d-1})$ depending on $i+d \text{ (mod 2)}$. The $+x_i$ direction is to the right and determine the definitions of ${^\pm}(\msquare_{d-1})$. The $\gamma,\gamma'$ are given be the solid-black,open-white circles respectively. A useful reference is the red arrow labeled $\widehat{v}_d$ is the direction of the vector field $\widehat{v}_d$ that's parallel to the dual 1-skeleton, as defined in the Appendix~\ref{app:framingsOfCurvesAndWindings}; it is used to define the left/right sides of $\msquare_{d-1}$.} 
  \label{fig:majoranaLR_assignment}
\end{figure}

Now, to define $S_{\lambda}$ for a general $(d-1)$-cochain $\lambda$, we will need the following equality of the commutation relations of the $S_{\msquare_{d-1}}$ and ${U}_{\msquare_{d-1}}$ which match:
\begin{lem} \label{lem:exactBosonization_S_commRels}
\begin{equation}
    S_{\msquare_{d-1}} S_{\msquare'_{d-1}} = S_{\msquare'_{d-1}} S_{\msquare_{d-1}} (-1)^{\int \bmsquare_{d-1} \cup_{d-2} \bmsquare'_{d-1} + \bmsquare'_{d-1} \cup_{d-2} \bmsquare_{d-1}}
\end{equation}
and
\begin{equation}
    {U}_{\msquare_{d-1}} {U}_{\msquare'_{d-1}} = {U}_{\msquare'_{d-1}} {U}_{\msquare_{d-1}} (-1)^{\int \bmsquare_{d-1} \cup_{d-2} \bmsquare'_{d-1} + \bmsquare'_{d-1} \cup_{d-2} \bmsquare_{d-1}}
\end{equation}
\end{lem}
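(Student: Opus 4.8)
The plan is to establish the two commutation relations separately, since they live in different operator algebras that merely happen to produce the same sign. I would prove the bosonic identity for $U_{\msquare_{d-1}}$ first, as it drops straight out of the Pauli algebra and the definition of $U_{\msquare_{d-1}}$, and then treat the Majorana identity for $S_{\msquare_{d-1}}$, whose sign needs one extra combinatorial input relating the left/right assignment to the symmetrized $\cup_{d-2}$ product.

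For the $U$-relation, write $A := \msquare_{d-1}$ and $B := \msquare'_{d-1}$ and use $U_A = X_A \prod_{C} Z_C^{\int \bmsquare_C \cup_{d-2} \bmsquare_A}$. When commuting $U_A$ past $U_B$, the only nontrivial crossings are between an $X$ from one operator and the like-celled $Z$ from the other, since all $X$'s mutually commute, all $Z$'s mutually commute, and $X_C$ anticommutes with $Z_{C'}$ exactly when $C = C'$. Thus $X_A$ meets the $Z_A$ appearing in $U_B$ with exponent $\int \bmsquare_A \cup_{d-2} \bmsquare_B$, and $X_B$ meets the $Z_B$ in $U_A$ with exponent $\int \bmsquare_B \cup_{d-2} \bmsquare_A$. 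Reading off these two exponents gives exactly $(-1)^{\int \bmsquare_{d-1}\cup_{d-2}\bmsquare'_{d-1} + \bmsquare'_{d-1}\cup_{d-2}\bmsquare_{d-1}}$, which is the second equation.

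For the $S$-relation I would first extract the sign purely from the Majorana algebra. Writing $S_{\msquare_{d-1}} = i\gamma_{L}\gamma'_{R}$ with $L = L(\msquare_{d-1})$, $R = R(\msquare_{d-1})$ (and $L',R'$ for $\msquare'_{d-1}$), and using that distinct same-species Majoranas anticommute while the two species $\gamma,\gamma'$ always anticommute, reordering $S_{\msquare_{d-1}}S_{\msquare'_{d-1}}$ into $S_{\msquare'_{d-1}}S_{\msquare_{d-1}}$ produces $(-1)^N$ with $N$ the number of anticommuting cross-pairs. The two cross-species pairs always anticommute, while the like-species pairs $(\gamma_L,\gamma_{L'})$ and $(\gamma'_R,\gamma'_{R'})$ anticommute iff $L \neq L'$ and $R \neq R'$ respectively, so $N \equiv \delta_{L,L'}+\delta_{R,R'} \pmod 2$ and the two factors of $i$ cancel. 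Hence $S_{\msquare_{d-1}}S_{\msquare'_{d-1}} = (-1)^{\delta_{L,L'}+\delta_{R,R'}} S_{\msquare'_{d-1}}S_{\msquare_{d-1}}$.

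It then remains to prove the combinatorial identity $\delta_{L,L'}+\delta_{R,R'} \equiv \int(\bmsquare_{d-1}\cup_{d-2}\bmsquare'_{d-1} + \bmsquare'_{d-1}\cup_{d-2}\bmsquare_{d-1}) \pmod 2$, which I expect to be the main obstacle. My approach would be geometric: the dual of a $(d-1)$-cell is a $1$-cell of the dual lattice joining the centers ${}^{-}(\msquare_{d-1})$ and ${}^{+}(\msquare_{d-1})$ of its two neighboring $d$-cubes, so both sides are supported only on $d$-cubes adjacent to both $\msquare_{d-1}$ and $\msquare'_{d-1}$ and vanish when the cells share no neighbor. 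On each shared $d$-cube I would evaluate the right-hand side directly from the $\mathrm{Int}_{d-2}(\bullet_1\cdots\bullet_d)$ formula, which for two $(d-1)$-faces forces, in each of the two non-$\bullet$ directions, exactly one coordinate to take the $(\bullet,(-1)^{\ell}-)$ option and the other the $((-1)^{\ell}+,\bullet)$ option, and then compare the resulting parity with the $L/R$ side assignment of Eq.~\eqref{eq:LR_assignment_dMinus1Cubes}, organized into the two geometric cases of parallel ($a=a'$) versus perpendicular ($a\neq a'$) cells. This is precisely where the constant Vandermonde framing underlying our hypercubic $\cup_{d-2}$ plays the role that the branching structure plays in the simplicial bosonization of Ref.~\cite{chen2020BosonArbDimensions}, and I would mirror that argument, with the symmetrization $\alpha\cup_{d-2}\beta+\beta\cup_{d-2}\alpha$ being what renders the mod-$2$ intersection of the two dual edges symmetric and equal to the shared-endpoint count.
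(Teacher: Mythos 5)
Your proposal is correct and follows essentially the same route as the paper: the $U$-relation is read off directly from the Pauli algebra, and the $S$-relation is reduced to the sign $(-1)^{\delta_{L,L'}+\delta_{R,R'}}$ from the Majorana algebra, which is then matched against the symmetrized $\cup_{d-2}$ integral by evaluating the explicit hypercubic formula on the (at most one) shared $d$-cube and comparing with the $L/R$ assignment of Eq.~\eqref{eq:LR_assignment_dMinus1Cubes}. The only difference is presentational — you derive the relevant $\cup_{d-2}$ pairings from $\mathrm{Int}_{d-2}$ directly rather than quoting the pre-packaged formula Eq.~\eqref{hypercubic_cup_dminus2_formula}, which amounts to the same computation.
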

Note that the second equality for ${U}$ follows immediately from their definition, so we only need to check the first equality for the $S$. We save the proof to the end of this subsection.

Assuming this lemma, we can make a definition of $S_\lambda$ for general $(d-1)$-cochains $\lambda$. Say that $\lambda = \lambda_1 + \cdots + \lambda_N$ where each $\lambda_i$ is an indicator cochain on a single $\msquare_{d-1}$. Then, we'll define
\begin{equation}
    S_{\lambda} := (-1)^{\sum_{1 \le i < j \le N} \int \lambda_i \cup_{d-2} \lambda_j } \prod_{i=1}^N S_{\lambda_i}.
\end{equation}
where the products of the $\{S_{\lambda_i}\}$ are in the order $1 \to \cdots \to N$. The commutation relations of the $S$ as in Lemma~\ref{lem:exactBosonization_S_commRels} show that this definition is independent of the order of the $\lambda_i$, thus is well-defined.

Now, we discuss the fermionic identity on the RHS of Eq.~\eqref{eq:exactbosonization3}, given by the following lemma.
\begin{lem} \label{lem:exactBos_gaugeConstraintFermion}
For all $\msquare_{d-2}$, we have 
\begin{equation}
    S_{\delta \bmsquare_{d-2}} = \prod_{\msquare_d} P_{\msquare_d}^{\int \bmsquare_{d} \cup_{d-1} \delta\bmsquare_{d-2}} 
\end{equation}
\end{lem}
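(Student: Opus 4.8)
The plan is to reduce the identity to a purely local computation near a single $(d-2)$-cube $\msquare_{d-2}$ and then match the two sides configuration by configuration. Let $i<j$ be the two directions transverse to $\msquare_{d-2}$ (the coordinates carrying $\pm$ in its label). The coboundary $\delta\bmsquare_{d-2}=\sum_{\msquare_{d-1}\supset\msquare_{d-2}}\bmsquare_{d-1}$ is supported on exactly four $(d-1)$-cubes: two with transverse direction $j$ sitting on the $\pm x_i$ sides (call them $A_+,A_-$) and two with transverse direction $i$ on the $\pm x_j$ sides ($B_+,B_-$). These four faces bound the dual plaquette $(\msquare_{d-2})^\vee$, whose four corners are dual to the four $d$-cubes $C_{++},C_{+-},C_{-+},C_{--}$ surrounding $\msquare_{d-2}$ in the $ij$-plane; each of $A_\pm,B_\pm$ borders exactly two of these, so the four hopping operators form a $4$-cycle on $\{C_{\sigma\tau}\}$. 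Since $\delta\bmsquare_{d-2}$ vanishes away from these four faces, both sides are supported on this local configuration, and it suffices to verify the identity there.

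First I would evaluate the right-hand exponents. Applying the $\cup_{d-1}$ formula of Sec.~\ref{sec:cupM_OverZ2} with $\alpha=\bmsquare_d$ a $d$-cochain forces $\mathrm{cell\text{-}1}$ to be the full cube and $\mathrm{cell\text{-}2}$ to run over one selected $(d-1)$-face per direction $k$, namely the face with transverse direction $k$ on the $(-1)^{k-1}(-)$ side (here $\ell(k)=k-1$ since $k$ is the unique non-chosen index). Hence $\int \bmsquare_d \cup_{d-1}\delta\bmsquare_{d-2}$ for $\msquare_d=C_{\sigma\tau}$ counts, modulo $2$, how many of the two selected faces in directions $i,j$ coincide with an element of $\{A_\pm,B_\pm\}$. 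Carrying out this count cube by cube, I expect the surviving cubes to be the off-diagonal pair $C_{+-},C_{-+}$ when $i+j$ is even and the diagonal pair $C_{++},C_{--}$ when $i+j$ is odd, so the right-hand side equals $P_{C_{+-}}P_{C_{-+}}$ or $P_{C_{++}}P_{C_{--}}$ respectively.

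Next I would expand the left-hand side. Writing $\delta\bmsquare_{d-2}=A_++A_-+B_++B_-$ and using the definition of $S_\lambda$, one has $S_{\delta\bmsquare_{d-2}}=(-1)^{\Sigma}\,S_{A_+}S_{A_-}S_{B_+}S_{B_-}$ for a fixed order, with $\Sigma=\sum_{a<b}\int\lambda_a\cup_{d-2}\lambda_b$ over the four faces $\lambda_a$. Substituting $S_{\msquare_{d-1}}=i\gamma_{L(\msquare_{d-1})}\gamma'_{R(\msquare_{d-1})}$ and reading $L,R$ off Eq.~\eqref{eq:LR_assignment_dMinus1Cubes}, each cube $C_{\sigma\tau}$ enters precisely two of the four factors. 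Using the Clifford relations together with $P_{\msquare_d}=-i\gamma_{\msquare_d}\gamma'_{\msquare_d}$, a cube entering once as an $L$ (contributing $\gamma$) and once as an $R$ (contributing $\gamma'$) collapses to a factor of $P_{\msquare_d}$, whereas a cube entering twice on the same side yields $\gamma^2=1$ or $\gamma'^2=1$ and drops out. The $L/R$ rule makes the former occur exactly for the diagonal pair when $i+j$ is odd and for the off-diagonal pair when $i+j$ is even, reproducing the set of surviving cubes found on the right-hand side.

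The main obstacle is the sign bookkeeping needed to show that the surviving $P$'s appear with total coefficient exactly $+1$ and to the first power. Three sources of signs must be reconciled: the factor $i^4=1$ from the four hoppings; the reordering signs incurred when permuting the eight Majoranas into the two surviving bilinears $-i\gamma\gamma'$ (the remaining four Majoranas squaring to the identity); and the prefactor $(-1)^{\Sigma}$ assembled from the $\cup_{d-2}$ pairings of the four faces. I would handle this by fixing the explicit order $(A_+,A_-,B_+,B_-)$, evaluating each $\int\lambda_a\cup_{d-2}\lambda_b$ through the thickened-intersection interpretation of $\cup_{d-2}$, and checking in each parity class of $(i+d,j+d)$ that the reordering sign cancels $(-1)^{\Sigma}$. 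Since the argument closely parallels the triangulated case of Ref.~\cite{chen2020BosonArbDimensions}, I expect this cancellation to be systematic once the order and orientations are fixed, completing the verification that $S_{\delta\bmsquare_{d-2}}=\prod_{\msquare_d}P_{\msquare_d}^{\int\bmsquare_d\cup_{d-1}\delta\bmsquare_{d-2}}$.
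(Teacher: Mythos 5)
Your proposal is correct and follows essentially the same route as the paper: localize to the four $d$-cubes around $\msquare_{d-2}$, evaluate $\int\bmsquare_d\cup_{d-1}\delta\bmsquare_{d-2}$ via the formula $\sum_k\alpha(\hat k_{(-1)^k})$ to get the parity criterion $i+j+p(1)+p(2)$ for which cubes survive, and match this against the $L/R$ assignments so that each surviving cube contributes one $\gamma$ and one $\gamma'$ collapsing to $P_{\msquare_d}$. The only piece you describe but do not execute — verifying in each parity class of $(i,j)$ that the reordering signs cancel the $(-1)^{\Sigma}$ prefactor from the $\cup_{d-2}$ pairings — is precisely the finite case-work the paper carries out in Fig.~\ref{fig:proofOfGaugeConstraint}, and your identification of the sign sources is the correct setup for it.
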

Again, we save the proof to the end of this subsection.

These are the main calculations needed to verify the exact bosonization. The rest of the arguments verifying the equalities of operators algebras carry over \textit{mutatis mutandi} from~\cite{chen2020BosonArbDimensions}. 

\subsubsection{Proofs of Lemmas~\ref{lem:exactBosonization_S_commRels},\ref{lem:exactBos_gaugeConstraintFermion}}
Now we discuss the proofs of Lemmas~\ref{lem:exactBosonization_S_commRels},\ref{lem:exactBos_gaugeConstraintFermion}. Before starting the proofs, it's helpful to state a formula for the $\cup_{d-2}$ product of two $(d-1)$-cochains. We'll use the notation for each choice of $\{i,\pm\}$
\begin{equation}
\ihat_{\pm} := P_{(\bullet, \cdots,\bullet, \underbrace{\pm}_\text{position `i'}, \bullet, \cdots, \bullet)}
\end{equation}
to denote both the $(d-1)$-hypercube above and also as an argument for cochains.\footnote{In the Appendix~\ref{app:framingsOfCurvesAndWindings}, we often use this to also refer to the \textit{dual} 1-cell.} The following formula holds
\begin{equation} \label{hypercubic_cup_dminus2_formula}
(\alpha \cup_{d-2} \beta)(\msquare_d) = \sum_{1\le i<j \le d}(\alpha(\ihat_{(-1)^{i+1}})\beta(\jhat_{(-1)^{j+1}})+\beta(\ihat_{(-1)^{i}})\alpha(\jhat_{(-1)^{j}}))
\end{equation}
where $\ihat_{\pm 1}, \jhat_{\pm 1}$ refer to the indices on the cells $\ihat_{\pm}, \jhat_{\pm}$. This can be shown using the $\cup_m$ formulas presented previously. For example, in $d=4$ we have
\begin{equation*}
\begin{split}
    (\alpha \cup_{d-2} \beta)(\msquare_{d}) = 
    &\alpha(\hat{1}_+)\beta(\hat{2}_-) + \alpha(\hat{1}_+)\beta(\hat{3}_+) + \alpha(\hat{1}_+)\beta(\hat{4}_-) + \alpha(\hat{2}_-)\beta(\hat{3}_+) + \alpha(\hat{2}_-)\beta(\hat{4}_-) + \alpha(\hat{3}_+)\beta(\hat{4}_-) \\
    + &\beta(\hat{1}_-)\alpha(\hat{2}_+) + \beta(\hat{1}_-)\alpha(\hat{3}_-) + \beta(\hat{1}_-)\alpha(\hat{4}_+) + \beta(\hat{2}_+)\alpha(\hat{3}_-) + \beta(\hat{2}_+)\alpha(\hat{4}_+) + \beta(\hat{3}_-)\alpha(\hat{4}_+)
\end{split}
\end{equation*}
and for all $d < 4$, the expression is simply the restriction to terms not involving $\ihat$ with $i > d$. The analogous pattern holds for all $d$. 

Now we prove the Lemma~\ref{lem:exactBosonization_S_commRels}.
\begin{proof}[Proof of Lemma~\ref{lem:exactBosonization_S_commRels}]
    Write 
    $S_{\msquare_{d-1}} = i \gamma_A \gamma'_B$ and $S_{\msquare'_{d-1}} = i \gamma_C \gamma'_D$
    where $A=L(\msquare_{d-1}),B=R(\msquare_{d-1})$ and $C=L(\msquare'_{d-1}),D=R(\msquare'_{d-1})$ are the $d$-hypercubes that $S_{\msquare_{d-1}}$ and $S_{\msquare'_{d-1}}$ respectively border. Note that the only ways they could possibly anticommute is if $A=C$ or if $B=D$. 
    
    The simplest case to check is when $\msquare_{d-1},\msquare'_{d-1}$ \textit{are not} contained in any common $d$-hypercube. Then $S_{\msquare_{d-1}},S_{\msquare'_{d-1}}$ must commute because then all $A,B,C,D$ would be different. Also we'd have 
    \begin{equation}
        \int \bmsquare_{d-1} \cup_{d-2} \bmsquare'_{d-1} + \bmsquare'_{d-1} \cup_{d-2} \bmsquare_{d-1} = 0,
    \end{equation}
    which is consistent with the commutation of $S_{\msquare_{d-1}},S_{\msquare'_{d-1}}$.
    Another simple case to check is that two $(d-1)$-hypercubes are identical, $\msquare_{d-1} = \msquare'_{d-1}$. We have
    \begin{eqs}
        \int \bmsquare_{d-1} \cup_{d-2} \bmsquare'_{d-1} + \bmsquare'_{d-1} \cup_{d-2} \bmsquare_{d-1} = 0,
    \end{eqs}
    and commuting $S_{\msquare_{d-1}}$ and $S_{\msquare'_{d-1}}$.
    
    So suppose WLOG that $\msquare_{d-1},\msquare'_{d-1}$ are both distinct subcells of $\msquare_d$ and that we can write $\msquare_{d-1} = \ihat_{\pm_1}$ and $\msquare'_{d-1} = \jhat_{\pm_2}$. First, the $\cup_{d-2}$ formula Eq.~\eqref{hypercubic_cup_dminus2_formula} gives that
    \begin{equation}
        \int \bmsquare_{d-1} \cup_{d-2} \bmsquare'_{d-1} + \bmsquare'_{d-1} \cup_{d-2} \bmsquare_{d-1} = 1 \text{ iff } 
        \begin{cases}
            i-j=0\text{ (mod 2) and } \pm_1 = \pm_2, \text{ OR} \\
            i-j=1\text{ (mod 2) and } \pm_1 = -\pm_2
        \end{cases}
    \end{equation}
    One can check using the characterization of $\{L,R\}(\{\msquare_{d-1},\msquare'_{d-1},\})$ from Eq.~\eqref{eq:LR_assignment_dMinus1Cubes} that these conditions also hold iff one of $A = L(\msquare_{d-1}) = L(\msquare'_{d-1}) = C = \msquare_d$ or $B = R(\msquare_{d-1}) = R(\msquare'_{d-1}) = D = \msquare_d$ hold, which verifies the commutation relation. 
    
    Note that whether $S_{\msquare_{d-1}},S_{\msquare'_{d-1}}$ anticommute commute can be expressed in terms of the vector $\widehat{v}_d$, which was depicted in Fig.~\ref{fig:majoranaLR_assignment} and described in Appendix~\ref{app:framingsOfCurvesAndWindings}. In particular, they anticommute iff on the common $d$-hypercube of $\msquare_{d-1},\msquare'_{d-1}$, the $\widehat{v}_d$ vectors both point towards the center or away from the center, like ${\color{red} \leftarrow\rightarrow}$ or ${\color{red} \rightarrow\leftarrow}$.
\end{proof}

\begin{proof}[Proof of Lemma~\ref{lem:exactBos_gaugeConstraintFermion}]
    Let's say that $\msquare_{d-2}$ is perpendicular to the directions $x_i,x_j$, so that it spans the directions $\{x_1 \cdots \hat{x}_i \cdots \hat{x}_j \cdots x_d\}$. Note that $\delta\bmsquare_{d-2}$ is nonzero on exactly four $(d-1)$-hypercubes, in the $\pm\{i,j\}$ directions of $\msquare_{d-2}$. As such, $\delta\bmsquare_{d-2}$ is dual to a square going around in the $i,j$ directions around $\msquare_{d-2}$.
    
    Now we want to handle the $\int \bmsquare_{d} \cup_{d-1} \delta\bmsquare_{d-2}$ factor of the integral. One can verify that for any $(d-1)$-cochain $\alpha$, we'd have
    \begin{equation}
        \int \bmsquare_d \cup_{d-1} \alpha = \sum_{k=1}^d \alpha(\hat{k}_{(-1)^{k}})
    \end{equation}
    where $\alpha(\hat{k}_{\pm 1})$ is $\alpha$ evaluated on the $\hat{k}_{\pm}$ subcell of $\msquare_d$. Note that $\delta \bmsquare_{d-2}$ is nonzero on two $(d-1)$-subcells of each of the four neighboring $d$-hypercubes. For the four $d$-hypercubes containing $\msquare_{d-2}$, the pairs of two $(d-1)$-subcells are labelled as $(\ihat_{+_1}, \jhat_{+_2})$, $(\ihat_{+_1}, \jhat_{-_2})$, $(\ihat_{-_1}, \jhat_{+_2})$, and $(\ihat_{-_1}, \jhat_{-_2})$. For these cases we'd end up having 
    \begin{equation}
        \int \bmsquare_d \cup_{d-1} \delta \bmsquare_{d-2} = 
        \begin{cases}
            0 \text{ if } i + j + p(1) + p(2) = 0 \text{ (mod 2)} \\
            1 \text{ if } i + j + p(1) + p(2) = 1 \text{ (mod 2)}
        \end{cases}
    \end{equation}
    where $(-1)^{p(1,2)} = \pm_{1,2}$. Note that opposite to the discussion at the end of the proof of Lemma \ref{lem:exactBosonization_S_commRels}, the cases where it returns $0$ correspond to the cases where the $\widehat{v}_d$ vector in the square looks like ${\color{red} \leftarrow \rightarrow}$ or ${\color{red} \rightarrow \leftarrow}$. 
    
\begin{figure}[h!]
  \centering
  \includegraphics[width=\linewidth]{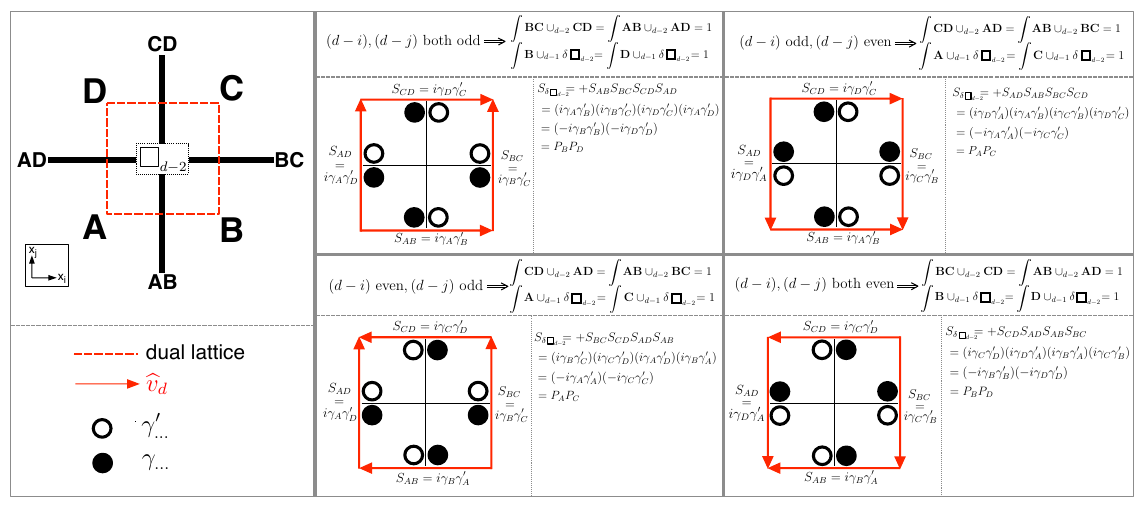}
  \caption[]{Various cases to check in the proof of Lemma~\ref{lem:exactBos_gaugeConstraintFermion}. (Left) $\delta \bmsquare_{d-2}$ defines a loop on the dual lattice in the $x_i,x_j$ plane going between four $d$-hypercubes, ranging between $\msquare_d = A,B,C,D$ in counter-clockwise order. The four $(d-1)$-hypercubes $AB,BC,CD,AD$ are labeled according to the $d$-hypercubes they border. We refer to $\{{\bf A},{\bf B},{\bf C},{\bf D}\}$ and $\{{\bf AB},{\bf BC},{\bf CD},{\bf AD}\}$ as the indicator cochains on the respective hypercubes. The vector $\widehat{v}_d$ as described in Appendix~\ref{app:framingsOfCurvesAndWindings} is a useful reference.
  (Right) The four cases of $(d-i),(d-j)$ being even/odd are checked. The nonzero integrals used to define $S_{\delta \bmsquare_{d-2}}$ and in the Lemma are shown in each case, together with the algebra verifying the equalities.}
  \label{fig:proofOfGaugeConstraint}
\end{figure}
    Now, we only have four cases to verify depending on whether $i,j$ are even/odd. These four cases are analyzed in Figure~\ref{fig:proofOfGaugeConstraint}.
\end{proof}

\subsection{Gu-Wen `Grassmann Integral'}

\subsubsection{Definition using winding numbers}

We start by explaining the how to give the ``winding number'' based defintion of the Grassmann integral $\sigma(\alpha)$. First we'll briefly talk about the case where $\alpha$ is dual to a single closed curve on the dual lattice, which is described in more detail in the Appendix~\ref{app:framingsOfCurvesAndWindings}. Then, we'll talk about how to extend the definition to more general $\alpha$.

Suppose $L$ is a cochain defining a set of dual edges that define a simple, non-self-intersecting, closed curve on the dual lattice. The main idea is that there's a canonical \textit{framing} of the curve and thus a canonical way to define an \textit{induced spin structure} on the framed curve. First, the space $\R^d$ is endowed with a `background framing' consisting of the frame used to define the $\cup_m$ products via the thickening/shifting prescription. The curve itself is endowed with a framing given by the first $(d-2)$ vector fields of the frame used to define the $\cup_m$ products via the thickening/shifting prescription; this corresponds to the `shared framing' discussed in Appendix~\ref{app:framingsOfCurvesAndWindings}. Then, the induced spin structure on the curve can be computed in terms of the number of times the other two vectors of the background frame wind with respect to the tangent of the curve after projecting away these `shared framing' directions. Call this winding number `wind$(L)$'. Then, the induced spin structure will be periodic if wind$(L)$ is even and it'll be anti-periodic if wind$(L)$ is odd. Then, we'll define $\sigma(L)$ for such loops as:
\begin{equation}
    \sigma(L) := -(-1)^{\text{wind(L)}} =
    \begin{cases}
        1 \text{ if anti-periodic} \\
        -1 \text{ if periodic}
    \end{cases}.
\end{equation}

Now we want to extend this definition for more general closed $\alpha$. For such $\alpha$, its dual will be a sum over closed loops: $\alpha = L_1 + \cdots + L_k$. As such, we would like to define $\sigma(\alpha)$ as a product of the $\sigma(L)$ the loop decomposition:
\begin{equation}
    \sigma(\alpha) := \prod_{i=1}^k \sigma(L_k) = (-1)^{\text{\# of loops}} \prod_{i=1}^k (-1)^{\text{wind}(L)}.
\end{equation}
However, a priori the specific loop decomposition $\alpha = L_1 + \cdots + L_k$ is ambiguous because the dual hypercubic lattice is a $2d$-valent graph. The way to deal with this, as in~\cite{T20}, is to introduce a \textit{trivalent resolution} of the dual 1-skeleton, so that any closed loop configuration would be endowed with a unique decomposition into distinct loops. The trivalent resolution we use is depicted in Fig.~\ref{fig:trivalentRes}. For comparison, we also give a trivalent resolution that can be used in the simplicial case to define $\sigma$.

\begin{figure}[h!]
    \centering
    \begin{minipage}{0.48\textwidth}
        \centering
        \includegraphics[width=\linewidth]{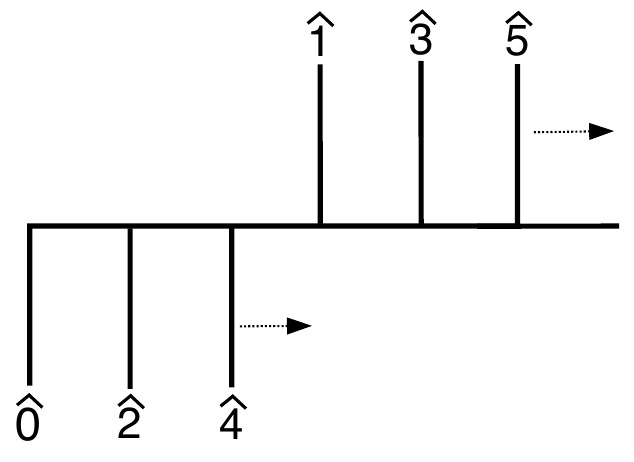}
    \end{minipage}
    \begin{minipage}{0.48\textwidth}
         \centering
         \includegraphics[width=\linewidth]{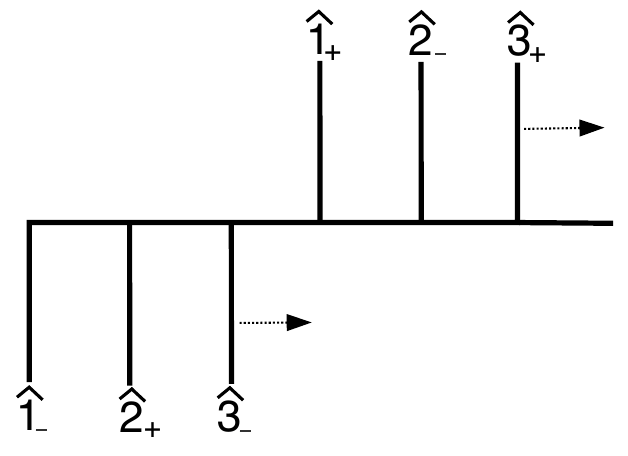}
    \end{minipage}
    \caption{(Left) Trivalent resolution for the dual 1-skeleton on a simplex $\braket{0 \cdots d}$, where $\ihat$ is the cell dual to $\braket{0 \cdots \ihat \cdots d}$. (Right) Trivalent resolution for the dual hypercubic lattice in a single $d$-hypercube, where $\ihat_\pm$ refers to the dual cell pointing in the $\pm x_i$ direction relative to the center. }
    \label{fig:trivalentRes}
\end{figure}

This function with the given trivalent resolution turned out to satisfy the quadratic refinement
\begin{equation} \label{eq:quadRefineSigma}
    \sigma(\alpha)\sigma(\beta)=\sigma(\alpha+\beta)(-1)^{\int \alpha \cup_{d-2} \beta}.
\end{equation}
The argument for this is essentially the same as the argument in~\cite{T20}, so we only briefly talk about it here. Quadratic refinement is a statement about how the windings of the collections of loops dual to $\alpha,\beta$ change when they reconnect and recombine in $\alpha + \beta$. This property is well-known in $d=2$~\cite{johnson1980} where the $\int \alpha \cup_{d-2} \beta$ factor reduces to the intersection number $\int \alpha \cup \beta$ between $\alpha^\vee,\beta^\vee$. In addition, one can use the thickening and shifting prescription of the higher-cup products to reduce the picture to project away the `shared framing' directions and reduce the picture to $d=2$. Then, the $\int \alpha \cup_{d-2} \beta$ factor reduces to the intersection number of the curves in a `projected' picture. 

We also mention a subtlety that the particular trivalent resolution use is important in determining how the fermion lines reconnect with each other, and that not every trivalent resolution will work to reproduce quadratic refinement (although there are many that do work). In Fig.~\ref{fig:trivalentRes}, we gave an example of resolutions that work in the simplicial case.\footnote{The resolution shown here is not the same as the one originally introduced in~\cite{T20}, although the methods there show that this one shown also works.} It's also interesting to compare our trivalent resolution to Fig.~\ref{fig:windingsGeneralDimensions} in the Appendix were we draw a projected version of the dual 1-skeletons and vector fields in illustrating how to compute the windings. 

Also, one can show that the values are all $+1$ on elementary loops dual to coboundaries of indicators $\msquare_{d-2}$ on a single $(d-2)$-simplex:
\begin{equation} \label{eq:sigmaElLoops}
    \sigma(\delta \bmsquare_{d-2}) = +1.
\end{equation}
In the simplicial case, the formula was actually $\sigma(\delta {\bf \Delta}_{d-2}) = (-1)^{w_2({\bf \Delta}_{d-2})}$ for indicator cochains ${\bf \Delta}_{d-2}$ and $w_2$ being the chain-representative dual to the second Stiefel-Whitney class. The geometric reason that these are all $+1$ here is that the background vector fields used to define the $\cup_{d-2}$ products are constant, so won't have any singularities, thus $w_2 \equiv 0$ for us.

Quadratic refinement and the values on elementary loops in Eqs.~(\ref{eq:quadRefineSigma},\ref{eq:sigmaElLoops}) actually fix the values of $\sigma(\delta \lambda)$ on all coboundaries $\delta \lambda$ (see~\cite{GK16}). In particular, we'll have:
\begin{equation} \label{eq:GrassmannIntegral_TrivialLoops}
    \sigma(\delta \lambda) = (-1)^{\int \lambda \cup_{d-4} \lambda + \lambda \cup_{d-3} \delta \lambda}.
\end{equation}
And since $\Z^d$ is a trivial cell structure, this would actually tell us all values of $\sigma(\alpha)$. However, the constructions for $\sigma$ still make sense if we compactify the hypercubic lattice, say into a torus, and the values on nontrivial loops are \textit{not} fixed in a similar way.

\subsubsection{Definition using Grassmann variables}
While the above does produce a function $\sigma(\alpha)$, it would be nice if one could produce the same function $\sigma(\alpha)$ using Grassmann variables in the spirit of~\cite{guWen2014Supercohomology,GK16}. In fact we can produce a function that does exactly that.

For a cochain $\alpha \in C^{d-1}(\text{cubical complex},\Z_2)$, define $\sigma^\text{gr}(\alpha)$ as follows:
\begin{equation}
\sigma^\text{gr}(\alpha) = \int \prod_{e|\alpha(e)=1} d\theta_e d{\bar{\theta}_e} \prod_{c}u(c)
\end{equation}
where for each hypercube $c$, $u(c)$ is a product of Grassmann variables of $\theta_e$ or $\bar{\theta}_e$ of the edges $e \in c$. We'll define $u(c)$ differently depending on when $d$ is even or odd:
\begin{equation}
u(c) = 
\begin{cases}
\theta_{\hat{1}_-}^{\alpha(\hat{1}_-)}\theta_{\hat{2}_+}^{\alpha(\hat{2}_+)}\theta_{\hat{3}_-}^{\alpha(\hat{3}_-)}\cdots {\bar{\theta}}_{\hat{1}_+}^{\alpha(\hat{1}_+)}{\bar{\theta}}_{\hat{2}_-}^{\alpha(\hat{2}_-)}{\bar{\theta}}_{\hat{3}_+}^{\alpha(\hat{3}_+)} \cdots  \text{ if } d \text{ even}  \\
\cdots \theta_{\hat{3}_+}^{\alpha(\hat{3}_+)} \theta_{\hat{2}_-}^{\alpha(\hat{2}_-)} \theta_{\hat{1}_+}^{\alpha(\hat{1}_+)} \cdots {\bar{\theta}}_{\hat{3}_-}^{\alpha(\hat{3}_-)}{\bar{\theta}}_{\hat{2}_+}^{\alpha(\hat{2}_+)}{\bar{\theta}}{\hat{1}_-}^{\alpha(\hat{1}_-)} \text{ if } d \text{ even} 
\end{cases}.
\end{equation}
Here, the notation $\theta_{\ihat_\pm}^{\alpha(\ihat_\pm)}$ or ${\bar{\theta}}_{\ihat_\pm}^{\alpha(\ihat_\pm)}$ means we include the Grassmann variable $\theta$ or ${\bar{\theta}}$ associated to the edge $\ihat_\pm$ if $\alpha(\ihat_\pm)=1$, i.e. if that edge is included in the cochain. See Fig.~\ref{fig:grassmannAssignments} for a depiction of the Grassmann assignments. 

\begin{figure}[h!]
  \centering
  \includegraphics[width=\linewidth]{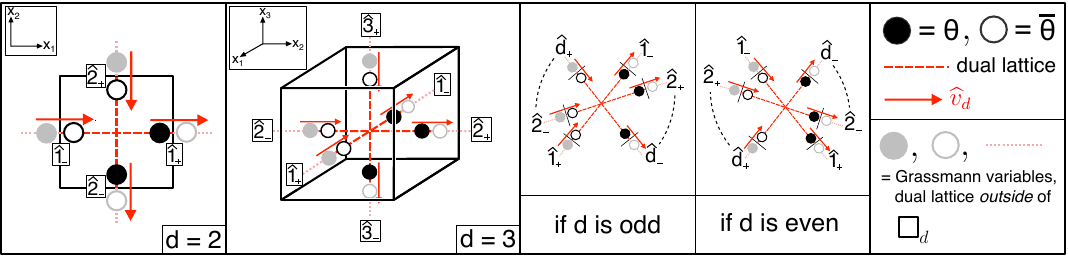}
  \caption[]{Assignment of Grassmann variables in a single $\msquare_d$ used in the definition of $\sigma^{\text{gr}}(\alpha)$ relative to the dual 1-skeleton and the $\widehat{v}_d$ vector (see Appendix~\ref{app:framingsOfCurvesAndWindings}). A solid, open circle represents a placement of $\theta,\bar{\theta}$ respectively and the red dashed line is the dual 1-skeleton. The grayed circles and gray-red line represent the analogous quantities \textit{outside} of $\msquare_d$.}
  \label{fig:grassmannAssignments}
\end{figure}

Note that the assignment of Grassmann variables on edges mimics the assignments of Majorana operators of the exact bosonization in the sense that $\theta,\bar{\theta}$ variables occur in the same configurations as $\gamma,\gamma'$ respectively (e.g. compare Figs.~\ref{fig:majoranaLR_assignment},\ref{fig:proofOfGaugeConstraint} to Fig.~\ref{fig:grassmannAssignments}). Later on, in Sec.~\ref{sec:fermionicToricCode}, we'll actually connect $\sigma(f)$ to the exact bosonization in how it shows up in amplitudes of ground state of the `fermionic toric code'.

It ends up that $\sigma^{\text{gr}}(\alpha)$ literally equals the winding-based definition $\sigma(\alpha)$. We don't reproduce the proof here but briefly describe it and how to map this hypercubic case onto a proof in an Appendix of~\cite{TKBB2021anomalies}, which shows the analogous statement in the simplicial case.
First, note that the order of the Grassmann variables is the same order as the edges that appear in the trivalent resolution Fig.~\ref{fig:trivalentRes} going from left to right. This can be used to show that both $\sigma(\alpha)$ and $\sigma^\text{gr}(\alpha)$ can be decomposed into products over the same set of loops that the dual of $\alpha$ gets decomposed into. Then, one has to check that indeed $\sigma(L)$ and $\sigma^\text{gr}(L)$ match on any cocycle $L$ dual to a single loop which can be done via some case-work and an inductive argument on the number of pairs of partial windings that occur throughout the loop. The map to the proof in~\cite{TKBB2021anomalies} can be guessed by matching orders of dual edges in Fig.~\ref{fig:trivalentRes}, where the trivalent resolutions used for the simplicial and hypercubic cases are shown side-by-side. In particular, the proof in $d$ dimensions for the hypercubic case maps onto the proof in $(2d -1)$ dimensions in the simplicial case.

\section{Construction of lattice models} \label{sec:ConstructionOfLatticeModels}
Given a $(d+1)$-dimensional action $S_{d+1}$ for some space-time topological action for some theory, the ground-state wavefunction $\ket{\Psi}$ of the theory on a $d$-dimensional spatial manifold $M$ can be constructed as:
\begin{equation}
\braket{\Phi|\Psi} \propto Z(\Phi) = \sum_{\text{fields}} \exp{ \Big( 2 \pi i \cdot S_{d+1}(\text{fields} ; \Phi) \Big)}
\end{equation}
where $\Phi$ is some state in the microscopic boundary Hilbert space that represents the boundary conditions of the topological action. Here, the partition function $Z(\Phi)$ is defined as a sum over some background `fields' compatible with boundary conditions, and the `action' $S_{d+1}(\text{fields} ; \Phi)$ depends on the boundary conditions $\ket{\Phi}$ and this background field $\Phi$.

A Hamiltonian formulation, or \textit{lattice model}, of a topological action is a gapped Hamiltonian that admits $\ket{\Psi}$ as a ground-state. Well-known examples of such a correspondence are the Turaev-Viro/Levin-Wen~\cite{turaevViro1992,levinWen2005} and Crane-Yetter/Walker-Wang~\cite{craneYetter1993,walkerWang2012} correspondence which give path-integral/Hamiltonian correspondences for (resp.) (2+1)-$d$ and (3+1)-$d$ TQFTs related to tensor categories.

To construct a $d$-dimensional Hamiltonian out of a $(d+1)$-dimensional path integral, one needs to consider the ground states of the manifold on a triangulated surface with boundary. In general, such a ground state will be a superposition over all possible boundary conditions, with amplitudes being proportional to the path integral evaluated in the presence of the boundary conditions. For example, the Levin-Wen and Walker-Wang models are paradigmatic examples of `string net' models where the ground states on a spatial manifold are configurations of strings with various colors with certain rules for branching at trijunctions. A basis of ground states is in correspondence with equaivalence classes of valid string-net configurations that can be transformed into each other via some set of \textit{local moves}. Such a ground state will be a sum over all configurations in the equivalence class, and each local move gives a relationship between the amplitudes of these configurations. In the space-time path-integral, these string-nets are actually configurations of two-dimensional sheets that restrict to a string-net on the spatial boundary.

One way to construct such a Hamiltonian with such a ground state, is to consider one with two kinds of \textit{mutually commuting} terms. The first kind are called `charge terms' (often called `star terms') which enforce certain constraints on what kinds of states are allowed in the ground state. In string-net states, these enforce the branching rules. The `flux terms' (often called `plaquette terms' in the context of string-net models) give the relations between amplitudes of configurations allowed by local moves. If all these terms mutually commute, then one can find the ground states exactly. And, a basis of grounds state is the equivalence classes of possible configurations modulo local moves. 

Of particular interest to us are actions that can be written as a sum over cochains or cocycles. In particular, we'll have that general states can look like $\ket{\Phi_1, \cdots,  \Phi_n}$ for cochains $\{\Phi_1, \cdots, \Phi_n\}$, and states appearing with nonzero amplitudes in a ground state $\ket{\Psi}$ will require a closed-cochain condition, that each $\delta \Phi_k = 0$ if $\braket{\Phi | \Psi} \neq 0$. This condition will correspond to the charge terms. 

The strategy to derive flux terms from a space-time action involves considering cochains $\{\Phi_k\}$ and their ground state amplitudes on a topologically trivial manifold, so that each $\Phi_k = \delta \tilde{a}_k$ is a coboundary. Generally on trivial manifolds, we will be able to derive exact expressions for $\braket{\{\delta \tilde{a}_k \}| \Psi}$ in terms of the $\tilde{a}_k$. Then, we'll show that the changes in amplitudes between $\braket{ \{ \delta \tilde{a}_k \} | \Psi}$ and $\braket{\{ \delta \tilde{a}_k + \delta \lambda_k  \}| \Psi}$ for indicator cochains $\lambda_k$ only depend on the $\delta \tilde{a}_k = \Phi_k$, and thus give local expressions for local changes in amplitudes.

And again, one would be able to derive the most general ground state on a manifold by considering the space of cocycle configurations modulo local changes, which in our examples will be in correspondence with cohomology classes, which by definition are cocycles modulo local changes.

\subsection{Warmup: $\Z_n$ toric code in arbitrary dimensions}
We start by revewing the simplest example of such an action is the so-called `toric code' following discussions in~\cite{BGK17,kapustinThorngren2017}. We start by reviewing the standard $\Z_2$ $(2+1)$D toric code and at the end note how this action generalizes to higher dimensions and $\Z_n$ coefficients, giving similar models.

The toric code's space-time partition function can be expresssed as a sum over $\Z_2$ cochains $a \in C^1(M,\Z_2)$ and chains $b \in C_2(M,\Z_2)$ with action:
\begin{equation}
    S_{t.c.} = \frac{1}{2} \int_{M^{2+1}} (a)(\partial b)
\end{equation}
where $\int (a)(\partial b)$ gives the cochain-chain pairing $C^1(M,\Z_2) \times C_1(M,\Z_2) \to \Z_2$. For us, we consider boundary conditions of wavefunctions on the manifold as $\restr{a}{\partial M}, \restr{b}{\partial M}$. Actually for us, the $\restr{b}{\partial M}$ represent `electric excitations' above the ground state corresponding to electric-particle excitations, and we will set $\restr{b}{\partial M} = 0$. Note that our \textit{microscopic} boundary Hilbert space is a tensor product of two-state systems $\{\ket{0}_e, \ket{1}_e\}$ on each edge $e$ (or equivalently dual edge $e^\vee$) on the boundary 2D triangulation, and a state on the boundary corresponds to a boundary condition $\restr{a}{\partial M}(e) = \Phi(e) \in \{0,1\}$ for each edge.

Note that by Stoke's theorem, above action reduces to
\begin{equation}
\begin{split}
    S_{t.c.} &= \frac{1}{2} \int_{M} (a)(\partial b) = \frac{1}{2} \int_{M} (\delta a)(b) +  \frac{1}{2}\int_{\partial M} (a)(b) \\
    &=  \frac{1}{2} \int_{M} (\delta a)(b)
\end{split}
\end{equation}
where the first equality is an integration-by-parts and the second equality uses $\restr{b}{\partial M} = 0$. Note that the partition function associated to the above action will give a ground-state wavefunction
\begin{equation}
\begin{split}
    \braket{\Phi | \Psi} \propto Z_{t.c.}(\Phi) = \sum_{\substack{a \in C^1(M,\Z_2) \,,\, \restr{a}{\partial M}(e) = \Phi(e) \\ b \in C_2(M,\Z_2)}}
    (-1)^{\int (\delta a) (b)}.
\end{split}
\end{equation}
Note that summing over $b$ above gives that the amplitude $\braket{\Phi | \Psi}$ is \textit{zero} unless $\delta a = 0$ everywhere throughout the 3-manifold $M$. This means that $a$ must be dual to closed worldsheets in $M$ that restrict to closed-loop configurations on $\partial M$, thus any nonzero amplitude in the $\ket{\Psi}$ wavefunction is associated to a closed-loop configuration on the dual cellulation of $\partial M$. This means that $b$ acts as a \textit{Lagrange multiplier}, and `integrating it out' enforces $\delta a = 0$ in all nonzero amplitudes. This general feature of a Lagrange multiplier will generally apply in all situations we consider, and in the future we'll often write the partition functions directly as sums over closed cochains.

A consequence is that the sum above reduces to
\begin{equation}
    Z_{t.c.}(\Phi) = |C_2(M,\Z_2)| \sum_{a \in C^1(M,\Z_2) | \delta a = 0} 1 =  |C_2(M,\Z_2)|  |Z^1(M,\Z_2)|
\end{equation}
which is a constant positive number for each $\Phi$. As such, we can write the full ground-state wavefunction on a simply-connected spatial manifold $N$ as:
\begin{equation}
\ket{\Psi} \propto \sum_{\tilde{a} \in C^1(N,\Z_2)} \ket{\delta \tilde{a}}
\end{equation}
where $\ket{\delta \tilde{a}}$ is the state with respect to the boundary conditions $\Phi = \delta \tilde{a}$, or equivalently, $\Phi(e) = (\delta \tilde{a})(e)$.

As such, the toric code is one of the simplest `topological orders' because the ground-state wavefunction on a simply connected space is a sum over all closed loop configurations weighted with the same amplitude. In particular, the \textit{local} constraint on these wavefunction amplitudes is trivial. Letting ${\bf v} \in C^0(N, \Z_2)$ be an indicator cochain that's nonzero on only one vertex, we'll have:
\begin{equation} \label{localConstraintToricCode}
    \braket{\Phi + \delta {\bf v} | \Psi} = \braket{\Phi | \Psi}.
\end{equation}
On a non-simply connected space, it is precisely this local constraint that we take to be the \textit{definition} of the toric-code topological order. One consequence of this is that for any closed spatial manifold $N$ there are really $|H_1(N,\Z_2)|$ independent wavefunctions $\ket{\Psi}$ that satisfy the above constraint, which correspond exactly to the $\Z_2$ functions on cohomology classes. 

Now, we're in a position to describe the toric code Hamiltonian. We'll express the Hamiltonian in terms of the \textit{dual cellulation} so that faces $f$ of the original cellulation are dual to vertices $f^\vee$ on the dual, and vertices $v$ of the original cellulation are dual to faces $v^\vee$
\begin{equation}
\begin{split}
    H_{t.c.} = - \sum_{f^\vee \in \substack{\text{dual} \\ \text{vertices}}} U^{t.c.}_{f^\vee}  - \sum_{v^\vee \in \substack{\text{dual} \\ \text{faces}}} W^{t.c.}_{v^\vee} \text{, where} \\
    U^{t.c.}_{f^\vee} := \prod_{e^\vee \supset f^\vee} Z_{e^\vee}, \quad\quad W^{t.c.}_{v^\vee} := \prod_{e^\vee \subset v^\vee} X_{e^\vee}.
\end{split}
\end{equation}
Here, $e^\vee \supset f^\vee$ are all the (dual) edges containing the vertex $f^\vee$, and $e^\vee \subset v^\vee$ are the (dual) edges at the boundary of $v^\vee$. And, in the basis $\{\ket{0}_{e^\vee},\ket{1}_{e^\vee}\}$ of each dual edge, we have $Z_{e^\vee} = \begin{pmatrix}1 & 0 \\ 0 & -1\end{pmatrix}$ and $X_{e^\vee} = \begin{pmatrix}0 & 1 \\ 1 & 0\end{pmatrix}$. 
Note that each term $U^{t.c.},W^{t.c.}$ in the Hamiltonian commute, which means that the above is a \textit{local commuting projector Hamiltonian}. As such, any ground state of $H_{t.c.}$ is in the shared ground state of each term $-U^{t.c.}_{f^\vee}$ and $-W^{t.c.}_{v^\vee}$. 

The first set of `star terms' enforce that $U^{t.c.}_{f^\vee} \ket{\Psi} = \ket{\Psi}$. As such, let $\Phi \in C^1(M,\Z_2)$ be such that $\braket{\Phi | \Psi} \neq 0$. Then,
\begin{equation}
    \bra{\Phi} U^{t.c.}_{f^\vee} \ket{\Psi} = \bra{\Phi} \prod_{e^\vee \supset f^\vee} Z_{e^\vee} \ket{\Psi} = \braket{\Phi | \Psi}
\end{equation}
while also
\begin{equation}
    \ket{\Phi} = \prod_{e^\vee \supset f^\vee} Z_{e^\vee} \ket{\Phi} = (-1)^{(\delta \Phi)(f)} \ket{\Phi}.
\end{equation}
The above two together require that $\delta \Phi \equiv 0$, so that $\Phi$ is a cocycle. 

The second set of `flux terms' enforce that each $W^{t.c.}_{v^\vee} \ket{\Psi} = \ket{\Psi}$, so that if $\braket{\Phi | \Psi} \neq 0$, then
\begin{equation}
  \bra{\Phi} W^{t.c.}_{v^\vee} \ket{\Psi} = \bra{\Phi}\prod_{e^\vee \subset v^\vee} X_{e^\vee} \ket{\Psi} = \braket{\Phi | \Psi}
\end{equation}
Also note that 
\begin{equation}
    \prod_{e^\vee \subset v^\vee} X_{e^\vee} \ket{\Phi} = \ket{\Phi + \delta {\bf v}}
\end{equation}
Note that the above two equations exactly correspond to the local constraints Eq.~\eqref{localConstraintToricCode} for each $v$. 

In general dimension $d$, we'll have that the toric code looks quite similar. We can write down essentially the same $(d+1)$ dimensional action
\begin{equation}
    S_{t.c.} = \frac{1}{2} \int_{M^{d+1}} (a)(\partial b)
\end{equation}
for the toric code, except where now the partition functions will sum over $a \in C^{d-1}(M,\Z_2)$ with a Lagrange multiplier $b \in C_{d}(M,\Z_2)$ that now sums over $d$-chains to enforce $\delta a = 0$. And in the same way, we can impose boundary conditions $\Phi$ that give the restrictions of $a$ onto the boundary of the manifold, and we'd get again
\begin{equation}
    Z_{t.c.}(\Phi) = |C_{d}(M,\Z_2)| \sum_{a \in C^1(M,\Z_2) | \delta a = 0} 1 =  |C_d(M,\Z_2)|  |Z^{d-1}(M,\Z_2)|
\end{equation}
so that amplitudes of a ground state wave-function would be a sum over closed loop configurations on the dual lattice all weighted with equal positive amplitudes.

And again, one can engineer an entirely analogous Hamiltonian
\begin{equation}
\begin{split}
    H_{t.c.} = - \sum_{(\msquare_{d})^\vee \in \substack{\text{dual} \\ \text{vertices}}} U^{t.c.}_{(\msquare_{d})^\vee}  - \sum_{(\msquare_{d-2})^\vee \in \substack{\text{dual} \\ \text{2-cells}}} W^{t.c.}_{(\msquare_{d-2})^\vee} \text{, where} \\
    U^{t.c.}_{(\msquare_{d})^\vee} := \prod_{(\msquare_{d-1})^\vee \subset (\msquare_{d})^\vee} Z_{(\msquare_{d-1})^\vee}, \quad\quad W^{t.c.}_{(\msquare_{d-1})^\vee} := \prod_{(\msquare_{d-1})^\vee \supset (\msquare_{d-2})^\vee} X_{(\msquare_{d-1})^\vee}.
\end{split}
\end{equation}
which with the charge terms $U$ enforce that the dual loop configurations are closed and the flux terms $W$ give local constraints so that all amplitudes so that in some ground sate, loop configurations in the same cohomology class are given equal amplitudes.

See Fig.~\ref{fig:toricCode_Terms} for a depiction of the terms in the Hamiltonian.

\begin{figure}[h!]
  \centering
  \includegraphics[width=\linewidth]{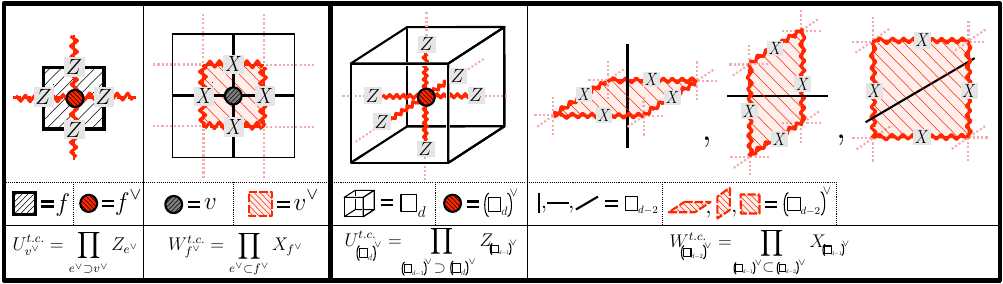}
  \caption{The different terms that occur in the $\Z_2$ toric code Hamiltonian. (Left) Two dimensions (Right) Three dimensions, with notation written for arbitarary dimensions.}
  \label{fig:toricCode_Terms}
\end{figure}

The $\Z_n$ toric code will also be entirely analogous except it's defined in terms of $\Z_n$ cochains rather than $\Z_2$ ones. We'll define the action in $(d+1)$-dimensions as
\begin{equation}
    S_{\Z_n t.c.} = \frac{1}{n} \int_{M^{d+1}} (a)(\partial b)
\end{equation}
in terms of the arguments $a \in C^1(M,\Z_n)$ and $b \in C_d(M,\Z_n)$ with boundary conditions $\restr{b}{\partial M} = 0$. And again we impose boundary conditions $\Phi \in C^1(\partial M, \Z_n)$ so that $\Phi = \restr{a}{\partial M}$. And again, we'd have that
\begin{equation}
    Z_{\Z_n t.c.}(\Phi) = |C_{d}(M,\Z_n)| \sum_{a \in C^1(M,\Z_n) | \delta a = 0} 1 =  |C_d(M,\Z_n)|  |Z^1(M,\Z_n)|
\end{equation}
which would enforce that the ground state is a sum over all closed $\Z_n$-valued cochains.

And again, using the $\cup'$ formalism over $\Z$ in Sec.~\ref{sec:cochainsOverZ}, it is straightforward to write down a Hamiltonian
\begin{equation}
     H_{\Z_n t.c.} = - \sum_{(\msquare_{d})^\vee \in \substack{\text{dual} \\ \text{vertices}}} U^{\Z_n t.c.}_{(\msquare_{d})^\vee}  - \sum_{(\msquare_{d-2})^\vee \in \substack{\text{dual} \\ \text{2-cells}}} W^{\Z_n t.c.}_{(\msquare_{d-2})^\vee}
\end{equation}
for which the $U^{\Z_n t.c.}$ enforce that ground states are dual to closed $\Z_n$-cochains and for which $W^{\Z_n t.c.}$ give that cohomologous cochains are given equal weights in the wavefunctions.

\subsection{ Three-fermion Walker-Wang model}


In this section, we reproduce the lattice Hamiltonian for the 3-fermion Walker-Wang model \cite{WW12, BCFV14}. This model is constructed from the 3-fermion braided fusion category, which is an abelian category with three nontrivial objects $\{e,m,\epsilon\}$ with fusion rules
\begin{equation}
    e \times e = m \times m = \epsilon \times \epsilon = 1 \quad,\quad e \times m = \epsilon.
\end{equation}
The $F$ symbols are trivial, and the $R$-symbols are given by
\begin{equation}
\begin{split}
    R_{\mu, \mu} &= -1, \text{ for all } \mu = e,m,\epsilon \\
    R_{e,m} &= R_{m,\epsilon} = R_{\epsilon,e} = -1 \\
    R_{m,e} &= R_{\epsilon,m} = R_{e,\epsilon} = +1.
\end{split}
\end{equation}
The $R_{\mu \mu} = -1$ tell us that all the nontrivial particles are self-fermions. 

A bulk action corresponding to this category can be found using the Crane-Yetter state-sum, whose Hamiltonian formulation is given by the corresponding Walker-Wang model. First, note that since the category is abelian, one can define the assignments of $e,m,\epsilon$ particles by two closed cochains $A,B \in Z^2(M^4,\Z_2)$. On a 2-cell, $A=B=0$ means placing an identity particle on the 2-cell, $A=1,B=0$ means place an $e$ particle, $A=0,B=1$ means place an $m$ particle, and $A=B=1$ means place an $e \times m = \epsilon$ particle. Then, the amplitudes of the bulk action are given by the $15j$ symbols of the assigments multiplied over all the 4-simplices. Writing out the $15j$ symbols shows that the amplitude to such an assignment of $A,B$ is exactly 
\begin{equation}
    S_{\text{3-fermion}}(A,B) = \frac{1}{2} \int_{M^4} A \cup A + B \cup B + A \cup B
\end{equation}
since each $15j$ symbol on $\braket{01234}$ evaluates to $(-1)^{(A \cup A + B \cup B + A \cup B)(\braket{01234})} = (-1)^{A(012)A(234) + B(012)B(234) + A(012)B(234)}$. See also \cite{KT14, HLS19, HJJ21} for more general actions of theories with 1-form or higher-group symmetries. On a closed manifold, one can evaluate a partition function on closed $M^4$ as 
\begin{equation}
\begin{split}
    Z(M^4) &\propto \sum_{A,B \in Z^2(M^4,\Z_2)} (-1)^{\int_M A \cup A + B \cup B + A \cup B} \\
    &= \sum_{A,B \in Z^2(M^4,\Z_2)} (-1)^{\int (w_2 + w_1^2) \cup A + B \cup B + A \cup B} \\
    &= \sum_{A,B \in Z^2(M^4)} (-1)^{\int_{M^4} (w_2 + w_1^2 + B) \cup A + B \cup B} \\
    &= (-1)^{\int_{M^4} (w_2 + w_1^2) \cup (w_2 + w_1^2)} = (-1)^{\int_{M^4} w_2^2 + w_1^4}.
\end{split}
\end{equation}
Above, the second line used the Wu relation $\int_{M^4} A \cup A = \int_{M^4} (w_2 + w_1^2) \cup A$, and the fourth line notes that the sum over $A$ acts as a Lagrange multiplier setting $B = (w_2 + w_1^2)$. See also~\cite{barkeshli2019_nonorientable} for an alternate perspective on this result in terms of `anomaly indicators'. 

\subsubsection{Bulk Hamiltonian}
Now, we construct the Hamiltonian of this model and show that indeed it is the same as the one presented in Ref.~\cite{BCFV14,HFH18}. Note that we now need to consider two sets of $\Z_2$ degrees of freedom on the spatial manifold, $\Phi_1,\Phi_2 \in Z^2(N^3,\Z_2)$ that represent the boundary conditions of $A,B$ respectively. Now in our procedure, we consider the situation where both the bulk $M^4$ and boundary $N^3$ are topologically trivial, so that $A = \delta a$ and $B = \delta b$ in the bulk manifold. Then, with respect to the boundary conditions $\Phi_1 = \restr{A}{N} = \restr{\delta a}{N},\Phi_2 = \restr{B}{N}=\restr{\delta b}{N},$ the ground state $\ket{\Psi}$ on $N$ will have amplitudes
\begin{equation}
\begin{split}
    \braket{\Phi_1, \Phi_2 | \Psi} &\propto (-1)^{\int_M (\delta a) \cup (\delta a) + (\delta b) \cup (\delta b) + (\delta a) \cup (\delta b)} \\
    &= (-1)^{\int_M \delta(a \cup \delta a + b \cup \delta b + a \cup \delta b)} \\
    &= (-1)^{\int_N a \cup \delta a + b \cup \delta b + a \cup \delta b}.
\end{split}
\end{equation}
From here, we compute the variations of these amplitudes with respect to $\Phi_1 \to \Phi_1 + \delta \lambda_1$ and $\Phi_2 \to \Phi_2 + \delta \lambda_2$ which correspond to $a \to a + \lambda_1$ and $b \to b + \lambda_2$ respectively. These have the form
\begin{equation} \label{eq:threeFermion_Flux1}
\begin{split}
    \int_N & \big( (a + \lambda_1) \cup \delta(a + \lambda_1) + b \cup b + (a + \lambda_1) \cup \delta b \big) -\big( a \cup a + b \cup b + a \cup \delta b \big) = \int_N a \cup \delta \lambda_1 + \lambda_1 \cup \delta a + \lambda_1 \cup \delta \lambda_1 + \lambda_1 \cup \delta b \\
    &= \int_N \delta a \cup \lambda_1 + \lambda_1 \cup \delta a + \lambda_1 \cup \delta \lambda_1 + \lambda_1 \cup \delta b + \int_{\partial N} a \cup \lambda_1,
\end{split}
\end{equation}
and
\begin{equation} \label{eq:threeFermion_Flux2}
\begin{split}
    \int_N & \big( a \cup \delta a + (b + \lambda_2) \cup (b + \lambda_2) + a \cup \delta (b + \lambda_2) \big) - \big( a \cup a + b \cup b + a \cup \delta b \big) = \int_N b \cup \delta \lambda_2 + \lambda_2 \cup \delta b + \lambda_2 \cup \delta \lambda_2 + a \cup \delta \lambda_2 \\
    &= \int_N \delta b \cup \lambda_2 + \lambda_2 \cup \delta b + \lambda_2 \cup \delta \lambda_2 + \delta a \cup \lambda_2 + \int_{\partial N} (a + b) \cup \lambda_2,
\end{split}
\end{equation}
where the second lines of each of the above use an integration-by-parts. For now, we'll ignore these boundary integrals over $\partial N$ and interpret them later in the discussion of the boundary Hamiltonian.

At least for the bulk terms, we've expressed all the changes in amplitudes strictly in terms of $\Phi_1 = \delta a$ and $\Phi_2 = \delta b$, which means we have expressions for how the wavefunction amplitudes change with respect to local changes of the loop states. Our space of states will be two $\Z_2$ degrees $\{\ket{0}_{A, \msquare_2^\vee},\ket{1}_{A,\msquare_2^\vee}\}$ and $\{\ket{0}_{B, \msquare_2^\vee},\ket{1}_{B,\msquare_2^\vee}\}$ of freedom on each dual edge, dual to the 2-dimensional plaquettes that describe the cochains.  And, there is the associated algebra of Pauli matrices $X_{A,\msquare_2^\vee},X_{B,\msquare_2^\vee}$ and $Z_{A,\msquare_2^\vee},Z_{B,\msquare_2^\vee}$.




First, the Hamiltonian will have charge (or star) terms $U^{\text{3-fermion}}_{\msquare^\vee_3,\{A,B\}}$, at each vertex $\msquare^\vee_3$ dual to a 3-cube, that enforce $\delta \Phi_1 = \delta \Phi_2 = 0$. Next, there will be flux (or plaquette) terms $W^{\text{3-fermion}}_{\msquare^\vee_1,\{A,B\}}$ for each possible change of wave-function cochain $\Phi_1 \to \Phi_1 + \delta \bmsquare_1$ or $\Phi_2 \to \Phi_2 + \delta \bmsquare_1$ for $\bmsquare_1 \in C^1(N,\Z_2)$ the indicator cochains representing changes in the different $\Z_2$ degrees of freedom. These together create a Hamiltonian
\begin{equation}
    H^{\text{3-fermion}} = -\sum_{\msquare_3} U^{\text{3-fermion}}_{\msquare^\vee_3,A} -\sum_{\msquare_3} U^{\text{3-fermion}}_{\msquare^\vee_3,B} -\sum_{\msquare_1} W^{\text{3-fermion}}_{\msquare^\vee_1,A} -\sum_{\msquare_1} W^{\text{3-fermion}}_{\msquare^\vee_1,B}
\label{eq: 3fWW original Hamiltonian}
\end{equation}
whose terms are depicted in Fig.~\ref{fig:threeFermion_Terms}. Note that these terms match the previous 3-fermion Walker-Wang model construction \cite{BCFV14} exactly.

\begin{figure}[h!]
  \centering
  \includegraphics[width=\linewidth]{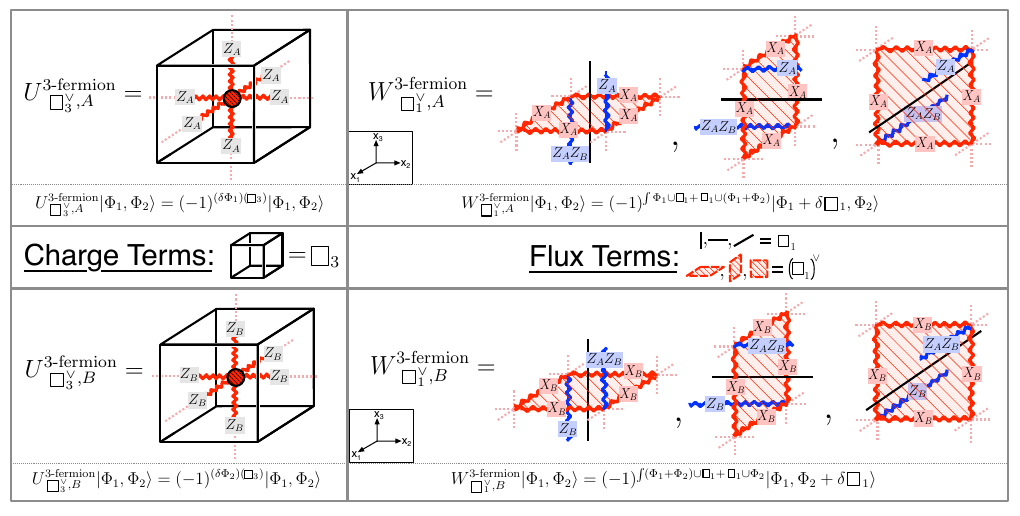}
  \caption{Terms in the Three-Fermion Walker-Wang model in terms of hypercubic cochain operations, following the flux terms Eqs.~(\ref{eq:threeFermion_Flux1},\ref{eq:threeFermion_Flux2}) and Fig.~\ref{fig:cupProductImages}}
  \label{fig:threeFermion_Terms}
\end{figure}

\subsubsection{Boundary Hamiltonian}
In the derivation of the bulk Hamiltonian above, a key step in getting the right wave-function amplitudes was an integration-by-parts on the second lines of Eqs.~(\ref{eq:threeFermion_Flux1},\ref{eq:threeFermion_Flux2}). However, the integration-by-parts left some boundary terms that would affect the required plaquette terms on the boundary. 

Before deriving boundary terms for the Hamiltonian, it's helpful to have a short digression on how to think about cochains in the presence of a boundary. Recall that in the discussion above, $a,b \in C^1(N,\Z_2)$ are 1-cochains that are dual to two-dimensional sheets in the bulk, whereas $\delta a, \delta b \in Z^2(N,\Z_2)$ are 2-cochains dual to one-dimensional lines. However the restrictions $\restr{\{a,b\}}{\partial N}$ and $\restr{\{\delta a, \delta b\}}{\partial N}$ to $\partial N$ will still be 1-cochains and 2-cochains respectively, which will instead be dual to 1-dimensional lines and 0-dimensional points on $\partial N$. This has the interpretation that the loop configurations associated to $\delta a$ in the bulk $N$ are continued onto some lines associated to $a$ on $\partial N$. See Fig.~\ref{fig:cochainsOnBoundary} for a depiction.

\begin{figure}[h!]
  \centering
  \includegraphics[width=0.4\linewidth]{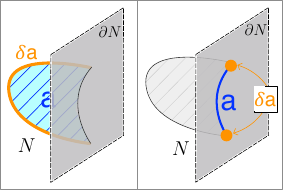}
  \caption{Here, $a \in C^1(N,\Z_2)$ and $\delta a \in C^2(N,\Z_2)$ are sheets and lines in the bulk manifold $N$. However restricted to $\partial N$, $a$  becomes a line that continues the bulk $\delta a$ and $\delta a$ on the boundary becomes the endpoints of the boundary line.}
  \label{fig:cochainsOnBoundary}
\end{figure}

As such, in our Hamiltonian formulation, the lines of the string-net restricted to $\partial N$ will actually be represented by the cochain $a$ that represented the bulk sheets. In this section we consider wave-functions formed by the basis of all of the dual one-dimsional edges on $N, \partial N$, which would be represented by the Hilbert space
\begin{equation}
    \mathcal{H}_{\text{3-fermion}} = \mathrm{span} \left( \Big\{\ket{\delta a, \delta b; \restr{a}{\partial N}, \restr{b}{\partial N}} \Big| a,b \in C^1(N,\Z_2) \Big\} \right).
\end{equation}

Another way to think of the boundary terms is to consider an `augmented' lattice. For example, if we take $N$ to be the half of the hypercubic lattice, which is the full lattice restricted to one $x_i \le 0$, then we could create an $N^{\text{aug}}$ by adding an extra `row' of $x_i = 1$ to the lattice and considering cochains on that augmented lattice. In addition, one needs to impose some boundary conditions that all cochains involving any coordinate with $x_i > 0$ should be set to zero. Then, the cochains on $(N,\partial N)$ will be in one-to-one correspondence with cochains on $N^{\text{aug}}$ subject to the cochains vanishing if they involve $x_i > 0$. This correspondence is illustrated in Fig.~\ref{fig:dualLatticeOnBoundary}.

\begin{figure}[h!]
  \centering
  \includegraphics[width=\linewidth]{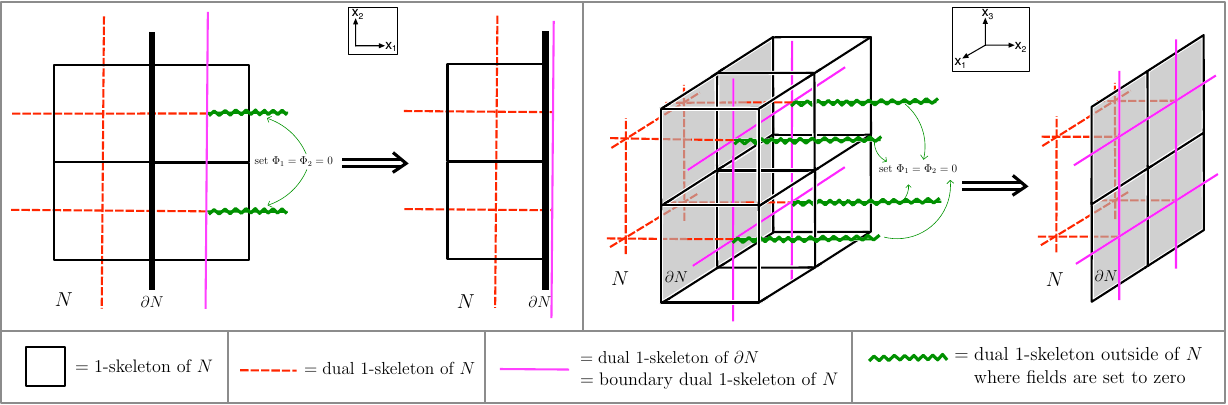}
  \caption{Cochains on $(N,\partial N)$ can be represented by augmenting $N$ by one extra row and considering cochains on the augmented lattice subject to boundary conditions of zero. In our case, we set the fields $\Phi_1 \equiv \delta a = 0$ and $\Phi_2 \equiv \delta b = 0$. (Left) The two-dimensional case. (Right) The three-dimensional case.}
  \label{fig:dualLatticeOnBoundary}
\end{figure}

Now we are in a position to list out the boundary terms implied by the Eqs.~(\ref{eq:threeFermion_Flux1},\ref{eq:threeFermion_Flux2}). They are depicted in Fig.~\ref{fig:threeFermion_BoundaryTerms}. They can be derived from similar considerations as the bulk terms we derived previously. The charge terms enforce that bulk and boundary lines are closed loops when added together. 

\begin{figure}[h!]
  \centering
  \includegraphics[width=\linewidth]{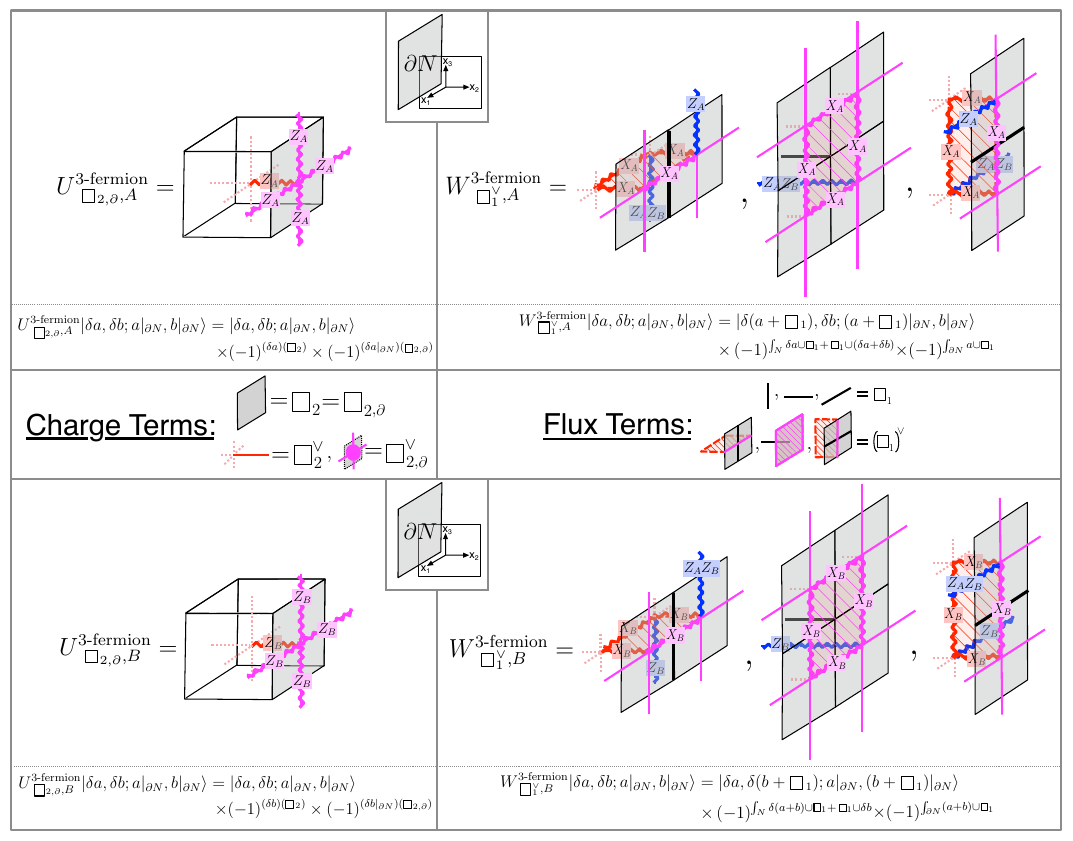}
  \caption[]{Boundary terms of the 3-fermion model subject to a termination in the $(x_1,x_3)$ plane. Here, we illustrate the `charge terms' in terms of the squares $\msquare_{2,\partial}$ on $\partial N$. }
  \label{fig:threeFermion_BoundaryTerms}
\end{figure}

We illustrated these terms in terms of the relevant cochains on $(N,\partial N)$. However, we note that we could just as well think of them as cochains on the augmented lattice $N^{\text{aug}}$; these terms are precisely the bulk terms of Fig.~\ref{fig:threeFermion_Terms} subject to the boundary conditions $a = b = 0$ outside of $N$.

In~\cite{HFH18}, we note that essentially the same set of terms were found to be essentially the unique possible boundary terms for the 3-fermion model \footnote{Actually, the terms they listed on the boundary correspond to our $W_{\msquare^\vee_1, A}^{\text{3-fermion}}$ and $W_{\msquare^\vee_1, A}^{\text{3-fermion}} \times W_{\msquare^\vee_1, B}^{\text{3-fermion}}$, which correspond to the same space of ground states.}.

\subsubsection{Relation to Quantum Cellular Automaton (QCA)}

In Ref. \cite{HFH18}, the 3-fermion Walker-Wang model is used to construct a quantum cellular automaton (QCA). To define the QCA, we need to modify the Hamiltonian such that all of the charge and terms are independent, that no product of such terms is the identity. In the current expression \eqref{eq: 3fWW original Hamiltonian}, the star terms and plaquette terms are not independent (one can check that the product of plaquette terms on faces of a cube is equal to a product of star terms). Ref. \cite{HFH18} introduce the polynomial method and an algorithm to eliminate this redundancy. The final Hamiltonian contains only the plaquette terms expressed by polynomials, which can't be visualized on the cubic lattice easily. In the following, we are going to use the $\cup_1$ product identity on \eqref{eq:threeFermion_Flux1} and \eqref{eq:threeFermion_Flux2} and get modified plaquette terms such that they are non-redundant and complete (able to generate original star terms and plaquette terms).

From \eqref{eq:threeFermion_Flux1} with $\lambda_1 = \bmsquare_1$ (the 1-cochain with value $1$ on the edge $\msquare_1$ and $0$ otherwise) on closed $N$ ($\partial N = 0$), we have
\begin{eqs}
    &\int_N \delta a \cup \bmsquare_1 + \bmsquare_1 \cup \delta a + \bmsquare_1 \cup \delta \bmsquare_1 + \bmsquare_1 \cup \delta b \\
    =& \int_N \delta \bmsquare_1 \cup_1 \delta a + \bmsquare_1 \cup \delta b,
\end{eqs}
where we have used $ \bmsquare_1 \cup \delta \bmsquare_1 = 0$ and the recursive relation $\delta (\alpha \cup_1 \beta) = \delta \alpha \cup_1 \beta + \alpha \cup_1 \delta \beta + \alpha \cup \beta + \beta \cup \alpha$. The new plaquette term is
\begin{eqs}
    W'_{\msquare_1^\vee, A} = \prod_{f|\msquare_1 \in \partial f} X_{A,f} \prod_{f'} Z_{A,f'}^{\int_N \delta \bmsquare_1 \cup_1 \bface'}
    \prod_{f''} Z_{B,f''}^{\int_N \bmsquare_1 \cup \bface''},
\label{eq: new plaquette term 1}
\end{eqs}
or equivalently
\begin{eqs}
    W'_{\msquare_1^\vee, A} \ket{\Phi_1, \Phi_2} = (-1)^{\int \delta \bmsquare_1 \cup_1 \Phi_1 + \bmsquare_1 \cup \Phi_2} \ket{\Phi_1 + \delta \bmsquare_1, \Phi_2}.
\end{eqs}
Similarly, we have
\begin{eqs}
    W'_{\msquare_1^\vee, B} = \prod_{f|\msquare_1 \in \partial f} X_{B,f} \prod_{f'} Z_{B,f'}^{\int_N \delta \bmsquare_1 \cup_1 \bface'}
    \prod_{f''} Z_{A,f''}^{\int_N \bface'' \cup \bmsquare_1},
\label{eq: new plaquette term 2}
\end{eqs}
or equivalently
\begin{eqs}
    W'_{\msquare_1^\vee, B} \ket{\Phi_1, \Phi_2} = (-1)^{\int \delta \bmsquare_1 \cup_1 \Phi_2 + \Phi_2 \cup \bmsquare_1 } \ket{\Phi_1 , \Phi_2 + \delta \bmsquare_1}.
\end{eqs}
These plaquette terms are drawn in Fig.~\ref{fig:threeFermion_AlternateTerms}.

\begin{figure}[h!]
  \centering
  \includegraphics[width=0.7\linewidth]{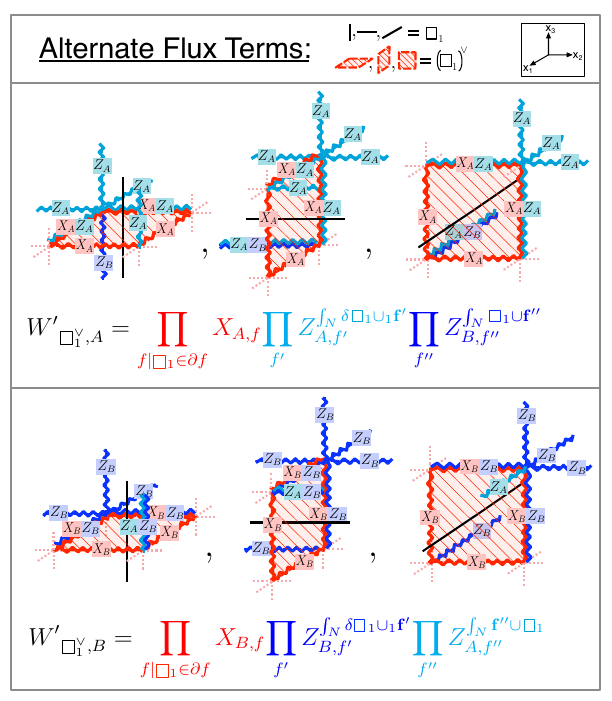}
  \caption[]{Alternate Flux terms $W'_{\msquare_1^\vee, A}, W'_{\msquare_1^\vee, B}$ of the 3-fermion model, derived using the $\cup_1$ product.}
  \label{fig:threeFermion_AlternateTerms}
\end{figure}

On the cubic lattice, these new plaquette can generate the original star terms and plaquette terms. We will derive this explicitly:
\begin{enumerate}
    \item We consider the product of $W'_{\msquare_1^\vee, A}$ on the six faces of a dual cube. The product is exactly a single $Z_B$ star term, which can be computed by directly multiplying the new plaquette terms in Fig.~\ref{fig:threeFermion_AlternateTerms}, which corresponds to replacing $\bmsquare_1 \ra \delta \bmsquare_0$ ($\msquare_0$ is the center point of the dual cube). 
    in \eqref{eq: new plaquette term 1}.
    Notice that this argument only holds for the cubic lattice, not for arbitrary triangulations. The special property of the cubic lattice is that there is 1-1 correspondence between vertices and cubes. For arbitrary triangulations, the product of $W'_{\msquare_1^\vee, A}$ may contain multiple $Z_B$ star terms.
    \item Similarly, the product of $W'_{\msquare_1^\vee, B}$ on the six faces of a dual cube gives a single $Z_A$ star term, so we are able to generate all star terms.
    \item Notice that the new plaquette terms in Fig.~\ref{fig:threeFermion_AlternateTerms} and original plaquette terms in Fig.~\ref{fig:threeFermion_Terms} are differed by a star term. By multiplying the proper star term on the new plaquette term, we are able to generate all original plaquette terms.
\end{enumerate}

We have shown that the 3-fermion Walker-Wang model can be expressed by the new plaquette terms (without star terms) using the $\cup_1$ product on the cubic lattice. In fact, we can show that these new plaquette terms are locally flippable separators. The flipper can be drawn explicitly and we can check that the QCA has order 2 without using computers. We leave the discussion of QCA construction to a future work~\cite{myfuturework}.

\subsection{Double-Semion model on a square lattice} \label{sec:doubleSemionSquareLattice}

In this section, we show how the cup product construction can help us phrase the double-semion model on a square lattice. Ref.~\cite{wangEtAl2019} recently gave a definition of the double-semion model on a square lattice that's equivalent, up to local unitary transformations, to the original double-semion~\cite{levinGu2012}. In particular, we construct a Hamiltonian equivalent to that of~\cite{wangEtAl2019} up to a local unitary transformation, and we construct the Hamiltonian associated to it.

\subsubsection{The Levin-Gu SPT}
We recall that the double-semion model is obtained by \textit{gauging} the $\Z_2$ symmetry associated to the $\Z_2$-SPT known as the Levin-Gu model~\cite{levinGu2012}. The Levin-Gu SPT is associated to a microscopic Hilbert space $\{\ket{0}_v,\ket{1}_v\}$ on each \textit{vertex} $v$, typically on some \textit{triangulated} manifold. The ground-state $\ket{\Psi}_\text{Levin-Gu}$ of this SPT is prepared from the trivial product state $\ket{\psi_0} = \bigotimes_v \frac{1}{\sqrt{2}}\big(\ket{0}_v + \ket{1}_v \big)$. Note $\ket{\psi_0}$ is an equal superposition of a product of $\ket{0},\ket{1}$ states on each vertex, so can be expressed as
\begin{equation}
    \ket{\psi_0} \propto \sum_{\tilde{a} \in C^0(N,\Z_2)} \ket{\tilde{a}}
\end{equation}
where $\ket{\tilde{a}}$ is the product-state $\ket{\tilde{a}} = \bigotimes_{v \in \text{vertices}} \ket{\tilde{a}(v)}_v$. Then, the Levin-Gu state is constructed as acting on the trivial state above with the ``$Z \cdot CZ \cdot CCZ$'' circuit~\cite{TV18}, defined as:
\begin{equation}
    ``Z \cdot CZ \cdot CCZ\text{''} := \left( \prod_{v \in \text{vertices}} Z_v \right)
\left( \prod_{\braket{v_0 v_1} \in \text{edges}} CZ_{\braket{v_0 v_1}} \right)
\left( \prod_{\braket{v_0 v_1 v_2} \in \text{triangles}} CCZ_{\braket{v_0 v_1 v_2}} \right)
\end{equation}
where $Z_v \ket{\tilde{a}} = (-1)^{\tilde{a}(v)} \ket{a}$, $CZ_{\braket{v_0 v_1}} \ket{a} = (-1)^{\tilde{a}(v_0) \tilde{a}(v_1)} \ket{\tilde{a}}$, and $CCZ_{\braket{v_0 v_1 v_2}} \ket{\tilde{a}} = (-1)^{\tilde{a}(v_0)\tilde{a}(v_1)\tilde{a}(v_2)} \ket{\tilde{a}}$. Then, the Levin-Gu SPT is given by
\begin{equation}
\ket{\Psi}_\text{Levin-Gu} =
 ``Z \cdot CZ \cdot CCZ\text{''} \ket{\psi_0}.
\end{equation}
The ground state is still an equal-probability superposition over all $\{\ket{\tilde{a}}\}_{\tilde{a} \in C^0(N,\Z_2)}$, except that the individual amplitudes are given a $\pm$ sign attached to each $\ket{\tilde{a}}$. In Appendix~\ref{app:cochainActionGenDoubleSemion}, we show that on an arbitrary 2-manifold triangulation $N$, the particular amplitudes have the form \cite{FH16, debray2019, FHHT20} \footnote{The generalized double semions model is proposed by Freedman and Hastings~\cite{FH16}. Its TQFT description and the relation with Stiefel-Whitney classes are studied in Ref.~\cite{debray2019}. The combinatorial formula under barycentric subdivision is proved in Ref.~\cite{FHHT20}.}:
\begin{equation}
\ket{\Psi}_\text{Levin-Gu} \propto \sum_{a \in C^0(N,\Z_2)} (-1)^{\int_N \tilde{a} \cup \delta \tilde{a} \cup \delta \tilde{a} + (w_1)(\tilde{a} \cup \delta \tilde{a}) + (w_2)(\tilde{a})} \ket{\tilde{a}}
\end{equation}
where $w_1 \in C_1(N,\Z_2)$, $w_2 \in C_0(N,\Z_2)$ are canonical chain-level representatives dual to the first and second Stiefel-Whitney classes~\cite{GT76}. Note that the topological action (acting on trivial 1-forms $\delta \tilde{a}$) associated to this wavefunction is
\begin{equation}
    S_{d.s.}(\delta \tilde{a}) = \frac{1}{2} \int_M \delta \tilde{a} \cup \delta \tilde{a} \cup \delta \tilde{a} + w_1(\delta \tilde{a} \cup \delta \tilde{a}) + w_2(\delta \tilde{a})
\end{equation}
which reproduces the ground-state amplitudes as above, since on a manifold with boundary, we have:
\begin{equation}
    S_{d.s.}(\delta \tilde{a}) = \frac{1}{2} \int_M \delta \Big( \tilde{a} \cup \delta \tilde{a} \cup \delta \tilde{a} + w_1(\tilde{a} \cup \delta \tilde{a}) + w_2(\delta \tilde{a}) \Big)
    = \frac{1}{2} \int_{N = \partial M} \tilde{a} \cup \delta \tilde{a} \cup \delta \tilde{a} + w_1(\tilde{a} \cup \delta \tilde{a}) + w_2(\tilde{a}).
\end{equation}
The action general cochains would be gotten by \textit{gauging} the above, thus replacing $\delta \tilde{a} \to a$:
\begin{equation}
    S_{d.s.}(a) = \frac{1}{2} \int_M a \cup a \cup a + w_1(a \cup a) + w_2(a).
\end{equation}

For our purposes, we want to generalize the above to fit with our cubical $\cup$ operations. And since the vector fields we use to define the thickening and shifting are constant throughout the cubical complex, we'll have that $w_1 \equiv w_2 \equiv 0$ \footnote{This follows from the obstruction-theoretic characterization of $w_k$ where it is Poincaré dual to the locus where $(d-k+1)$ `generic' vector fields on the manifold become non-independent, which is generically a codimension-$k$ submanifold. Since the frame of vector fields we use in the cubical complex is constant and thus always independent, we set all $w_k \equiv 0$.}. These terms have the interpretation of `gravitational terms', and can be readily converted to a Hamiltonian formulation on triangulations by the same methods presented here using their chain-level represtatives.

We also note that in Appendix~\ref{app:cochainActionGenDoubleSemion}, we derive a cochain-level action for the generalized double-semion model in all dimensions. The same methods presented here can be used to define Hamiltonians in essentially the same way, although we do not write these out explicitly.

\subsubsection{Double-semion model from gauging Levin-Gu SPT}
The double-semion model is related to the Levin-Gu SPT by \textit{gauging} a global $\Z_2$ symmetry. To describe this symmetry and how to gauge it, we first express $\ket{\Psi}_\text{Levin-Gu}$ in its more traditional form in terms of its `domain walls', following more closely the presentation of~\cite{FHHT20}.

Typically, double-semion models are defined on trivalent graphs on surfaces, which are dual to triangulations. In this perspective, each $c \in C^0(N,\Z_2)$ that goes into the wavefunction is usually referred to with respect to its dual regions, where there are regions of space labeled by `$0$' or `$1$' depending on whether $c(v) = 0,1$ in the vertex representing the dual region. Note that on this dual trivalent cellulation, the dual region can always be split up into \textit{separate} regions that don't meet at corners (whereas on a square lattice, for example, one may have that squares diagonal to each other meet at a corner). As such, the ``$Z \cdot CZ \cdot CCZ$'' circuit above acts on a state $\ket{a}$ as:
\begin{equation}
    ``Z \cdot CZ \cdot CCZ\text{''} \ket{a} = (-1)^{N_{v^\vee=1} - N_{e^\vee=1} + N_{f^\vee=1}} \ket{a} = (-1)^{\chi(`1\text{'-region})}.
\end{equation}
Above, we have that each $N_{v^\vee=1},N_{e^\vee=1},N_{f^\vee=1}$ are the number of faces (dual to vertices $v$), edges, vertices (dual to faces $f$) respectively that are part of the region labeled as `1'. And, $\chi(`1\text{'-region})$ is the Euler characteristic of the `1'-region. If the manifold $N$ is simply-connected, one can express the Euler characteristic of these disjoint regions of the plane as:
\begin{equation}
    (-1)^{\chi(`1\text{'-region})} = (-1)^{N_{d.w.}}
\end{equation}
where $N_{d.w.}$ is the number of domain walls between the $0$ and $1$ regions. 

From here, note that all wave-function amplitudes (on a simply connected space) can be expressed entirely in terms of the domain walls between the $0$ and $1$ regions, so they're invariant under switching $0 \leftrightarrow 1$ which is a $\Z_2$ global symmetry generated by the operator $\bigotimes_v X_v$. 

So, to `gauge' the Levin-Gu Hamiltonian means finding a Hamiltonian in terms of a state-space associated to link variables $e=\braket{ij}$, so $\mathcal{H} = \bigotimes_e \mathcal{H}_{e}$ where $\mathcal{H}_e = \mathrm{Span}(\ket{0}_e,\ket{1}_e)$, and such if $N$ is simply-connected or if the domain wall lines are topologically trivial, the ony ground state $\ket{\Psi}_{d.s.}$ is
\begin{equation}
    \ket{\Psi}_{d.s.} \propto \sum_{\Phi \in C^1(N,\Z_2) | \delta \Phi = 0} (-1)^{N_{loops}} \ket{\Phi}
\end{equation}
where $\ket{\Phi}$ is state associated to the $\Phi \in C^1(M,\Z_2)$ as before. Note that we express the amplitude in terms of $N_{loops}$, the number of closed loops in the loop configuration dual to $\Phi$, because gauging a symmetry entails viewing these loops (which used to be domain walls) as the underlying degrees of freedom.

\subsubsection{Local constraints and Hamiltonian on square lattice}
Now, we follow the strategy to find local constraints for the double-semion model to derive a Hamiltonian.  Since we're working on a simply-connected spatial manifold $N$, we can assume that any $\Phi \in Z^1(N,\Z_2) = \delta \tilde{a}$ for some $\tilde{a} \in C^0(N,\Z_2)$. Then following the above (recalling that $w_2 = w_1 = 0$), we'll have that we can express the double-semion ground state as
\begin{equation}
    \ket{\Psi}_{d.s.} \propto \sum_{\tilde{a} \in C^0(N,\Z_2)} (-1)^{\int_N \tilde{a} \cup \delta \tilde{a} \cup \delta a} \ket{\delta \tilde{a}}.
\end{equation}
Note that amplitudes of the above depend on the region $a$ that trivializes $\Phi$ so the amplitude itself is not local in terms of the loop configuration. However, what we really want is an expression such that the phase shift from \textit{local} changes $\ket{\Phi} \to \ket{\Phi + \delta {\bf v}}$ can be expressed entirely in terms of $\Phi$ and the indicator ${\bf v}$. We'll talk in terms of the change $2 (\Delta S)$, where $S$ is the action used to define the amplitude $e^{2 \pi i S}$. In fact, such a local expression does exist in terms of the $\cup_1$ product. Start with
\begin{equation}
\begin{split}
    2(\Delta S) &:= \int (\tilde{a}+{\bf v}) \cup \delta(\tilde{a}+{\bf v}) \cup \delta(\tilde{a}+{\bf v}) - \tilde{a} \cup \delta \tilde{a} \cup \delta \tilde{a} \\
    &= \int {\bf v} \cup \delta \tilde{a} \cup \delta \tilde{a} + \tilde{a} \cup \delta {\bf v} \cup \delta \tilde{a} + {\bf v} \cup \delta {\bf v} \cup \delta \tilde{a} + \tilde{a} \cup \delta \tilde{a} \cup \delta {\bf v} + {\bf v} \cup \delta \tilde{a} \cup \delta {\bf v} + \tilde{a} \cup \delta {\bf v} \cup \delta {\bf v} + {\bf v} \cup \delta {\bf v} \cup \delta {\bf v}
\end{split}
\end{equation}
Then, we can massage the above expression in terms of \textit{only} $\delta a$ as:
\begin{equation} \label{eqn:DoubleSemionAmplitudeChange}
\begin{split}
    2(\Delta S) &= \int {\bf v} \cup (\delta \tilde{a} \cup \delta \tilde{a}) + \delta \tilde{a} \cup {\bf v} \cup \delta \tilde{a} + {\bf v} \cup (\delta {\bf v} \cup \delta \tilde{a}) + (\delta \tilde{a} \cup \delta \tilde{a}) \cup (\delta {\bf v} \cup \delta \tilde{a}) \cup {\bf v} + \delta \tilde{a} \cup \delta {\bf v} \cup {\bf v} + {\bf v} \cup \delta {\bf v} \cup \delta {\bf v} \\
    &= \int \delta {\bf v} \cup_1 (\delta \tilde{a} \cup \delta \tilde{a}) + \delta {\bf v} \cup_1 (\delta {\bf v} \cup_1 \delta \tilde{a}) + \delta \tilde{a} \cup {\bf v} \cup \delta \tilde{a} + (\delta \tilde{a} \cup \delta {\bf v} \cup {\bf v}) + {\bf v} \cup \delta {\bf v} \cup \delta {\bf v} \\
    &= \int \underbrace{\delta \tilde{a} \cup {\bf v} \cup \delta \tilde{a}}_{\text{term i}} + \underbrace{\delta {\bf v} \cup_1 ((\delta \tilde{a} + \delta {\bf v}) \cup \delta \tilde{a})}_{\text{term ii}} + \underbrace{\delta \tilde{a} \cup \delta {\bf v} \cup {\bf v}}_{\text{term iii}} + \underbrace{{\bf v} \cup \delta {\bf v} \cup \delta {\bf v}}_{\text{term i{\bf v}}}.
\end{split}
\end{equation}
Above, the first line used some integrations-by-parts, the second line used the $\cup_1$ identity to combine some terms, and the third line simply combines various terms. At this point, all quantities are in terms of $\delta \tilde{a} = \Phi$.

Up until here, our discussions and expressions in terms of cochains have been completely general. But now, we restrict our attention to the square lattice. In particular, for ${\bf v}$ the indicator on a single vertex of the square lattice, it's a quick computation to show that $\int {\bf v} \cup \delta {\bf v} \cup \delta {\bf v} = 0$. As such, `term iv' becomes zero on the square lattice.

Now, we construct the Hamiltonian $H^{d.s.}$ for the model. As in the other examples, we'll have that the Hamiltonian splits into a set of commuting `star terms' and `flux terms'. The star terms will be the same as the two-dimensional toric-code,
\begin{equation}
    U^{d.s.}_{f^\vee} = U^{d.s.}_{f^\vee} = \prod_{e^\vee \supset (f^\vee)} Z_e
\end{equation}
and are again meant to enforce $\delta \Phi = 0$, so that any term in the ground state must correspond to a loop configuration. The `flux terms' will serve the purpose to constrain $\ket{\Psi}$'s ground state amplitudes as in Eq.~\eqref{eqn:DoubleSemionAmplitudeChange} (with $\delta a \to \Phi$):
\begin{equation}
    \braket{\Phi | \Psi} = \braket{\Phi + \delta {\bf v} | \Psi} (-1)^{\int \Phi \cup {\bf v} \cup \Phi + \delta {\bf v} \cup_1 ((\Phi + \delta {\bf v}) \cup \Phi) + \Phi \cup \delta {\bf v} \cup {\bf v}}
\end{equation}
To engineer such a term in the Hamiltonian, we'll consider a flux term implementing $\ket{\Phi} \to \pm \ket{\Phi + \delta {\bf v}}$ of the form:
\begin{equation}
   \left( \prod_{f^\vee \subset v^\vee} \left( \frac{1 + U^{d.s.}_{f^\vee}}{2} \right) \right) W^{d.s.}_{v^\vee}  \left( \prod_{f^\vee \subset v^\vee} \left( \frac{1 + U^{d.s.}_{f^\vee}}{2} \right) \right).
\end{equation}
Note that each $\prod_{f^\vee \subset v^\vee} \left( \frac{1 + U_{f^\vee}}{2} \right)$ is a projector onto the set of states with $\delta \Phi = 0$ on dual vertices $f^\vee$ that are a subset of $v^\vee$. Given this projector (and using that $\int {\bf v} \cup \delta {\bf v} \cup \delta {\bf v} = 0$), we'll need that the $W^{d.s.}_{v^\vee}$ satisfy
\begin{equation}
    W^{d.s.}_{v^\vee} \ket{\Phi} = \ket{\Phi + \delta {\bf v}} (-1)^{\int \Phi \cup {\bf v} \cup \Phi + \delta {\bf v} \cup_1 ((\Phi + \delta {\bf v}) \cup \Phi) + \Phi \cup \delta {\bf v} \cup {\bf v}}
\end{equation}
for all \textit{closed} $\Phi$. Such flux terms are computed in Fig.~\ref{fig:doubleSemionModel}. In that figure, we have in the $\{\ket{0},\ket{1}\}$ basis that $S = \begin{pmatrix} 1 & 0 \\ 0 & i \end{pmatrix}$ and $\overline{S} = \begin{pmatrix} 1 & 0 \\ 0 & -i \end{pmatrix}$ are two `quarter-phase-gates'. The choice of $W^{d.s.}_{v^\vee}$ we make depends on whether $v$ is on the even or odd sublattice of the square grid. This choice wasn't strictly necessary, but we made this choice to prove our Hamiltonian is equivalent to the one in~\cite{wangEtAl2019}. The equivalence of the flux terms of~\cite{wangEtAl2019} to ours is in Fig.~\ref{fig:doubleSemion_EqualsWangEtAl}.

\begin{figure}[h!]
    \centering
    \begin{minipage}{\textwidth}
        \centering
        \includegraphics[width=\linewidth]{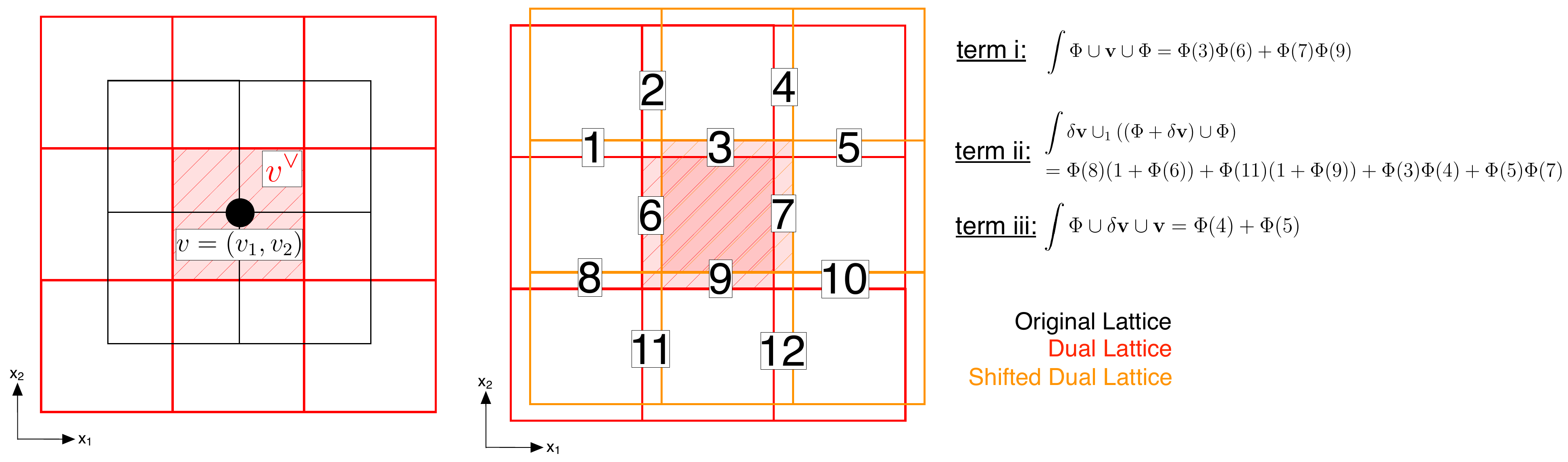}
    \end{minipage}
    \begin{minipage}{\textwidth}
         \centering
         \includegraphics[width=\linewidth]{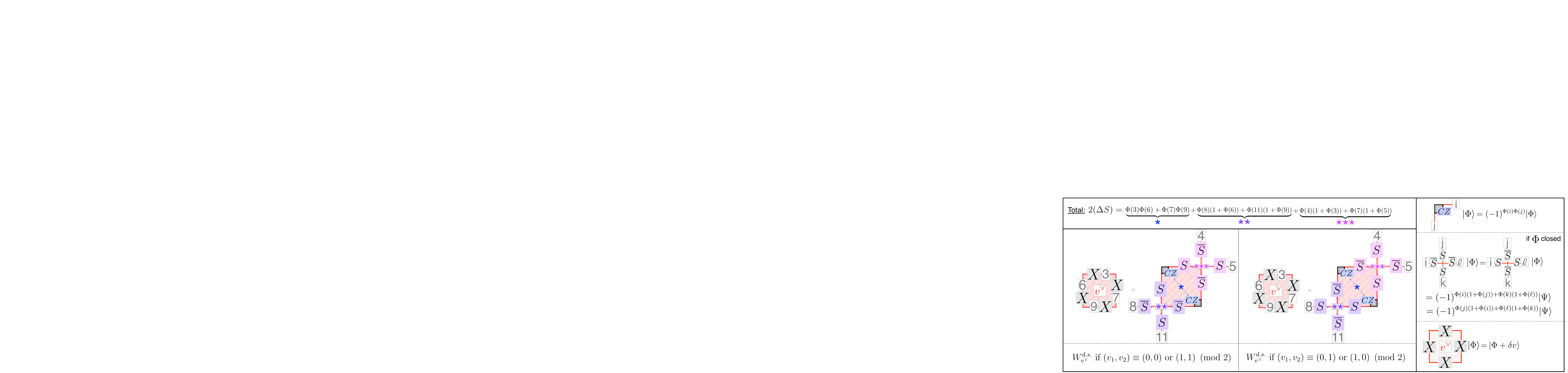}
    \end{minipage}
    \caption{Double-semion model on a square lattice. (Top) Explicitly computing the terms of $\int \Phi \cup {\bf v} \cup \Phi + \delta {\bf v} \cup_1 ((\Phi + \delta {\bf v}) \cup \Phi) + \Phi \cup \delta {\bf v} \cup {\bf v}$ with respect to the relevant dual edges, labeled as $1,\cdots,12$. The shifted lattice is shown for visual aid in considering the cup product pairing, as in Fig.~\ref{fig:cupProductImages}. (Bottom) The choice of $W^{d.s.}_{v^\vee}$ we make, which depend on whether $v$ is on the even or odd sublattices, together with the calculations that show they product the right local change in wave-function amplitude. The equalities involving the phase gates $S,\overline{S}$ require that $\Phi$ be closed and can be verified by case-work. There are many choices of $W^{d.s.}_{v^\vee}$ that reproduce the same expression, but we choose ones that let us match~\cite{wangEtAl2019} up to a local unitary circuit, as in Fig.~\ref{fig:doubleSemion_EqualsWangEtAl}.}
    \label{fig:doubleSemionModel}
\end{figure}

\begin{figure}[h!]
  \centering
  \includegraphics[width=\linewidth]{doubleSemion_EqualsWangEtAl.pdf}
  \caption{(Top-Left) A circuit $V$ consisting of a product of local unitaries. Here, $(0,0),(1,0),(0,1),(1,1)$ in the squares represent the $(x_1,x_2)$ coordinates of the vertices dual to the squares, and the circuit is a product of the shown one but periodically extended throughout the plane. There are two disjoint sets of `zig-zags'. On one of them, there are only $X$ operators. On the other, there are $S$ operators on vertical edges, and $\overline{S}$ operators on the horizontal ones. (Bottom-Left) The flux terms used in~\cite{wangEtAl2019} to define the double-semion model on a square lattice. (Right) Showing that $V W^{\text{Wang et al.}_{v^\vee}} V^\dagger = W^{d.s.}$. This equality uses that $S^\dagger = \overline{S}$, $X S X= i \overline{S}$, $(X_1 X_2) (CZ_{12}) (X_1 X_2) = Z_1 Z_2 CZ_{12}$, and $Z S = \overline{S}$, where $CZ_{12}$ is the controlled-Z operator acting on degrees of freedom $1,2$, and $X_{1,2}$ are the Pauli-$X$ operators acting on $1,2$.}
  \label{fig:doubleSemion_EqualsWangEtAl}
\end{figure}

\subsection{`Fermionic' toric code in arbitrary dimensions and duality with trivial fermionic insulator} \label{sec:fermionicToricCode}

Following~\cite{kapustinThorngren2017,chen2020BosonArbDimensions}, one can define a `twisted' version of the toric code called the `fermionic toric code' in $(d+1)$-dimensions. The action is written in terms of a $(d-2)$-form gauge field $a \in C^{d-2}(M,\Z_2)$ and a Lagrange multiplier $b$ enforcing $\delta a = 0$. The action given by
\begin{equation}
\begin{split}
    S_{f.t.c.} &= \frac{1}{2} \int_{M} a (\partial b) + a \cup_{d-3} a \\
    &= \int_{M^{d+1}} a (\partial b) + \mathrm{Sq}^2(a).
\end{split}
\end{equation}
where $\mathrm{Sq}^2(a) := a \cup_{d-3} a$ is the so-called `Steenrod square' term. This model is a kind of higher-form $\Z_2$ gauge theory; on a spatial slice, the field `$a$' is dual to a loop configuration living on the dual cell decomposition a the ground state on a manifold will be a sum over such closed-loop configurations weighted by signs. 

As a particular example, in $(3+1)$ dimensions, this model can be thought of as the Crane-Yetter/Walker-Wang model with an input premodular category $\{1,\psi\}$, where `$1$' is the identity anyon and $\psi$ is a $\Z_2$-object satisfying $\psi \times \psi = 1$ and has fermionic statistics, $\theta_\psi = -1$. On a triangulated bulk $M^{3+1}$, the bulk amplitudes on a 4-simplex $\braket{01234}$ can be written as $(a \cup_0 a)=a(012)a(234)$ where $a$ is a cocycle that tells us when to decorate a $15j$ symbol with the $\psi$ object.

Now let's consider the local constraints on wavefunction amplitudes that this bulk action gives for trivial loop configurations. Suppose $\Phi = \delta \lambda$ is a trivial wavefunction amplitude. Then we can use the cochain-level identity 
\begin{equation}
    \mathrm{Sq}^2(\delta \lambda) = (\delta \lambda) \cup_{d-3} (\delta \lambda) = \delta(\lambda \cup_{d-3} \delta\lambda + \lambda \cup_{d-4} \lambda)
\end{equation}
and integrate on the bulk $M^{d+1}$ with boundary $\partial M = N$ to give
\begin{equation}
    \int_M \mathrm{Sq}^2(\delta \lambda) = \int_N \lambda \cup_{d-3} \delta\lambda + \lambda \cup_{d-4} \lambda.
\end{equation}
Finally, comparing to Eq.~\eqref{eq:GrassmannIntegral_TrivialLoops} we note that this last equation gives that the wavefunction amplitudes that precisely equal the Gu-Wen Grassmann integral $\sigma(\delta \lambda)$ evaluated on $\delta \lambda$. The local constraint on wave-function amplitudes is fixed by the quadratic refinement property Eq.~\eqref{eq:quadRefineSigma}, that $\sigma(\Phi + \delta \bmsquare_{d-2}) = \sigma(\Phi) \sigma(\delta \bmsquare_{d-2}) (-1)^{\int \Phi \, \cup_{d-2} \, \delta\bmsquare_{d-2}} = \sigma(\Phi) (-1)^{\int \Phi \, \cup_{d-2} \, \delta\bmsquare_{d-2}}$ where the last equality uses Eq.~\eqref{eq:sigmaElLoops}. This is equivalent to the requirement that wavefunction amplitudes of a ground state $\ket{\Psi}$ on the spatial manifold must satisfy
\begin{equation}
    \braket{\Phi + \delta \bmsquare_{d-2} | \Psi} = \braket{\Phi | \Psi} (-1)^{\int \Phi \, \cup_{d-2} \, \delta\bmsquare_{d-2}}.
\end{equation}

Now, the job is to engineer a commuting projector Hamiltonian that realizes this. The Hamiltonian will be in terms of the matrices used in the exact bosonization identities Eqs.~(\ref{eq:exactbosonization1}-\ref{eq:exactbosonization3}). Specifically we'll have
\begin{equation}
    H_{f.t.c.} = - J_1\sum_{\msquare_d} W_{\msquare_d} - J_2\sum_{\msquare_{d-2}} G_{\msquare_{d-2}} 
\end{equation}
where $J_{1,2}$ are arbitrary coupling constants. The first set of terms $W_{\msquare_d}$ enforce $\delta \Phi = 0$ in the ground states. the second set $G_{\msquare_{d-2}}$ the quadratic refinement. 

Note that taking $J_2 \to \infty$ dynamically enforces the gauge constraint $G_{\msquare_{d-2}}\ket{\Psi} = \ket{\Psi}$ on the bosonic side that then allows us to `fermionize' the system, i.e. use the exact bosonization map Eqs.~(\ref{eq:exactbosonization1}-\ref{eq:exactbosonization3}) to write it purely in terms of a fermionic Hamiltonian (at least on a spatial manifold). The dual fermionic Hamiltonian $\tilde{H}_{f.t.c}$ is actually a trivial one:
\begin{equation}
    \tilde{H}_{f.t.c} = i J_1 \sum_{\msquare_d} \gamma_{\msquare_d} \gamma'_{\msquare_d}.
\end{equation}
If we want to add `hopping' terms to the Hamiltonian, we could just as well use Eq.~\eqref{eq:exactbosonization2} of the exact bosonization dictionary and add the corresponding terms on both sides.

\section{Discussion}
We've introduced higher cup product operations over both $\Z_2$ and $\Z$ on hypercubic lattices motivated by recent geometric interpretations on triangulations. 
Over $\Z$ our definitions of various sign factors that show up are well-motivated by considering a signed intersection products of the thickened dual cochains. While our derivations of the recursion relations are long combinatorial exercises, we expect there to be more straightforward geometric derivation of the results. One technical and conceptual obstacle to this is the presence of lower-dimensional cells in the thickened intersection equations that do not appear in the higher-cup equations. A more geometric derivation of the recursion relations would presumably need to deal with them. And in general, it would be interesting to see if the lower-dimensional cells have any topological interpretations.

In addition, we've shown how it serves as a useful formalism to think about and derive many lattice constructions of topological phases on a hypercubic lattice, in particular giving expositions on how certain well-known models can be analyzed through the lens of the cup product formalisms. Although we gave an explicit account of this for the 3-fermion Walker-Wang model~\cite{BCFV14} in the presence of a two-dimensonal spatial boundary, the same ideas should also apply for Crane-Yetter/Walker-Wang constructions of any abelian topological orders. It would be interesting to see if earlier works studying defects in two-dimensional systems and their gapped of boundaries (see e.g.~\cite{BJQ2013_abelianTopOrder}) can be reframed in terms of cup products.

Because of the amenability of hypercubic lattices to Fourier analysis, our constructions may be useful in direct comparisons of topological phases of free and interacting systems. In addition, studies of dynamical properties of topological lattice Hamiltonians may be more readily accessible. For example, the hypercubic construction of the fermionic toric code in Section~\ref{sec:fermionicToricCode} is dual to free fermionic systems, and it may be interesting to study textbook many-body physics of fermionic systems through the lens of their bosonic duals.

Also, our thickening and shifting prescriptions on dual cellular decompositions for constructing higher cup products should apply on more general cellular decompositions, for example crystalline lattice systems. It may be interesting to flesh these out and see if crystalline topological phases (see e.g.~\cite{thorngrenElse2018_crystalline,manjunath2020classification,debray2021invertible} and references therein) can be analyzed through this lens.

\section*{Acknowledgments}
Y.C. thanks Po-Shen Hsin and his PhD advisor Prof. Anton
Kapustin for many very helpful suggestions. Y.C. also thanks Nathanan Tantivasadakarn for his help on this project. The discussion with Tyler Ellison, Arpit Dua, Wilbur Shirley, and Dom Williamson are very useful. Y.C. thank to the Condensed Matter Theory Center and the Joint Quantum Institute at the University of Maryland for hosting and funding.

S.T. wants to thank Maissam Barkeshli for discussions about the 3-fermion model, and also Danny Bulmash and Ryohei Kobayashi for a related collaboration.

\appendix



\section{Cochain action for generalized double-semion model} \label{app:cochainActionGenDoubleSemion}
The generalized double-semion model~\cite{FH16} is a generalization to arbitrary dimensions of the double-semion model. In this section, we derive a cochain-level action for the generalized double-semion model, following the result of~\cite{debray2019} which derived a topological action at the cohomology level. The cochain-level action is discussed for the barycentric subdivision~\cite{FHHT20} and we give a proof for arbitrary triangulations in all dimensions.

This cochain-level action would allow one to derive Hamiltonians in a similar way as described in Sec.~\ref{sec:doubleSemionSquareLattice}.

The generalized double-semion model in $(d+1)$-dimensions is gotten by gauging a global $\Z_2$ symmetry of a certain $\Z_2$-SPT, which we call the `Generalized Levin-Gu SPT'.
We are given $(N^d,T)$ triangulation $T$ of the $d$-dimensional \textit{spatial} manifold $N$. Let $T^k$ denote the $k$-simplices of the triangulation. The $\Z_2$-SPT is formed in the setting of a two-state system $\{\ket{0}_v, \ket{1}_v\}$ at each vertex $v \in T^0$. 

For a cochain $\tilde{a} \in C^0(N,\Z_2) $, we denote $\ket{\tilde{a}} \in \bigotimes_{v \in T^0} \{\ket{0}_v, \ket{1}_v\}$ denote the product state $\ket{\tilde{a}} = \bigotimes_{v \in T^0} \ket{\tilde{a}(v)}$. And for $0 \le k \le d$ and some $k$-simplex $\braket{0 \cdots k}$, let $C^{(k)}Z_{\braket{0 \cdots k}}$ be the operator:
\begin{equation}
    C^{(k)}Z_{\braket{0 \cdots k}} \ket{\tilde{a}} := (-1)^{\tilde{a}(0)\tilde{a}(1) \cdots \tilde{a}(k)} \ket{\tilde{a}}.
\end{equation}
We'll interchangeably use the notation
\begin{equation}
    C^{(k)}Z_{\braket{0 \cdots k}} \equiv \underbrace{C \cdots C}_{k \text{ times}} Z_{\braket{0 \cdots k}}.
\end{equation}
For example, $C^{(0)}Z_{\braket{0}} \equiv Z_{\braket{0}}$, $C^{(1)}Z_{\braket{01}} \equiv CZ_{\braket{01}}$, $C^{(2)}Z_{\braket{012}} \equiv CCZ_{\braket{012}}$, etc. 

To define the ground state of the $\Z_2$-SPT, first define the circuit
\begin{equation}
    ``Z \times C^{(1)}Z \times \cdots \times C^{(d)} Z \text{''} := \left( \prod_{\braket{0} \in T^0} Z_{\braket{0}} \right) \left( \prod_{\braket{01} \in T^1} CZ_{\braket{01}}\right) \cdots \left( \prod_{\braket{0 \cdots d} \in T^d} C^{(d)}Z_{\braket{0 \cdots d}}\right)
\end{equation}
Then, the ground state of the generalized Levin-Gu SPT will be the above circuit acting on an equal superposition of $\{0,1\}$ spins at each vertex. I.e., define
\begin{equation}
\begin{split}
    \ket{\Psi}_{\text{Levin-Gu}(d)} &\propto ``Z \times C^{(1)}Z \times \cdots \times C^{(d)} Z \text{''}
    \left( \sum_{\tilde{a} \in C^0(N,\Z_2)} \ket{\tilde{a}} \right) \\
    &= \sum_{\tilde{a} \in C^0(N,\Z_2)} \ket{\tilde{a}} \cdot (-1)^{\sum_{\braket{0} \in T^0} \tilde{a}(0) + \cdots + \sum_{\braket{0 \cdots d} \in T^d} \tilde{a}(0)\cdots \tilde{a}(d)} 
\end{split}
\end{equation}
We can express the amplitudes of $\ket{\tilde{a}}$ in the ground-state wavefunction in terms of the Poincaré dual of $\tilde{a} \in C^0(M,\Z_2)$. In particular, the dual of $\tilde{a}$ will be the region of $M$ consisting of the duals of the `1-region'. By essentially the same argument as in Sec.~\ref{sec:doubleSemionSquareLattice}, we'll have 
\begin{equation}
     \ket{\Psi}_{\text{Levin-Gu}(d)} \propto  \sum_{\tilde{a} \in C^0(N,\Z_2)} \ket{\tilde{a}} \cdot (-1)^{\chi(\text{`1-region'} ; \tilde{a})}
\end{equation}
where $(-1)^{\chi(\text{`1-region'} ; \tilde{a})}$ encodes the (mod 2) Euler characteristic of the `1-region' dual to $\tilde{a}$.

Note that all of the discussion above in defining $\ket{\Psi}_{\text{Levin-Gu}, d}$ didn't require us to think about branching structures, since all of the $C^{(k)}Z$ gates don't depend on any orderings of vertices. Now, say that we had a triangulation with branching structure instead. Then, our claim is that we can express the amplitudes above as 
\begin{prop} \label{lem:generalizedLevinGuLemma}
Let $\tilde{a} \in C^0(N,\Z_2)$ in some branched triangulation of $N$. Then, the generalized Levin-Gu SPT wavefunction amplitudes
\begin{equation}
    (-1)^{\sum_{\braket{0} \in T^0} \tilde{a}(0) + \cdots + \sum_{\braket{0 \cdots d} \in T^d} \tilde{a}(0) \cdots \tilde{a}(d)} 
    = (-1)^{\chi(\text{`1-region'} ; \tilde{a})}
\end{equation}
can be expressed as
\begin{equation} \label{eq:generalizedLevinGuAmplitudes}
    \chi(\text{`1-region'} ; \tilde{a}) = \int_{N^d} w_d(\tilde{a}) + w_{d-1}(\tilde{a} \cup \delta \tilde{a}) + \cdots + w_1(\tilde{a} \cup \underbrace{\delta \tilde{a} \cup \cdots \cup \delta \tilde{a}}_{d-1 \text{ times}}) + \tilde{a} \cup \underbrace{\delta \tilde{a} \cup \cdots \cup \delta \tilde{a}}_{d \text{ times}} \,\text{ (mod 2)}.
\end{equation}
\end{prop}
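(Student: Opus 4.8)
The plan is to first reduce the left-hand side to a purely combinatorial Euler-characteristic count, and then to recognize the right-hand side as the corresponding combinatorial Gauss--Bonnet / Poincaré--Hopf expansion organized by the corner strata of the dual region.

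First I would interpret both exponents geometrically. Writing $S = \{v : \tilde a(v) = 1\}$, the quantity $\sum_{\braket{0\cdots k}} \tilde a(0)\cdots \tilde a(k)$ simply counts the $k$-simplices of the full subcomplex $K_S$ spanned by $S$, so the total exponent is the total number of simplices of $K_S$ taken mod $2$. Since mod $2$ the Euler characteristic of any complex equals its total cell count (as $(-1)^k \equiv 1$), this equals $\chi(K_S) \bmod 2$. Dually, $K_S$ is the cell complex of the closed ``$1$-region'' $R = \bigcup_{v\in S} D(v)$, the union of dual $d$-cells: a dual cell $D(\sigma)$ lies in $R$ exactly when every vertex of $\sigma$ lies in $S$, whence $\chi(R) = (-1)^d \chi(K_S) \equiv \chi(K_S) \pmod 2$. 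Thus the left-hand side is $(-1)^{\chi(R)}$, and it remains to show $\chi(R) \equiv \mathrm{RHS} \pmod 2$.

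Next I would bring in the Goldstein--Turner chain representatives $w_j \in C_{d-j}(N,\Z_2)$, defined from the branching structure as the degeneration locus of the canonical combinatorial frame pointing along increasing vertex order: $w_j$ is dual to the codimension-$j$ locus where $(d-j+1)$ of these vector fields first fail to be independent. The key structural input is a combinatorial Poincaré--Hopf principle: for a codimension-$0$ region-with-corners $R$, the mod-$2$ Euler characteristic is computed by the tangential Euler density in the interior — captured by $w_d$ — plus boundary corrections from each corner stratum, where a codimension-$k$ stratum contributes its own ``$w_{d-k}$ of the stratum.'' The central claim is that these strata are exactly the supports of the cochains $\tilde a \cup (\delta\tilde a)^{\cup k}$: the interior is dual to $\tilde a$, the smooth boundary $\partial R$ is dual to $\delta \tilde a$, and the codimension-$k$ corners where $k$ domain walls meet are organized by $(\delta \tilde a)^{\cup k}$, with the leading $\tilde a$ recording which side of the wall is the $1$-region. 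Summed, this gives $\chi(R) \equiv \sum_{k=0}^d w_{d-k}\big(\tilde a \cup (\delta\tilde a)^{\cup k}\big) \pmod 2$, where $w_0 = [N]$ produces the top term $\tilde a \cup (\delta\tilde a)^{\cup d}$.

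The heart of the argument is then a local, cell-by-cell verification of this stratumwise identity. I would carry it out by choosing, on each top simplex, the discrete gradient field associated to the vertex order, counting its critical cells restricted to $R$, and matching their weighted count against the cup formula $(\alpha_p \cup \beta_q)(0\cdots p{+}q) = \alpha(0\to p)\,\beta(p\to p{+}q)$; the ordering $0\to k$ built into the iterated cup product is precisely the ordering that defines both the GT frame and the orientation of the corner, which is what forces the two sides to agree. A useful consistency check is $d=1$, where $R$ is a union of arcs, $w_1$ is the set of branching sources and sinks, and the identity reduces to the statement that each vertex contributes $1 + [v\in w_1] + \#\{\text{edges where } v \text{ is the smaller endpoint}\} \equiv 0 \pmod 2$. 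The main obstacle I expect is exactly this local matching: one must show that the Goldstein--Turner degeneration loci, defined through the branching-induced frame, assemble together with the iterated cup products into the precise cochain-level expression rather than merely agreeing in cohomology. This requires pinning down the interplay between the two roles of the vertex order — it determines both the discrete vector field whose singularities give each $w_j$ and the simplex orderings appearing in $\tilde a \cup (\delta\tilde a)^{\cup k}$ — and checking compatibility stratum by stratum. Everything else, namely the reduction to $\chi(R)$ and the corner stratification, is bookkeeping, but this compatibility is where the genuine content lies.
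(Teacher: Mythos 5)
Your opening reduction is sound and matches the paper's: the exponent counts the simplices of the full subcomplex spanned by the $1$-vertices, which mod $2$ is the Euler characteristic of the dual $1$-region, so the task is indeed to show $\chi \equiv \sum_{k} w_{d-k}\big(\tilde{a}\cup(\delta\tilde{a})^{\cup k}\big) \pmod 2$, where $(\delta\tilde{a})^{\cup k}$ denotes the $k$-fold cup power. The gap is that everything after this point is a plan rather than a proof, and the step you defer --- the ``local, cell-by-cell verification'' --- is not a compatibility check to be discharged at the end; it is the entire content of the proposition. You never state what the local identity actually is, nor supply the mechanism that makes it close. In the paper the right-hand side is unpacked using the Goldstein--Turner chain representatives, $w_{d-p}=\sum_{q}\sum_{\braket{0\cdots q}\in T^q}\partial_p(\braket{0\cdots q})$ with $\partial_p$ the sum of regular $p$-subsimplices, and the proposition reduces to the sharp per-simplex identity $\sum_{p=0}^{q}\int_{\partial_p(\braket{0\cdots q})}\tilde{a}\cup(\delta\tilde{a})^{\cup p}=\tilde{a}(0)\tilde{a}(1)\cdots\tilde{a}(q)$, summed over simplices of every dimension. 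This is proved by a double induction resting on two matching recursions: the Fibonacci-type splitting of $\partial_p(\braket{0\cdots d})$ into terms inherited from $\partial_p(\braket{02\cdots d})$ and $\braket{012}$-prefixed terms from $\partial_{p-2}(\braket{2\cdots d})$, and the parallel factorization $(\tilde{a}\cup(\delta\tilde{a})^{\cup p})(0\cdots d)=\big(\tilde{a}(0)+\tilde{a}(0)\tilde{a}(1)\big)\,(\tilde{a}\cup(\delta\tilde{a})^{\cup p-2})(2\cdots d)$. Neither ingredient appears in your proposal, and your Poincar\'e--Hopf/discrete-Morse framing, while a reasonable heuristic for why such an identity should exist, gives no handle on it: the $w_j$ are global chains supported on regular subsimplices of simplices of all dimensions $\ge d-j$, so the contributions do not decouple stratum by stratum in the way you describe --- one must track how the terms coming from a single $q$-simplex across all of $w_d(\tilde{a}),\dots,w_{d-q}(\tilde{a}\cup(\delta\tilde{a})^{\cup q})$ conspire to cancel down to exactly $\tilde{a}(0)\cdots\tilde{a}(q)$.

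A secondary, fixable imprecision: a dual cell $D(\sigma)$ is a cell of the \emph{interior} of the $1$-region exactly when every vertex of $\sigma$ lies in $S$; your phrasing conflates the closed region with its interior, and passing between them uses the (easy) fact that $\chi(\partial R)$ is even. That part is harmless bookkeeping; the substantive gap is the unproved per-simplex identity above, which you yourself flag as ``where the genuine content lies'' but do not supply.
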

In the above proposition, each $w_k \in C_{d-k}(N,\Z_2)$ are cycles dual to the $k^{th}$ Stiefel Whitney classes, which are canonically defined~\cite{GT76} given the branched-triangulated $N$, and each $w_k(\tilde{a} \cup \underbrace{\delta \tilde{a} \cup \cdots \cup \delta \tilde{a}}_{d-k \text{ times}})$ is the chain-cochain pairing.

Note that in general, $\chi(\text{`1-region'} ; \tilde{a}) = \chi(\text{`0-region'} ; \tilde{a}) + \chi(N) \text{ (mod 2)}$, where $\chi(N)$ is the Euler characteristic of $N$. As such, there's a global $\Z_2$ symmetry generated by flipping all spins $0 \leftrightarrow 1$ and multiplying the wavefunction by $(-1)^{\chi(N)}$.

Before moving on to the proof of the formula, we first give the action associated to \textit{gauging} the above symmetry, similar to the main text, in Sec.~\ref{sec:doubleSemionSquareLattice}. First, the topological action on manifold $M$ for trivial 1-forms would be
\begin{equation}
    S_{GDS(d)}(\delta \tilde{a}) = \frac{1}{2} \int_{M^{d+1}} w_d(\delta \tilde{a}) + w_{d-1}(\delta \tilde{a} \cup \delta \tilde{a}) + \cdots + w_1(\underbrace{\delta \tilde{a} \cup \cdots \cup \delta \tilde{a}}_{d \text{ times}}) + \underbrace{\delta \tilde{a} \cup \cdots \cup \delta \tilde{a}}_{d+1 \text{ times}}
\end{equation}
which would give the same amplitudes as Eq.~\eqref{eq:generalizedLevinGuAmplitudes} after setting $\partial M = N$. As such, the general action associated to gauging the symmetry would be gotten by replacing $\delta \tilde{a} \to a$, so
\begin{equation}
    S_{GDS(d)}(a) = \frac{1}{2} \int_{M^{d+1}} w_d(a) + w_{d-1}(a \cup a) + \cdots + w_1(\underbrace{a \cup \cdots \cup a}_{d \text{ times}}) + \underbrace{a \cup \cdots \cup a}_{d+1 \text{ times}}.
\end{equation}
This is exactly the topological action derived in~\cite{debray2019}.

Before proving the main proposition of this Appendix, we first review the chain-level definitions of $w_k$ and then give usable expressions for $\tilde{a} \cup \delta \tilde{a} \cup \cdots \cup \delta \tilde{a}$.

\subsection{Formulas for $w_k$ and $\tilde{a} \cup \delta \tilde{a} \cup \cdots \cup \delta \tilde{a}$}
We review the formula for the dual of $w_k$, following~\cite{GT76}. In this section, we will use notation so that a $k$-simplex $\braket{v_0 \cdots v_k}$ is written in the order of the branching structure, so that $v_0 \to \cdots \to v_k$ in the branching structure.

First, for some $q$-simplex $t = \braket{0 \cdots q}$, define the chains $\partial_p(t) \in C_p(N,\Z_2) , p \le q$ as follows:
\begin{equation} \label{eq:regularSimpFullExpr}
\begin{split}
    \partial_p(\braket{0 \cdots q}) &=  \sum_{1 \le i_1 < i_2 - 1 < \cdots < i_k-(k-1) \le q-k }  \braket{0 \, \underbrace{i_1 \, i_1 + 1} \, \underbrace{i_2 \, i_2 + 1} \cdots \underbrace{i_k \, i_k + 1}} \text{ if } p = 2k \text{ is even} \\
    \partial_p(\braket{0 \cdots q}) &=  \sum_{1 \le i_1 < i_2 -1 < \cdots < i_k - (k-1) \le q-k-1}  \braket{0 \, \underbrace{i_1 \, i_1 + 1} \, \underbrace{i_2 \, i_2 + 1} \cdots \underbrace{i_k \, i_k + 1} \, q} \text{ if } p = 2k + 1 \text{ is odd} \\
\end{split}
\end{equation}
where the underbraces are just for visual aid in showing which pairs of arguments of vertices must be `one apart' from each other. Ref.~\cite{GT76} called these chains sums over so-called \textit{regular} $p$-simplices $s \subset t$. As an explicit example, these chains with a 6-simplex as an argument would be:
\begin{equation}
\begin{split}
    \partial_0(\braket{0123456}) &= \braket{0} \\
    \partial_1(\braket{0123456}) &= \braket{06} \\
    \partial_2(\braket{0123456}) &= \braket{0\underbrace{12}} + \braket{0\underbrace{23}} + \braket{0\underbrace{34}} + \braket{0\underbrace{45}} + \braket{0\underbrace{56}} \\
    \partial_3(\braket{0123456}) &= \braket{0\underbrace{12}6} + \braket{0\underbrace{23}6} + \braket{0\underbrace{34}6} + \braket{0\underbrace{45}6} \\
    \partial_4(\braket{0123456}) &= \braket{0\underbrace{12}\underbrace{34}} + \braket{0\underbrace{12}\underbrace{45}} + \braket{0\underbrace{12}\underbrace{56}} + \braket{0\underbrace{23}\underbrace{45}} + \braket{0\underbrace{23}\underbrace{56}} + \braket{0\underbrace{34}\underbrace{56}} \\
    \partial_5(\braket{0123456}) &= \braket{0\underbrace{12}\underbrace{34}6} + \braket{0\underbrace{12}\underbrace{45}6} + \braket{0\underbrace{23}\underbrace{45}6} \\
    \partial_6(\braket{0123456}) &= \braket{0\underbrace{12}\underbrace{34}\underbrace{56}} \\
    \partial_k(\braket{0123456}) &= 0 \text{ if } k>6
\end{split}
\end{equation}

Then, the explicit formula for $w_{d-p} \in C_p(N,\Z_2)$ is
\begin{equation}
    w_{d-p} = \sum_{q > p}\sum_{\braket{0 \cdots q} \in T^q} \partial_p(t).
\end{equation}
As examples, in $d=2$, we have
\begin{equation*}
\begin{split}
    w_2 &= \sum_{\braket{0} \in T^0} \braket{0} + \sum_{\braket{01} \in T^1} \braket{0} + \sum_{\braket{012} \in T^2} \braket{0} \\
    w_1 &= \sum_{\braket{01} \in T^1} \braket{01} + \sum_{\braket{012} \in T^2} \braket{02} \\ 
    w_0 &= \sum_{\braket{012} \in T^2} \braket{012},
\end{split}
\end{equation*}
and in $d=3$, 
\begin{equation*}
\begin{split}
    w_3 &= \sum_{\braket{0} \in T^0} \braket{0} + \sum_{\braket{01} \in T^1} \braket{0} + \sum_{\braket{012} \in T^2} \braket{0} + \sum_{\braket{0123} \in T^3} \braket{0} \\
    w_2 &= \sum_{\braket{01} \in T^1} \braket{01} + \sum_{\braket{012} \in T^2} \braket{02} + \sum_{\braket{0123} \in T^3} \braket{03} \\ 
    w_1 &= \sum_{\braket{012} \in T^2} \braket{012} + \sum_{\braket{0123} \in T^3} \big( \braket{012} + \braket{023} \big) \\
    w_0 &= \sum_{\braket{0123} \in T^3} \braket{0123}.
\end{split}
\end{equation*}
Note in general that $w_0$ will consist of the entire manifold.

Now, we'll spell out a formula for $(\tilde{a} \cup \underbrace{\delta \tilde{a} \cup \cdots \cup \delta \tilde{a}}_{q \text{ times}})(0 \cdots q)$. The first few examples are
\begin{equation}
\begin{split}
   (\tilde{a} \cup \delta \tilde{a})(01)
   &= \tilde{a}(0) \big(1 + \tilde{a}(1) \big) \\
   (\tilde{a} \cup \delta \tilde{a} \cup \delta \tilde{a})(012) 
   &= \tilde{a}(0)\tilde{a}(2)\big(1 + \tilde{a}(1)\big) \\
   (\tilde{a} \cup \delta \tilde{a} \cup \delta \tilde{a} \cup \delta \tilde{a})(0123) 
   &= \tilde{a}(0)\tilde{a}(2) \big(1 + \tilde{a}(1)\big) \big(1 + \tilde{a}(3)\big) \\
   (\tilde{a} \cup \delta \tilde{a} \cup \delta \tilde{a} \cup \delta \tilde{a} \cup \delta \tilde{a})(01234) 
   &= \tilde{a}(0)\tilde{a}(2) \tilde{a}(4) \big(1 + \tilde{a}(1)\big) \big(1 + \tilde{a}(3)\big)
\end{split}
\end{equation}
which use $(\tilde{a} \cup \underbrace{\delta \tilde{a} \cup \cdots \cup \delta \tilde{a}}_{q \text{ times}})(0 \cdots q) = \tilde{a}(0)\big(\tilde{a}(0)+\tilde{a}(1)\big)\big(\tilde{a}(1)+\tilde{a}(2)\big) \cdots \big(\tilde{a}(q-1) + \tilde{a}(q)\big)$ and that $\tilde{a}(i)\tilde{a}(i) = \tilde{a}(i)$. It's a simple inductive argument to show
\begin{equation} \label{eq:aCupDeltaA_repeat}
    (\tilde{a} \cup \underbrace{\delta \tilde{a} \cup \cdots \cup \delta \tilde{a}}_{q \text{ times}})(0 \cdots q) =
    \begin{cases}
       \left( \prod_{i=0}^k \tilde{a}(2 i) \right) \left( \prod_{i=0}^{k-1} (1+ \tilde{a}(2 i + 1 )) \right) \text{ if } q = 2k \text{ is even,} \\
       \left( \prod_{i=0}^k \tilde{a}(2 i) \right) \left( \prod_{i=0}^{k} (1+ \tilde{a}(2 i + 1 )) \right) \text{ if } q = 2k+1 \text{ is odd}
    \end{cases}
\end{equation}

\subsection{Proof of Proposition \ref{lem:generalizedLevinGuLemma}}
Now we prove the main proposition of this Appendix. It is instructive to first verify more explicitly the cases $d=1$ and $d=2$ before proceeding to the general case. Throughout this subsection, we will use the following shorthand
\begin{equation}
    \overline{[i_1, i_2, \cdots, i_k]} := \tilde{a}(i_1)\tilde{a}(i_2) \cdots \tilde{a}(i_k)
\end{equation}
to refer to products of $\tilde{a}$ matrix elements.

In $d=1$, we'll have 
\begin{equation*}
    w_1 = \sum_{\braket{0} \in T^0} \braket{0} + \sum_{\braket{01} \in T^1} \braket{0}
\end{equation*}
which tells us 
\begin{equation}
\begin{split}
    \int_{w_0} \tilde{a} \cup \delta \tilde{a} + \int_{w_1} \tilde{a} &= \left( \sum_{\braket{01} \in T^1} \big(  \overline{[0,1]} + \overline{[0]} \big) \right) + 
    \left(\sum_{\braket{0} \in T^0}\overline{[0]} + \sum_{\braket{01} \in T^1} \overline{[0]} \right)\\ 
    &= \sum_{\braket{0} \in T^0} \overline{[0]} + \sum_{\braket{01} \in T^1} \overline{[0,1]}.
\end{split}
\end{equation}
Here, each big set of parentheses on the first line is an integral over $w_1,w_0$ respectively, together with the evaluation of $\tilde{a} \cup \delta \tilde{a} \cup \cdots$ on them. The last equality is what we wanted to show. 

Now, let's check $d=2$. we'll have
\begin{equation*}
    w_2 = \sum_{\braket{0} \in T^0} \braket{0} + \sum_{\braket{01} \in T^1} \braket{0} + \sum_{\braket{012} \in T^2} \braket{0} \quad,\quad w_1 = \sum_{\braket{01} \in T^1} \braket{01} + \sum_{\braket{012} \in T^2} \braket{02}.
\end{equation*}
and thus
\begin{equation}
\begin{split}
    \int_{w_0} & \tilde{a} \cup \delta\tilde{a} \cup \delta\tilde{a} + \int_{w_1} \tilde{a} \cup \delta\tilde{a} + \int_{w_2} \tilde{a} \\
    &= \left( \sum_{\braket{012}} \big(\overline{[0,1,2]} + \overline{[0,2]}\big) \right) 
     + \left( \sum_{\braket{012}} \big(\overline{[0,2]} + \overline{[0]}\big) + \sum_{\braket{01}} \big( \overline{[0,1]} + \overline{[0]} \big)\right) 
     + \left( \sum_{\braket{012}} \overline{[0]} + \sum_{\braket{01}} \overline{[0]} + \sum_{\braket{0}} \overline{[0]} \right) 
\end{split}
\end{equation}
where each set of large parentheses comes from summing the contribution of $\tilde{a} \cup \delta\tilde{a} \cup \cdots \cup \delta\tilde{a}$ over the sum of simplices that occur in each $w_{0,1,2}$. From here, note that the sums over $\braket{0},\braket{01}$ are algebraically the same as what we saw in the $d=1$ case. As such, these sums will give the ``$Z \cdot CZ$'' part and we only need to check that the sums over $\braket{012}$ conspire to give the ``$CCZ$'' term like we expect. Indeed things do cancel out to give the final result
\begin{equation*}
    \int_{w_0} \tilde{a} \cup \delta\tilde{a} \cup \delta\tilde{a} + \int_{w_1} \tilde{a} \cup \delta\tilde{a} + \int_{w_2} \tilde{a} = \sum_{\braket{0} \in T^0} \overline{[0]} + \sum_{\braket{01} \in T^1} \overline{[0,1]} + \sum_{\braket{012} \in T^2} \overline{[0,1,2]}.
\end{equation*}

This lesson that many of the terms in the $d$-dimensional sum also occur in the $(d-1)$-dimensional one will indeed hold for all dimensions. Specifically, the algebraic expresssions for the sum $\sum_{\braket{0 \cdots k}}$ over $k$-simplices will always look the same in any dimension. As such, if we verify the proposition in dimension $(d-1)$ that the integral gives the ``$Z \times CZ \times \cdots \times C^{(d-1)}Z$'' circuit, then in $d$-dimensions we only have to verify that the sums over $d$-simplices in the relevant integrals give the ``$C^{(d)}Z$'' terms. The rest of this subsection will go towards evaluating this sum over $d$-simplices.

Now we proceed to give the argument in general dimensions. In terms of the $\partial_p(t)$ defined earlier, the precise equation we want to show is
\begin{equation}
    \int_{\partial_{d}(\braket{0 \cdots d})} \tilde{a} \cup \underbrace{\delta\tilde{a} \cup \cdots \cup \delta\tilde{a}}_{d \text{ times}} + \cdots + \int_{\partial_1(\braket{0 \cdots d})} \tilde{a} \cup \delta \tilde{a} + \int_{\partial_0(\braket{0 \cdots d})} \tilde{a}
    = \overline{[0,\cdots,d]}
\end{equation}
and summing over $\braket{0 \cdots d}$ would give the ``$C^{(d)}Z = \sum_{\braket{0 \cdots d}} \overline{[0,\cdots,d]}$'' term. The rest of the $C^{(k)}Z$ terms for $k < d$ come from the analogous terms with respect to $\{\partial_p(\braket{0 \cdots k})\}_{p=1}^k$.

We can do this by an inductive argument by noting some nice recursive properties of the formulas presented previously for the $\partial_p(\braket{0 \cdots d})$ and $\tilde{a} \cup \underbrace{\delta \tilde{a} \cup \cdots \cup \delta \tilde{a}}_{p \text{ times}}$. 

First, one can note that for $p \ge 2$
\begin{equation} \label{eq:recursionRegularSimplices}
    \partial_p(0 \cdots d) = \sum_{\braket{2 \tilde{i}_1 \cdots \tilde{i}_{p-2}} \in \partial_{p-2}(\braket{2 \cdots d})} \braket{0 1  2 \tilde{i}_1 \cdots \tilde{i}_{p-2}} + \sum_{\braket{0 j_1 \cdots j_{p}} \in \partial_{p}(\braket{0 2 \cdots d})} \braket{0 j_1 \cdots j_{p}}
\end{equation}
where $\braket{0 \tilde{i}_1 \cdots \tilde{i}_{p-2}} \in \partial_{p-2}(2 \cdots d)$ refers to a simplex that's nonzero in the chain $\partial_{p-2}(2 \cdots d)$, etc. This can be derived from Eq.~\eqref{eq:regularSimpFullExpr} from the cases of setting $i_1 = 1$ vs. $i_1 = 0$. In other words, there is a Fibonacci-like recursion relation for $\partial_p(0 \cdots d)$. The terms not containing the vertex $1$ can be mapped one-to-one with terms of $\partial_{p-2}(\braket{2 \cdots d})$. The terms containing $1$ can be mapped one-to-one with $\partial_{p}(\braket{0 2 \cdots d})$. For $p < 2$, we'll have
\begin{equation} \label{eq:recursionRegSimp_LastTwo}
\begin{split}
    \partial_0(\braket{0 \cdots d}) &= \sum_{\braket{0} \in \partial_0(\braket{0 2 \cdots d)}} \braket{0} = \braket{0}  \\
    \partial_1(\braket{0 \cdots d}) &= \sum_{\braket{0 i_1} \in \partial_1(\braket{0 2 \cdots d)}} \braket{0 \, i_1} = \braket{0 d}.
\end{split}
\end{equation}
which correspond to only the $\partial_p(\braket{2 \cdots d})$, not the $\partial_{p-2}(\braket{0 2 \cdots d})$ terms.

Next, we can note that for $p \ge 2$,
\begin{equation} \label{eq:recursionAcupDeltaArepeated}
    (\tilde{a} \cup \underbrace{\delta \tilde{a} \cup \cdots \cup \delta \tilde{a}}_{p \text{ times}})(0 \cdots d) = \big(\tilde{a}(0) + \tilde{a}(0)\tilde{a}(1)\big) (\tilde{a} \cup \underbrace{\delta \tilde{a} \cup \cdots \cup \delta \tilde{a}}_{p-2 \text{ times}})(2 \cdots d)
\end{equation}
which follows immediately from Eq.~\eqref{eq:aCupDeltaA_repeat}.

Now, the inductive proof for $d > 2$ starts as
\begin{equation}
\begin{split}
    &\int_{\partial_{d} (\braket{0 \cdots d})} \tilde{a} \cup \underbrace{\delta\tilde{a} \cup \cdots \cup \delta\tilde{a}}_{d \text{ times}} + \cdots + \int_{\partial_1(\braket{0 \cdots d})} \tilde{a} \cup \delta \tilde{a} + \int_{\partial_0(\braket{0 \cdots d})} \tilde{a} \\
    &= \left( \sum_{\braket{0 j_1 \cdots j_{d}} \in \partial_{d}(\braket{0 2 \cdots d})} (\tilde{a} \cup \underbrace{\delta \tilde{a} \cup \cdots \cup \delta \tilde{a}}_{d \text{ times}})(\braket{0 j_1 \cdots j_{d}}) + \sum_{\braket{2 \tilde{i}_1 \cdots \tilde{i}_{d-2}} \in \partial_{d-2}(\braket{2 \cdots d})} (\tilde{a} \cup \underbrace{\delta \tilde{a} \cup \cdots \cup \delta \tilde{a}}_{d \text{ times}})(\braket{0 1  2 \tilde{i}_1 \cdots \tilde{i}_{d-2}}) \right) \\
    &+ \cdots \\
    &+ \left( \sum_{\braket{0 j_1 \cdots j_{p}} \in \partial_{p}(\braket{0 2 \cdots d})} (\tilde{a} \cup \underbrace{\delta \tilde{a} \cup \cdots \cup \delta \tilde{a}}_{p \text{ times}})(\braket{0 j_1 \cdots j_{p}}) + \sum_{\braket{2 \tilde{i}_1 \cdots \tilde{i}_{p-2}} \in \partial_{p-2}(\braket{2 \cdots d})} (\tilde{a} \cup \underbrace{\delta \tilde{a} \cup \cdots \cup \delta \tilde{a}}_{p \text{ times}})(\braket{0 1  2 \tilde{i}_1 \cdots \tilde{i}_{p-2}}) \right) \\
    &+ \cdots \\
    &+ \left( \sum_{\braket{0 j_1 j_2} \in \partial_{2}(\braket{0 2 \cdots d})} (\tilde{a} \cup \delta \tilde{a} \cup \delta \tilde{a})(\braket{0 j_1 j_2}) + (\tilde{a} \cup \delta \tilde{a} \cup \delta \tilde{a})(\braket{0 1 2}) \right) \\
    &+ \sum_{\braket{0 i_1} \in \partial_1(\braket{0 2 \cdots d)}} (\tilde{a} \cup \delta \tilde{a}) (0 i_1) \\
    &+ \sum_{\braket{0} \in \partial_0(\braket{0 2 \cdots d)}} \tilde{a}(0)
\end{split}
\end{equation}
where each line is the sum for a value of $p$: the first $(d-2)$ lines use Eq.~\eqref{eq:recursionRegularSimplices} and the last two lines use Eq.~\eqref{eq:recursionRegSimp_LastTwo}. Note that $\partial_d(\braket{0 2 \cdots d})$ is empty since $d > d-1$, and we'll drop that term in the next line.

Now we can rearrange terms and use Eq.~\eqref{eq:recursionAcupDeltaArepeated} to give
\begin{equation*}
\begin{split}
    = \Bigg( &\sum_{\braket{0 j_1 \cdots j_{d-1}} \in \partial_{d-1}(\braket{0 2 \cdots d})} (\tilde{a} \cup \underbrace{\delta \tilde{a} \cup \cdots \cup \delta \tilde{a}}_{d-1 \text{ times}})(\braket{0 j_1 \cdots j_{d-1}}) + \cdots \\
    + &\sum_{\braket{0 j_1 \cdots j_{p}} \in \partial_{p}(\braket{0 2 \cdots d})} (\tilde{a} \cup \underbrace{\delta \tilde{a} \cup \cdots \cup \delta \tilde{a}}_{p \text{ times}})(\braket{0 j_1 \cdots j_{p}}) + \cdots + \sum_{\braket{0 i_1} \in \partial_2(\braket{0 2 \cdots d)}} (\tilde{a} \cup \delta \tilde{a}) (0 \, i_1) + \sum_{\braket{0} \in \partial_0(\braket{0 2 \cdots d)}} \tilde{a}(0) \Bigg) \\
    + \big(\tilde{a}(0) &+ \tilde{a}(0)\tilde{a}(1)\big) \times \\
    \Bigg( &\sum_{\braket{2 \tilde{i}_1 \cdots \tilde{i}_{d-2}} \in \partial_{d-2}(\braket{2 \cdots d})} (\tilde{a} \cup \underbrace{\delta \tilde{a} \cup \cdots \cup \delta \tilde{a}}_{d-2 \text{ times}})(\braket{2 \tilde{i}_1 \cdots \tilde{i}_{d-2}}) + \cdots \\
    + &\sum_{\braket{2 \tilde{i}_1 \cdots \tilde{i}_{p-2}} \in \partial_{p-2}(\braket{2 \cdots d})} (\tilde{a} \cup \underbrace{\delta \tilde{a} \cup \cdots \cup \delta \tilde{a}}_{p-2 \text{ times}})(\braket{2 \tilde{i}_1 \cdots \tilde{i}_{p-2}}) + \cdots + \sum_{\braket{2 i_1} \in \partial_2(\braket{2 \cdots d)}} (\tilde{a} \cup \delta \tilde{a}) (2 \, i_1) + \sum_{\braket{2} \in \partial_0(\braket{2 \cdots d)}} \tilde{a}(2)
    \Bigg).
\end{split}
\end{equation*}
Now, we note that the big sums in parentheses are related to the same sum in $(d-1)$ and $(d-2)$, that
\begin{equation*}
\begin{split}
    &= \left( \int_{\partial_{d-1}(\braket{0 2 \cdots d})} \tilde{a} \cup \underbrace{\delta\tilde{a} \cup \cdots \cup \delta\tilde{a}}_{d-1 \text{ times}} + \cdots + \int_{\partial_1(\braket{0 2 \cdots d})} \tilde{a} \cup \delta \tilde{a} + \int_{\partial_0(\braket{0 2 \cdots d})} \tilde{a} \right)\\
    &+ \big(\tilde{a}(0) + \tilde{a}(0)\tilde{a}(1)\big) 
    \left( \int_{\partial_{d-2}(\braket{2 \cdots d})} \tilde{a} \cup \underbrace{\delta\tilde{a} \cup \cdots \cup \delta\tilde{a}}_{d-2 \text{ times}} + \cdots + \int_{\partial_1(\braket{2 \cdots d})} \tilde{a} \cup \delta \tilde{a} + \int_{\partial_0(\braket{2 \cdots d})} \tilde{a} \right).
\end{split}
\end{equation*}
So, this sum that we want to show equals ``$C^{(d)}Z$'' term on $\braket{0 \cdots d}$ is related inductively to ``$C^{(d-1)}Z$'',``$C^{(d-2)}Z$'' terms acting on $\braket{0 2 \cdots d}$ and $\braket{2 \cdots d}$. Thus by induction, we'd have that the above equals:
\begin{equation*}
    = \big(\tilde{a}(0)\tilde{a}(2) \cdots \tilde{a}(d)\big) + \big(\tilde{a}(0) + \tilde{a}(0)\tilde{a}(1)\big)\big(\tilde{a}(2) \cdots \tilde{a}(d)\big) = \tilde{a}(0)\tilde{a}(1)\tilde{a}(2) \cdots \tilde{a}(d) = \overline{[0, \cdots, d]}.
\end{equation*}

\section{Defining $\cup_m$ products via thickening and shifting} \label{app:defininingCupMProducts}

\subsection{Vector field frame for the higher cup products}
Now, we apply the thickening and shifting procedure of~\cite{T20} to define formulas for the cup product and higher cup product. First, we'll define a vector field frame in $\msquare_d$ as follows. We'll have $d$ vectors $\Vec{b}_i$, $i = 1,\dots,d$ as:

\begin{equation}
b_{ij} = (\Vec{b}_i)_j = 1/j^{i-1}
\end{equation}

So, vectors $\Vec{b}_i$ taken together will form a Vandermonde matrix

\begin{equation}
b_{ij} = 
\begin{pmatrix}
1      & 1                 & \cdots & 1                 \\
1      & \frac{1}{2}       & \cdots & \frac{1}{d}       \\
\vdots & \vdots            & \ddots & \vdots            \\
1      & \frac{1}{2^{d-2}} & \cdots & \frac{1}{d^{d-2}} \\
1      & \frac{1}{2^{d-1}} & \cdots & \frac{1}{d^{d-1}} \\
\end{pmatrix}
\end{equation}

The specific form of these vectors will be useful for us a bit later when we explicitly thicken and measure the intersection products. 

First, it'll be helpful to go through everything for the $\cup_0$ product then later see how everything generalizes. 

\subsection{Intersections for the $\cup := \cup_0$ product}
For a dual $(d-p)$-cell $P^\vee_{(z_1 \cdots z_d)}$, we can define its shifted counterpart $\tilde{P}^\vee_{(z_1 \cdots z_d)}(\epsilon)$ similarly to the parameterization in Eq.~\eqref{dualCellParam}:

\begin{equation}
\tilde{P}^\vee_{(z_1 \cdots z_d)}(\epsilon) = \msquare_d \cap \{\epsilon \Vec{b}_1 + z_{\ihat_1} t_{\ihat_1} \mathbf{e}_{\ihat_1}  + \cdots +  z_{\ihat_{d-p}} t_{\ihat_{d-p}} \mathbf{e}_{\ihat_{d-p}} | t_{\ihat} \ge 0 \}
\end{equation}

Now, we can analyze the intersections between these shifted dual cells and the original dual cells. In particular, given dual a $(d-p)$-cell $P^\vee_{(z'_1 \cdots z'_d)}$ and a shifted dual $(d-q)$-cell $\tilde{P}^\vee_{(z''_1 \cdots z''_d)}(\epsilon)$, we'll have that as $\epsilon \to 0$, the limit of the intersections will be empty, or will be some other cell $P^\vee_{(z_1 \cdots z_d)}$, that

\begin{equation}
\lim_{\epsilon \to 0^+} P^\vee_{(z'_1 \cdots z'_d)} \cap \tilde{P}^\vee_{(z''_1 \cdots z''_d)}(\epsilon) = P^\vee_{(z_1 \cdots z_d)}, \text{ or will be empty}
\end{equation}

We want to find the pairs $\{(z'_1 \cdots z_d),(z''_1 \cdots z''_d)\}$ so that $\lim_{\epsilon \to 0^+} P^\vee_{(z'_1 \cdots z'_d)} \cap \tilde{P}^\vee_{(z''_1 \cdots z''_d)}(\epsilon)$ is \textit{nonempty} and \textit{stays at the highest possible dimension}, i.e. that $P^\vee_{(z_1 \cdots z_d)}$ is a $(d-p-q)$-dimensional cell. In general, there will be examples where the limit of the intersections becomes lower dimensional. This is summarized in the following proposition:

\begin{prop} \label{cup0Prop}
Let $P^\vee_{(z'_1 \cdots z'_d)}$ be a dual $(d-p)$-cell and $\tilde{P}^\vee_{(z''_1 \cdots z''_d)}(\epsilon)$ be a shifted dual $(d-q)$-cell. 
Let $\{i_1 < \cdots < i_p\} \subset \{1 \cdots d\}$ be the indices with ${z'_{i_{\dots}}} = \bullet$ and $\{\ihat_1 < \cdots < \ihat_{d-p}\} = \{1 \cdots d\} \backslash \{i_1 \cdots i_p\}$. 
Similarly, let $\{j_1 < \cdots < j_q\} \subset \{1 \cdots d\}$ be the indices with ${z''_{j_{\dots}}} = \bullet$ and let $\{\jhat_1 < \cdots < \jhat_{d-q}\} = \{1 \cdots d\} \backslash \{j_1 \cdots j_q\}$. 

\begin{enumerate}
\item If $\{i_1 \cdots i_p\} \cap \{j_1 \cdots j_q\}$ is nonempty, then $P^\vee_{(z'_1 \cdots z_d)} \cap \tilde{P}^\vee_{(z''_1 \cdots z''_d)}(\epsilon)$ is empty for $\epsilon > 0$.
\item Suppose $\{i_1 \cdots i_p\} \cap \{j_1 \cdots j_q\}$ is empty. Then if for some $k$, $z'_k = +$ and $z''_k = -$ or vice-versa, $\lim_{\epsilon \to 0^+} P^\vee_{(z'_1 \cdots z'_d)} \cap \tilde{P}^\vee_{(z''_1 \cdots z''_d)}(\epsilon)$ has lower dimension than $(d-p-q)$.
\item If for some $k$, $z'_k = \bullet$ and $z''_k = +$, or $z'_k = -$ and $z''_k = \bullet$, then $P^\vee_{(z'_1 \cdots z'_d)} \cap \tilde{P}^\vee_{(z''_1 \cdots z''_d)}(\epsilon)$ is empty for $\epsilon > 0$.
\end{enumerate}

These tells us that if $(z'_1 \cdots z'_d)$,$(z''_1 \cdots z''_d)$ are to contribute to the intersection, then first they can't share any indices both valued as $\bullet$. Second, any indices that are both not $\bullet$ should be the same, $+$ or $-$. The final statement will suppose that they satisfy these two properties. 

\begin{enumerate}
    \setcounter{enumi}{3}
    \item If for each $k$ with $z'_k = \bullet$ we have $z''_k = -$ and for each $k$ with $z''_k = \bullet$ that $z'_k = +$, then $\lim_{\epsilon \to 0} P^\vee_{(z'_1 \cdots z'_d)} \cap \tilde{P}^\vee_{(z''_1 \cdots z''_d)}(\epsilon) = P^\vee_{(z_1 \cdots z_d)}$, where $(z_1 \cdots z_d)$ represents a dual $(d-p-q)$-dimensional cell. In particular, if $z'_k = z''_k \in \{+,-\}$, then $z_k = z'_k = z''_k$. And if $z'_k = \bullet$ or $z''_k = \bullet$, then $z_k = \bullet$.
\end{enumerate}
where `limit' here means the Cauchy limit of the sets. 
\end{prop}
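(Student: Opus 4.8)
The plan is to prove the statement by a direct, explicit computation of the intersection $P^\vee_{(z'_1 \cdots z'_d)} \cap \tilde{P}^\vee_{(z''_1 \cdots z''_d)}(\epsilon)$, carried out one coordinate at a time. The crucial geometric input is that $\Vec{b}_1$ is the first row of the Vandermonde matrix, i.e. $\Vec{b}_1 = (1, \ldots, 1)$, so the shift $\epsilon \Vec{b}_1$ translates \emph{every} coordinate by the same positive amount $\epsilon$; this uniform positivity is exactly what forces the asymmetric pairing of signs appearing in the hypotheses. First I would write a point $x \in P^\vee_{(z'_1 \cdots z'_d)}$ as $x_k = z'_k t_k$ (with $t_k \ge 0$) when $z'_k = \pm$ and $x_k = 0$ when $z'_k = \bullet$, and a point $y \in \tilde{P}^\vee_{(z''_1 \cdots z''_d)}(\epsilon)$ as $y_k = \epsilon + z''_k s_k$ (with $s_k \ge 0$) when $z''_k = \pm$ and $y_k = \epsilon$ when $z''_k = \bullet$. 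The intersection is the locus $x = y$, i.e. the system $x_k = y_k$ for all $k$.

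Next I would solve this system coordinate-by-coordinate, splitting the indices into the three types permitted by the hypothesis: (A) $z'_k = \bullet$ and $z''_k = -$; (B) $z''_k = \bullet$ and $z'_k = +$; and (C) $z'_k = z''_k = \sigma \in \{+,-\}$. In type (A) the equation $0 = \epsilon - s_k$ forces $s_k = \epsilon \ge 0$ and pins $x_k = 0$; in type (B) the equation $t_k = \epsilon$ forces $t_k = \epsilon \ge 0$ and pins $x_k = \epsilon$; in type (C) the equation becomes $z'_k t_k = \epsilon + z'_k s_k$, which for any value of the surviving free parameter has a solution with $s_k, t_k \ge 0$ precisely because $\epsilon > 0$ (for $\sigma = +$ one gets $t_k = \epsilon + s_k$, for $\sigma = -$ one gets $s_k = \epsilon + t_k$). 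The point is that all nonnegativity constraints are automatically met for $\epsilon > 0$, so no coordinate is eliminated and the intersection is genuinely nonempty.

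I would then assemble the result: the only coordinates that remain free are the $d - p - q$ type-(C) directions, so for each $\epsilon > 0$ the intersection is a genuine $(d-p-q)$-dimensional cell, with its type-(A) and type-(B) coordinates pinned at $0$ and $\epsilon$ and its type-(C) coordinate ranging over $[\epsilon, 1]$ (if $\sigma = +$) or $[-1, 0]$ (if $\sigma = -$). Taking $\epsilon \to 0$ sends the pinned value $\epsilon$ and the endpoint $\epsilon$ to $0$, so the Cauchy limit of these sets is exactly $P^\vee_{(z_1 \cdots z_d)}$ with $z_k = \bullet$ whenever $z'_k = \bullet$ or $z''_k = \bullet$, and $z_k = z'_k = z''_k$ on the type-(C) directions, which is the claimed cell.

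The main thing to be careful about is not any single algebraic step but the limiting argument itself: one must verify that the dimension stays constant at $d - p - q$ for all small $\epsilon > 0$ (this is what distinguishes case 4 from the degenerating case 2), and that the set-theoretic limit of the $\epsilon$-families is the \emph{full} dual cell rather than a proper subset — in particular that the range $[\epsilon, 1]$ of a type-(C) coordinate with $\sigma = +$ opens up to the full interval $[0,1]$ in the limit. I expect this to be routine once the per-coordinate solution is in hand, with the genuine conceptual content residing in the observation that the positivity of $\Vec{b}_1$ is what selects the surviving sign pattern.
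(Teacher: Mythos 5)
Your per-coordinate computation of statement 4 is correct, and it is essentially the argument the paper invokes there but does not write out (the paper's proof of that part is literally ``directly computing the intersection for any $\epsilon>0$ and taking the limit''). Your solution of each coordinate equation is right: type (A) gives $s_k=\epsilon\ge 0$ with $x_k=0$, type (B) gives $t_k=\epsilon$ with $x_k=\epsilon\to 0$, and type (C) gives a full interval ($[\epsilon,1]$ for $\sigma=+$, $[-1,0]$ for $\sigma=-$) whose Cauchy limit is the full edge of the dual cell, so the limit is exactly $P^\vee_{(z_1\cdots z_d)}$ with the stated labels. Your emphasis on $\Vec{b}_1=(1,\dots,1)$ being the source of the asymmetric sign pairing is also the right conceptual point.

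The gap is one of coverage rather than of correctness: the proposition has four statements and you only prove the fourth. You allude to case 2 but never establish statements 1--3. Fortunately the same parametrization $x_k = z'_k t_k$ (or $0$) versus $y_k = \epsilon + z''_k s_k$ (or $\epsilon$) disposes of them in one line each: a shared $\bullet$ index forces $0=\epsilon$, so the intersection is empty (statement 1); $z'_k=\bullet,\ z''_k=+$ forces $0=\epsilon+s_k>0$ and $z'_k=-,\ z''_k=\bullet$ forces $-t_k=\epsilon>0$, both impossible (statement 3); and $z'_k=+,\ z''_k=-$ confines $x_k$ to $[0,\epsilon]$, which collapses in the limit, while the opposite assignment is empty outright (statement 2). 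You should state these explicitly. It is worth noting that your route to statement 1 is more elementary than the paper's, which argues via genericity of the intersection of hyperplanes sharing at least $d-p-q+1$ spanning directions (mirroring the triangulation case of~\cite{T20}); for the specific shift along $\Vec{b}_1$ the coordinate contradiction $0=\epsilon$ is immediate. Similarly, the paper proves statement 2 by containment in the \emph{unshifted} intersection $P^\vee_{(z'_1\cdots z'_d)}\cap P^\vee_{(z''_1\cdots z''_d)}$, which lies in the hyperplane $x_k=0$; your direct computation reaches the same conclusion and additionally distinguishes which of the two sign assignments is empty for all $\epsilon>0$ versus merely degenerating.
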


\begin{proof}
Let's start with the first statement. Note that if $\{i_1 \cdots i_p\}$ and $\{j_1 \cdots j_q\}$ share at least one element, then $\{\ihat_1 \cdots \ihat_{d-p}\}$ and $\{\jhat_1 \cdots \jhat_{d-q}\}$ must share at least $(d-p-q+1)$ elements. Note that the hyperplanes that $P^\vee_{(z'_1 \cdots z'_d)}$, $P^\vee_{(z''_1 \cdots z''_d)}$ are spanned by the vectors $\{\mathbf{e}_{\ihat_1} \cdots \mathbf{e}_{\ihat_{d-p}}\}$ and $\{\mathbf{e}_{\jhat_1} \cdots \mathbf{e}_{\jhat_{d-q}}\}$. So, by an argument of genericity of intersections similar to Part 1 of Proposition 2 of~\cite{T20}, the fact that the hyperplanes share at least $(d-p-q+1)$ vectors means that for any $\epsilon > 0$, $P^\vee_{(z'_1 \cdots z'_d)} \cap \tilde{P}^\vee_{(z''_1 \cdots z''_d)}(\epsilon)$ is empty.

Now, let's go to the second statement. First note that $\lim_{\epsilon \to 0^+} P^\vee_{(z'_1 \cdots z'_d)} \cap \tilde{P}^\vee_{(z''_1 \cdots z''_d)}(\epsilon)$ is automatically a subset of $P^\vee_{(z'_1 \cdots z'_d)} \cap P^\vee_{(z''_1 \cdots z''_d)}$. So, if $P^\vee_{(z'_1 \cdots z'_d)} \cap P^\vee_{(z''_1 \cdots z''_d)}$ has lower dimension than $(d-p-q)$, then we've proved Part 2. First, note that $\{i_1 \cdots i_p\} \cap \{j_1 \cdots j_q\}$ being empty means that $\{\ihat_1 \cdots \ihat_{d-p}\}$ $\{\jhat_1 \cdots \jhat_{d-q}\}$ share exactly $(d-p-q)$ coordinates. This means that the hyperplanes spanned by $P^\vee_{(z'_1 \cdots z'_d)}$ and $P^\vee_{(z''_1 \cdots z''_d)}$ will intersect in a $(d-p-q)$-dimensional hyperplane spanned by $\{\mathbf{e}_\ell | \ell \in \{\ihat\} \cap \{\jhat\}\}$. Note that if $z'_k = +$ and $z''_k = -$ (or vice versa), then $k \in \{\ihat_1 \cdots \ihat_{d-p}\} \cap \{\jhat_1 \cdots \jhat_{d-q}\}$. But, the definitions of the cells means that any point in $P^\vee_{(z'_1 \cdots z'_d)} \cap P^\vee_{(z''_1 \cdots z''_d)}$ will have its $k$th coordinate equalling zero, since one cell requires the $k$th coordinate to be at least zero while the other one requires that it's at most zero. So, this means that $P^\vee_{(z'_1 \cdots z'_d)} \cap P^\vee_{(z''_1 \cdots z''_d)}$ can only be at most a $(d-p-q-1)$-dimensional hyperplane, spanned by the $\mathbf{e}_\ell$ with $\ell \neq k$ in $\{\ihat\} \cap \{\jhat\}$

To see the third statement, note that the cell is being shifted in the direction $\Vec{b}_1 = (1,\cdots,1)$. Say now that $\epsilon > 0$. If $z'_k = \bullet$ and $z''_k = +$, then any point $(x_1, \cdots, x_d) \in P^\vee_{(z'_1 \cdots z'_d)} \cap \tilde{P}^\vee_{(z''_1 \cdots z''_d)}(\epsilon)$ will have to satisfy $x_k = 0$ since $z'_k = \bullet$, and also satisfy $x_k \ge \epsilon$ since $z''_k = +$; this is not possible. Similarly if $z'_k = -$ and $z''_k = \bullet$, then we'd need that $x_k \le 0$ since $z'_k = -$, and we'd need $x_k = \epsilon$ since $z''_k = \bullet$, which is impossible.

The final statement can be seen by directly computing the intersection for any $\epsilon > 0$ and taking the limit.
\end{proof}

\subsection{Intersections for the $\cup_k$ products}
Following~\cite{T20}, we will define our shifted and thickened version of $P^\vee_{(z_1 \cdots z_d)}$ as:

\begin{equation}
\begin{split}
\tilde{P}^{\vee, \text{thick}}_{(z_1 \cdots z_d)}(\epsilon_{m+1}) = \msquare_d \cap \{ & \epsilon_1 \Vec{b}_1 + \dots + \epsilon_{m+1} \Vec{b}_{m+1} + z_{\ihat_1} t_{\ihat_1} \mathbf{e}_{\ihat_1}  + \cdots +  z_{\ihat_{d-p}} t_{\ihat_{d-p}} \mathbf{e}_{\ihat_{d-p}} | \\
& t_{\ihat} \ge 0, -\varepsilon \le \epsilon_i \le \varepsilon \text{ for } i \in \{1 \cdots m\}\}
\end{split}
\end{equation}

Here, we thicken along the first $m$ vectors $\Vec{b}_1,\dots,\Vec{b}_{m}$ in \textit{both} positive and negative direction up to parameter $\varepsilon$, and shift along the final $\Vec{b}_{m+1}$. Our goal will be to analyze what happens to the intersections in the following order of limits:

\begin{equation*}
\lim_{\varepsilon \to 0} \lim_{\epsilon_{m+1} \to 0^+} P^\vee_{(z'_1 \cdots z'_d)} \cap \tilde{P}^{\vee, \text{thick}}_{(z''_1 \cdots z''_d)}(\epsilon_{m+1})
\end{equation*}

Since we're thickening along $m$ directions, this thickened intersection between a $(d-p)$-cell and a $(d-q)$-cell will have its full dimension if this limiting set is a $(d-p-q+m)$-dimensional one, i.e. if it's a $(p+q-m)$-dimensional form. To think about this intersection question, we have the following proposition.

\begin{prop} \label{cup_mProp}
Let $P^\vee_{(z'_1 \cdots z'_d)}$ be a dual $(d-p)$-cell and $\tilde{P}^{\vee, \text{thick}}_{(z''_1 \cdots z''_d)}(\epsilon_{m+1})$ be a thickening and shifting of a dual $(d-q)$-cell.
Let $\{i_1 < \cdots < i_p\} \subset \{1 \cdots d\}$ be the indices with ${z'_{i_{\dots}}} = \bullet$. And let $\{j_1 < \cdots < j_q\} \subset \{1 \cdots d\}$ be the indices with ${z''_{j_{\dots}}} = \bullet$.
And, let $\{\lambda_1 < \cdots < \lambda_r\} := \{i_1 \cdots i_p\} \cap \{j_1 \cdots j_q\}$, and denote by $\{a_1 < \cdots < a_{p+q-r}\} := \{i_1 \cdots i_p\} \cup \{j_1 \cdots j_q\}$.
And denote by $\{\hat{\lambda}_1 < \cdots < \hat{\lambda}_{p+q-2r}\} := \{a_1 \cdots a_{p+q-r}\} \backslash \{\lambda_1 \cdots \lambda_r\}$.

\begin{enumerate}
\item If $r \neq m$, then $\lim_{\varepsilon \to 0} \lim_{\epsilon_{m+1 \to 0}} P^\vee_{(z'_1 \cdots z'_d)} \cap \tilde{P}^{\vee, \text{thick}}_{(z''_1 \cdots z''_d)}(\epsilon_{m+1})$ is empty or limits to lower-dimension than $(d-p-q+m)$.
\item Suppose $r=m$. Then if for some $k$, $z'_k = +$ and $z''_k = -$ or vice-versa, then $\lim_{\varepsilon \to 0} \lim_{\epsilon_{m+1} \to 0^+} P^\vee_{(z'_1 \cdots z'_d)} \cap \tilde{P}^{\vee, \text{thick}}_{(z''_1 \cdots z''_d)}(\epsilon)$ has lower dimension than $(d-p-q+m)$.
\item Now, suppose both $r=m$ and that $\{z'_k,z''_k\} \neq \{+,-\}$ for all $k$. Note that for each ${\hat{\lambda}_{\dots}}$, we'll have that either ${z'_{\hat{\lambda}_{\dots}}} = \bullet$ and ${z''_{\hat{\lambda}_{\dots}}} \in \{+,-\}$ or that ${z''_{\hat{\lambda}_{\dots}}} = \bullet$ and ${z'_{\hat{\lambda}_{\dots}}} \in \{+,-\}$. And for each ${\lambda_{\dots}}$, we'll have that ${z'_{\lambda_{\dots}}} = {z''_{\lambda_{\dots}}} = \bullet$. For every ${\hat{\lambda}_\ell}$, define $k(\ell)$ so that $\lambda_{k(\ell) - 1} < {\hat{\lambda}_\ell} < \lambda_{k(\ell)}$, taking $\lambda_0 = 0$ and $\lambda_{r+1} = \infty$.
    \begin{enumerate}
        \item Suppose for each ${\hat{\lambda}_\ell}$, that for each $\ell$ with $z'_{\hat{\lambda}_\ell} = \bullet$ we have $z''_k = - (-)^{k(\ell)}$ and for each $k$ with $z''_k = \bullet$ that $z'_k = + (-)^{k(\ell)}$. Then $\lim_{\epsilon \to 0} P^\vee_{(z'_1 \cdots z'_d)} \cap \tilde{P}^{\vee, \text{thick}}_{(z''_1 \cdots z''_d)}(\epsilon) = P^\vee_{(z_1 \cdots z_d)}$, where $(z_1 \cdots z_d)$ represents a dual $(d-p-q+m)$-dimensional cell. In particular, if $z'_k = z''_k \in \{+,-\}$, then $z_k = z'_k = z''_k$. And if $z_k = \bullet$ or $z'_k = \bullet$, then $z''_k = \bullet$.
        \item Otherwise, $\lim_{\varepsilon \to 0} \lim_{\epsilon_{m+1} \to 0^+} P^\vee_{(z_1 \cdots z_d)} \cap \tilde{P}^{\vee, \text{thick}}_{(z'_1 \cdots z'_d)}(\epsilon_{m+1})$ will be empty.
    \end{enumerate}
\end{enumerate}
\end{prop}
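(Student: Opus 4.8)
The plan is to follow the same template as the proof of Proposition~\ref{cup0Prop}, but now tracking the $m$ extra thickening directions. As in that case, I would set the two parametrizations equal: a point in $P^\vee_{(z'_1 \cdots z'_d)}$ is $\sum_{\ihat} z'_{\ihat} s_{\ihat} \mathbf{e}_{\ihat}$ with $s_{\ihat} \ge 0$, while a point in the thickened cell $\tilde{P}^{\vee,\text{thick}}_{(z''_1 \cdots z''_d)}(\epsilon_{m+1})$ is $\sum_{i=1}^{m+1}\epsilon_i \vec{b}_i + \sum_{\jhat} z''_{\jhat} t_{\jhat} \mathbf{e}_{\jhat}$ with $t_{\jhat}\ge 0$, $|\epsilon_i|\le\varepsilon$ for $i\le m$, and $\epsilon_{m+1}>0$ fixed. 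Equating the two and reading off each coordinate gives $d$ scalar equations in the unknowns $\{s_{\ihat}\},\{t_{\jhat}\},\{\epsilon_i\}_{i\le m}$, whose generic solution space already has the target dimension $d-p-q+m$. The whole argument then reduces to understanding when the constraints $s_{\ihat},t_{\jhat}\ge0$ together with the bounds $|\epsilon_i|\le\varepsilon$ allow a full-dimensional family to survive the limits $\lim_{\varepsilon\to 0}\lim_{\epsilon_{m+1}\to 0^+}$.

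The pivotal observation, which I would isolate first, is to project the equality onto the $r$ shared-$\bullet$ coordinates $\{\lambda_1<\cdots<\lambda_r\}=\{i\}\cap\{j\}$. On each such coordinate both the $\mathbf{e}_{\ihat}$ and $\mathbf{e}_{\jhat}$ terms vanish, so the equation collapses to $0=\sum_{i=1}^{m+1}\epsilon_i(\vec{b}_i)_{\lambda_a}$ for $a=1,\dots,r$. These are $r$ linear relations among the $m$ bounded thickening parameters, with the fixed shift $\epsilon_{m+1}$ entering as an inhomogeneous term. Here the Vandermonde form of the $\vec{b}_i$ does all the work: every square minor is nonzero (a genericity fact imported exactly as in Part~1 of Proposition~2 of~\cite{T20}), so the $r\times(m+1)$ coefficient matrix has full rank $\min(r,m+1)$. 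This gives Part~1 immediately. If $r>m$ the homogeneous part has trivial kernel, so the only solution forces $\epsilon_{m+1}=0$, contradicting $\epsilon_{m+1}>0$, and the intersection is empty (precisely the mechanism of Proposition~\ref{cup0Prop}, Part~1, where $r\ge 1>0=m$). If $r<m$ the $\lambda$-equations leave an $(m-r)$-dimensional family of thickening parameters, but all $\epsilon_i$ are squeezed to $0$ as $\varepsilon\to0$, collapsing these $(m-r)$ directions and dropping the limiting dimension to $d-p-q+r<d-p-q+m$. A two-dimensional $\cup_1$ check ($p=q=1$, $m=1$, $r=0$) confirms the collapse of a line to a point.

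When $r=m$ the $\lambda$-equations form a square invertible Vandermonde system, so $\epsilon_1,\dots,\epsilon_m$ are uniquely determined and each is proportional to $\epsilon_{m+1}$; they therefore tend to $0$ and stay within $|\epsilon_i|\le\varepsilon$, so no dimension is lost and the problem reduces block-by-block to the $\cup_0$ situation. Substituting these $\epsilon_i$ into the remaining coordinate equations, I would treat each index $\hat\lambda_\ell$ lying between two consecutive shared bullets $\lambda_{k(\ell)-1}<\hat\lambda_\ell<\lambda_{k(\ell)}$ separately: Cramer's rule on the relevant Vandermonde block produces an effective shift coefficient whose sign alternates as $(-1)^{k(\ell)}$, which is exactly the parity appearing in the statement. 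With this in hand, Part~2 (a clash $\{z'_k,z''_k\}=\{+,-\}$) and the surviving sign pattern of Part~3(a) follow by repeating the elementary $\epsilon\to0$ computations of Proposition~\ref{cup0Prop}, Parts~2 and~4, now carried out inside each block and weighted by $(-1)^{k(\ell)}$; the complementary sign assignments give empty intersections, yielding Part~3(b). I expect the main obstacle to be twofold: making the $r<m$ dimension-collapse argument rigorous as a statement about Cauchy/Hausdorff limits of the actual (bounded) cells rather than their affine hulls, and bookkeeping the alternating $(-1)^{k(\ell)}$ signs correctly through the Cramer determinants so that they match the parities defining $\mathrm{Int}_m$ in the main text. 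Everything else is a direct, if lengthy, generalization of the already-established $\cup_0$ proof.
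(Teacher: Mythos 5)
Your proposal is correct and follows essentially the same route as the paper's proof: equate the two parametrizations, rescale by $\epsilon_{m+1}$, solve the resulting linear system via Cramer's rule on the Vandermonde matrix $b_{ij}$, and read off the sign of each surviving coefficient from the position of $\hat{\lambda}_\ell$ relative to the shared indices $\lambda_1 < \cdots < \lambda_m$, which is exactly how the factorization $Y_{\hat{\lambda}} = \prod_a\bigl(\tfrac{1}{1+\hat{\lambda}}-\tfrac{1}{1+\lambda_a}\bigr)$ produces the $(-1)^{k(\ell)}$ parities in the paper. The only differences are organizational: you solve the $\lambda$-block for the $\epsilon_i$ first and then substitute (the paper solves the whole system at once), and you spell out the $r\neq m$ dimension count that the paper merely delegates to Proposition 2 of~\cite{T20}.
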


Now, let's prove our proposition on $\cup_m$. Part 1 follows from analogous reasoning to Proposition 2 of~\cite{T20}, so we won't describe it in detail. And Part 2 is essentially the same reason as Part 2 of Prop~\ref{cup0Prop}. So, let's now focus on Part 3. 

We'll only spell out the case of $(d-p)$- and $(d-q)$-dimensional dual cells with $p+q-m=d$. The more general case actually follows from this case. Recall our definition $(z_1 \cdots z_d)$: if $z'_k = z''_k \in \{+,-\}$, then $z_k = z'_k = z''_k$ and if $z_k = \bullet$ or $z'_k = \bullet$, then $z''_k = \bullet$. If we consider the intersection question restricted to the cell $P_{(z_1 \cdots z_d)}$, then we're seeing whether the limit of the intersection is empty or consists of the point at the center of $P_{(z_1 \cdots z_d)}$. If it's empty restricted to this cell, then it's empty everywhere in $\msquare_d$. And if it restricts to the point at the center of $P_{(z_1 \cdots z_d)}$, then the intersection will be the entire dual cell $P^\vee_{(z_1 \cdots z_d)}$. 

Defining $\{\ihat_1 \cdots \ihat_{d-p}\} = \{1 \cdots n\} \backslash \{i_1 \cdots i_p\}$ and $\{\jhat_1 \cdots \jhat_{d-q}\} = \{1 \cdots n\} \backslash \{j_1 \cdots j_q\}$, we have that
\begin{equation*}
P^\vee_{(z'_1 \cdots z'_d)} = \msquare_d \cap \{z'_{\ihat_1} s_{\ihat_1} \mathbf{e}_{\ihat_1}  + \cdots +  z'_{\ihat_{d-p}} s_{\ihat_{d-p}} \mathbf{e}_{\ihat_{d-p}} | s_{\ihat} \ge 0 \}
\end{equation*}
and
\begin{equation*}
\begin{split}
\tilde{P}^{\vee, \text{thick}}_{(z''_1 \cdots z''_d)}(\epsilon_{m+1}) = \msquare_d \cap \{ & \epsilon_1 \Vec{b}_1 + \dots + \epsilon_{m+1} \Vec{b}_{m+1} + z''_{\jhat_1} t_{\jhat_1} \mathbf{e}_{\jhat_1}  + \cdots +  z''_{\jhat_{d-q}} t_{\jhat_{d-q}} \mathbf{e}_{\jhat_{d-q}} | \\
& t_{\jhat} \le 0, -\varepsilon \le \epsilon_\ell \le \varepsilon \text{ for } \ell \in \{1 \cdots m\}\}
\end{split}
\end{equation*}

So, we want to find when we can solve for ${s_{\dots}},{t_{\dots}} > 0$ for $\epsilon_{m+1} > 0$

\begin{equation}
z'_{\ihat_1} s_{\ihat_1} \mathbf{e}_{\ihat_1}  + \cdots +  z'_{\ihat_{d-p}} s_{\ihat_{d-p}} \mathbf{e}_{\ihat_{d-p}} = \epsilon_1 \Vec{b}_1 + \dots + \epsilon_{m+1} \Vec{b}_{m+1} + z''_{\jhat_1} t_{\jhat_1} \mathbf{e}_{\jhat_1}  + \cdots +  z''_{\jhat_{d-q}} t_{\jhat_{d-q}} \mathbf{e}_{\jhat_{d-q}}
\end{equation}

Writing this out in components gives:

\begin{equation}
z'_{\ihat_1} s_{\ihat_1} \delta_{k,\ihat_1}  + \cdots +  z'_{\ihat_{d-p}} s_{\ihat_{d-p}} \delta_{k,\ihat_{d-p}} = \epsilon_1 b_{1,k} + \dots + \epsilon_{m} b_{m,k} + \epsilon_{m+1} b_{m+1,k} + z''_{\jhat_1} t_{\jhat_1} \delta_{k,\jhat_1}  + \cdots +  z''_{\jhat_{d-q}} t_{\jhat_{d-q}} \delta_{k,\jhat_{d-q}}
\end{equation}

Now, we'll define $S_{\ihat} = s_{\ihat}/\epsilon_{m+1}$ for all $\{\ihat\}$, $T_{\jhat} = t_{\jhat}/\epsilon_{m+1}$ for all $\{\jhat\}$, $A_{\ell} = \epsilon_{\ell}/\epsilon_{m+1}$ for $\ell = 1,\dots,m$. Then, we can divide the previous equation by $\epsilon_{m+1}$ to get:

\begin{equation}
z'_{\ihat_1} S_{\ihat_1} \delta_{k,\ihat_1}  + \cdots +  z'_{\ihat_{d-p}} S_{\ihat_{d-p}} \delta_{k,\ihat_{d-p}} = A_1 b_{1,k} + \dots + A_{m} b_{m,k} + b_{m+1,k} + z''_{\jhat_1} T_{\jhat_1} \delta_{k,\jhat_1}  + \cdots +  z''_{\jhat_{d-q}} T_{\jhat_{d-q}} \delta_{k,\jhat_{d-q}}
\end{equation}

Now, we will define

\begin{equation}
Y_k = 
\begin{cases}
z'_{k} S_{k}  &\text{ if } k \in \{\ihat\} \\
-z''_{k} T_{k} &\text{ if } k \in \{\jhat\} 
\end{cases}
\end{equation}

Noting that $\{\hat{\lambda}\} = \{\hat{\lambda}_1 \cdots \hat{\lambda}_{d-m}\} = \{\ihat\} \sqcup \{\jhat\}$, we can rewrite the equation as:

\begin{equation}
Y_k \delta_{k \in \{\hat{\lambda}\}} - (A_1 b_{1,k} + \dots + A_{m} b_{m,k} + b_{m+1,k}) = 0
\end{equation}

Note that we are abusing notation, since $Y_k$ is only defined if $k \in \{\hat{\lambda}\}$. These equations can be solved as 

\begin{equation} \label{mainSolZ}
\begin{split}
Y_{\hat{\lambda}} &= \frac{\det
\begin{pmatrix}
b_{1 \lambda_1}     & \cdots & b_{m+1,\lambda_1}     \\
\vdots             &        & \vdots                 \\
b_{1 \lambda_m}     & \cdots & b_{m+1,\lambda_m}     \\
b_{1 \hat{\lambda}} & \cdots & b_{m+1,\hat{\lambda}} \\
\end{pmatrix}
}{\det
\begin{pmatrix}
b_{1 \lambda_1} & \cdots & b_{m \lambda_1} \\
\vdots          &        & \vdots          \\
b_{1 \lambda_m} & \cdots & b_{m \lambda_m} \\
\end{pmatrix}
} \\ 
&\text{ where } \hat{\lambda} \in \{\hat{\lambda}_1, \dots, \hat{\lambda}_{n-m} \}
\end{split}
\end{equation}

\begin{equation}\label{mainSolA}
\begin{split}
A_{\ell} &= (-1)^{m-\ell}\frac{\det
\begin{pmatrix}
b_{\tilde{\imath}_1 \lambda_1} & \cdots & b_{\tilde{\imath}_m \lambda_1} \\
            \vdots             &        &        \vdots                  \\
b_{\tilde{\imath}_1 \lambda_m} & \cdots & b_{\tilde{\imath}_m \lambda_m} \\
\end{pmatrix}
}{\det
\begin{pmatrix}
b_{1 \lambda_1} & \cdots & b_{m \lambda_1} \\
    \vdots      &        &   \vdots          \\
b_{1 \lambda_m} & \cdots & b_{m \lambda_m} \\
\end{pmatrix}
}\\ 
&\text{ where } \ell \in \{1,\dots,m\} \text{ and } \{\tilde{\imath}_1,\dots,\tilde{\imath}_m\} = \{1,\dots,m+1\} \backslash \{ \ell \}
\end{split}
\end{equation}

This can be shown by plugging in these into the equations, multiplying both sides of the equation by the denominator matrix 
$\begin{pmatrix}
b_{1 \lambda_1} & \cdots & b_{m \lambda_1} \\
\vdots          &        & \vdots          \\
b_{1 \lambda_m} & \cdots & b_{m \lambda_m} \\
\end{pmatrix}$
and showing that the $(m+1) \times (m+1)$ determinant is the sum of the $m \times m$ determinants via the cofactor expansion of a matrix. This is illustrated in Appendix A of~\cite{T20} for a slightly more complicated equation than this.

Now, we analyze the consequences of these equations, using the fact that $b$ is a Vandermonde matrix. We'll have that $Y_{\hat{\lambda}}$ are all ratios of Vandermonde determinants. So,

\begin{equation}
\begin{split}
Y_{\hat{\lambda}} &= \frac{\det
\begin{pmatrix}
1 & \frac{1}{1+\lambda_1}     & \cdots & (\frac{1}{1+\lambda_1})^{m}     \\
1 & \frac{1}{1+\lambda_2}     & \cdots & (\frac{1}{1+\lambda_2})^{m}     \\
  &             \vdots        &        &  \vdots                         \\
1 & \frac{1}{1+\lambda_m}     & \cdots & (\frac{1}{1+\lambda_m})^{m}     \\
1 & \frac{1}{1+\hat{\lambda}} & \cdots & (\frac{1}{1+\hat{\lambda}})^{m} \\
\end{pmatrix}
}{\det
\begin{pmatrix}
1 & \frac{1}{1+\lambda_1} & \cdots & (\frac{1}{1+\lambda_1})^{m-1}  \\
1 & \frac{1}{1+\lambda_2} & \cdots & (\frac{1}{1+\lambda_2})^{m-1}  \\
  &         \vdots        &        &  \vdots                        \\
1 & \frac{1}{1+\lambda_m} & \cdots & (\frac{1}{1+\lambda_m})^{m-1}  \\
\end{pmatrix}
}  \\
&=  (\frac{1}{1+\hat{\lambda}}-\frac{1}{1+\lambda_1}) \cdots (\frac{1}{1+\hat{\lambda}}-\frac{1}{1+\lambda_m})
\end{split}
\end{equation}

This tells us that if $\lambda_{k(\ell)-1} < \hat{\lambda}_\ell < \lambda_{k(\ell)}$ (with $\lambda_0 = 0, \lambda_{m+1} = \infty$), then $Y_{\hat{\lambda}_\ell} > 0$ if $k(\ell)$ is odd and $Y_{\hat{\lambda}_\ell} < 0$ if $k(\ell)$ is even. Translating back to the original variables $S,T$, we'll have that $S_{\hat{\lambda}_\ell} = z'_{\hat{\lambda}_\ell} Y_{\hat{\lambda}_\ell}$ and $T_{\hat{\lambda}_\ell} = -z''_{\hat{\lambda}_\ell} Y_{\hat{\lambda}_\ell}$. This means that for $k(\ell)$ odd, $S_{\hat{\lambda}_\ell} > 0$ needs $z'_{\hat{\lambda}_\ell} = +$ and $T_{\hat{\lambda}_\ell} > 0$ needs $z''_{\hat{\lambda}_\ell} = -$. And for $k(\ell)$ odd, $S_{\hat{\lambda}_\ell} > 0$ needs $z'_{\hat{\lambda}_\ell} < 0$ and $T_{\hat{\lambda}_\ell} > 0$ needs $z''_{\hat{\lambda}_\ell} > 0$. 

These are exactly the conditions we listed in the proposition and that correspond to the diagrammatics described earlier. 

\section{Proofs of $\delta(\alpha \cup_m \beta) = \delta \alpha \cup_m \beta + \alpha \cup_m \delta \beta + \alpha \cup_{m-1} \beta + \beta \cup_{m-1} \alpha$} \label{app:proofOfHigherCupId}

Here, we give proofs of the recursive $\cup_m$ identities, following an analogous presentation in~\cite{morganHigherCupLecture} in the simplicial case.

\subsection{$\delta(\alpha \cup_0 \beta) = (\delta \alpha) \cup_0 \beta + \alpha \cup_0 (\delta \beta)$}
First, we can recall that $(\delta \alpha)(z_1 \cdots z_d)$ is gotten by summing over $\alpha(z'_1 \cdots z'_d)$  with $(z'_1 \cdots z'_d)$ obtained by replacing exactly one $\bullet$ from $(z_1 \cdots z_d)$ with either $+$ or $-$. As such, we can demonstrate the equation of Prop~\ref{prop:cup0_coboundary}:

\begin{equation*}
    \delta(\alpha \cup_0 \beta) = (\delta \alpha) \cup_0 \beta + \alpha \cup_0 (\delta \beta) \text{ on }C^*(\msquare_d)
\end{equation*}

\begin{proof}[Proof of Prop~\ref{prop:cup0_coboundary}]
See Figure~\ref{cup0BoundaryProof}. The notation in that figure is that if a square on either of the top or bottom lines of each term is replaced with a $+$ or $-$, then we sum over all the undetermined $\pm$ signs while keeping the $+$ or $-$ sign fixed for $\alpha$ or $\beta$. The first line representing $\delta(\alpha \cup_0 \beta)$ follows from the definition of the coboundary $\delta$. The second and third lines can similarly be argued for as follows. For example, $(\delta\alpha \cup_0 \beta)(\bullet, \cdots, \bullet)$ is of that form because every term  $\alpha(z'_1 \cdots z'_d)\beta(z''_1 \cdots z''_d)$ will have exactly one index $k$ for which both $z'_k \neq \bullet$ and $z''_k \neq \bullet$. In addition, all such terms would require $z''_k = -$, while $z'_k$ can be either choice of $+$ or $-$. And similar reasoning will allow us to argue for the form of $(\alpha \cup_0 \delta\beta)(\bullet, \cdots, \bullet)$.
\end{proof}

\begin{figure}[h!]
  \centering
  \includegraphics[width=\linewidth]{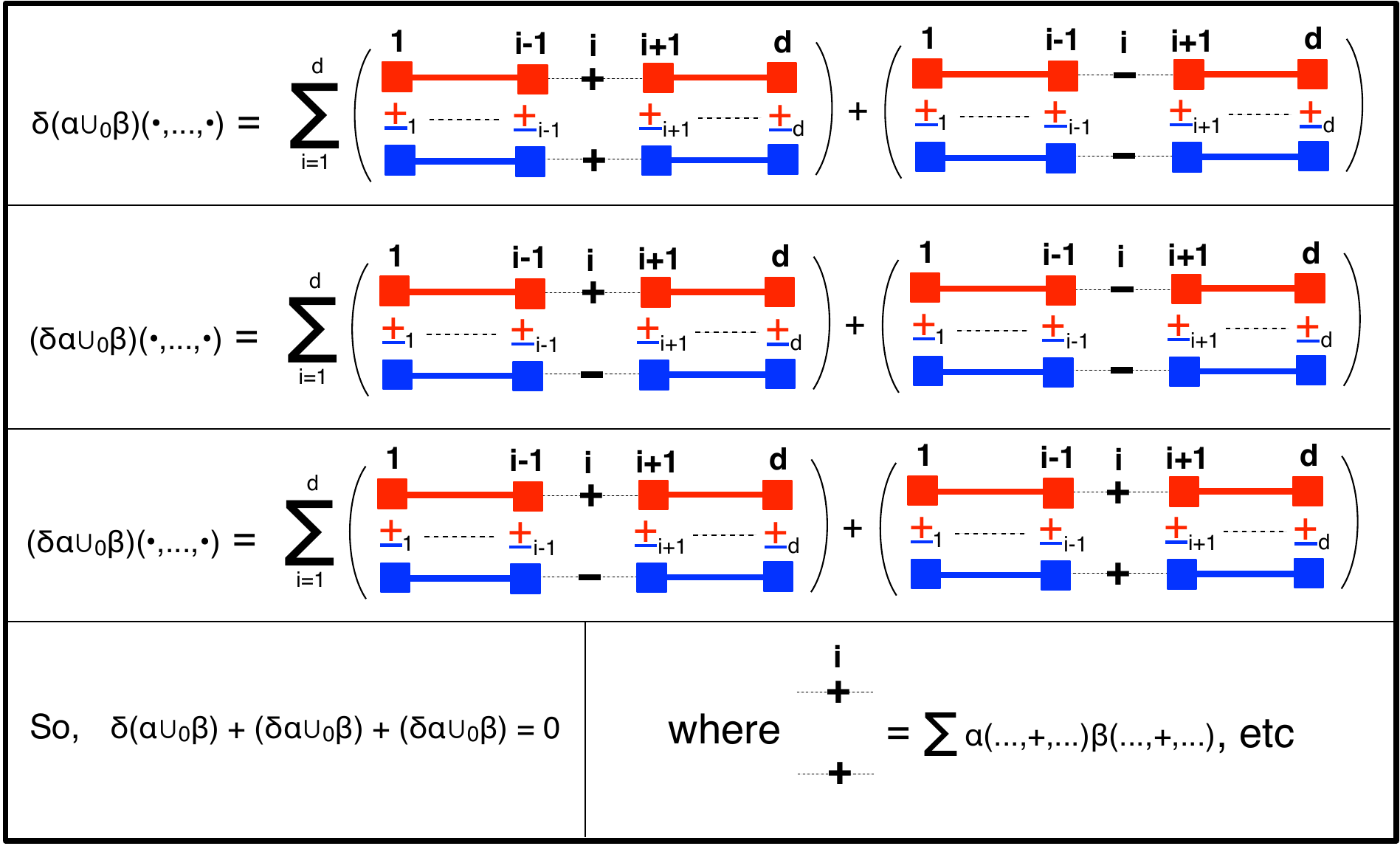}
  \caption{Proof of $\delta(\alpha \cup_0 \beta) = (\delta \alpha) \cup_0 \beta + \alpha \cup_0 (\delta \beta)$.}
  \label{cup0BoundaryProof}
\end{figure}

\begin{figure}[h!]
  \captionsetup{singlelinecheck=off}
  \centering
  \includegraphics[width=\linewidth]{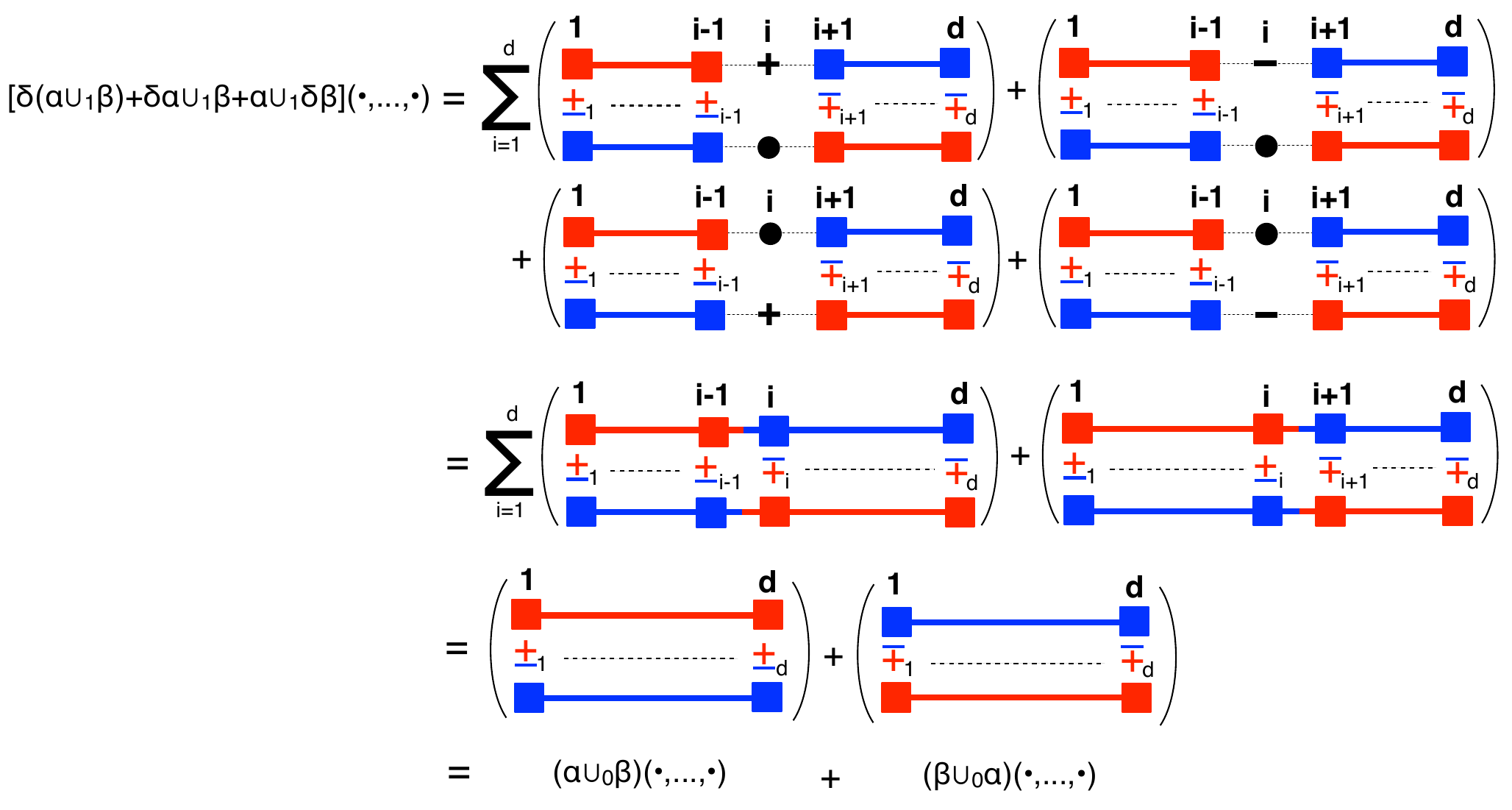}
  \caption[]{Proof of $\delta(\alpha \cup_1 \beta) + (\delta \alpha) \cup_1 \beta + \alpha \cup_1 (\delta \beta) = \alpha \cup_0 \beta + \beta \cup_0 \alpha$ (for $m$ even).
  \begin{itemize}
      \item The first equality is derived in a similar way to the proof of $\delta(\alpha \cup_0 \beta) = (\delta \alpha) \cup_0 \beta + \alpha \cup_0 (\delta \beta)$, and follows from the comments in the main text.
      \item The second equality is a rewriting of the first line, by grouping the terms diagonal to each other.
      \item The third equality follows from telescoping the sum. 
      \item The final equality is from the definition of $\cup_0$.
  \end{itemize}
  }
  \label{cup1BoundaryProof}
\end{figure}

\begin{figure}[h!]
  \captionsetup{singlelinecheck=off}
  \centering
  \includegraphics[width=\linewidth]{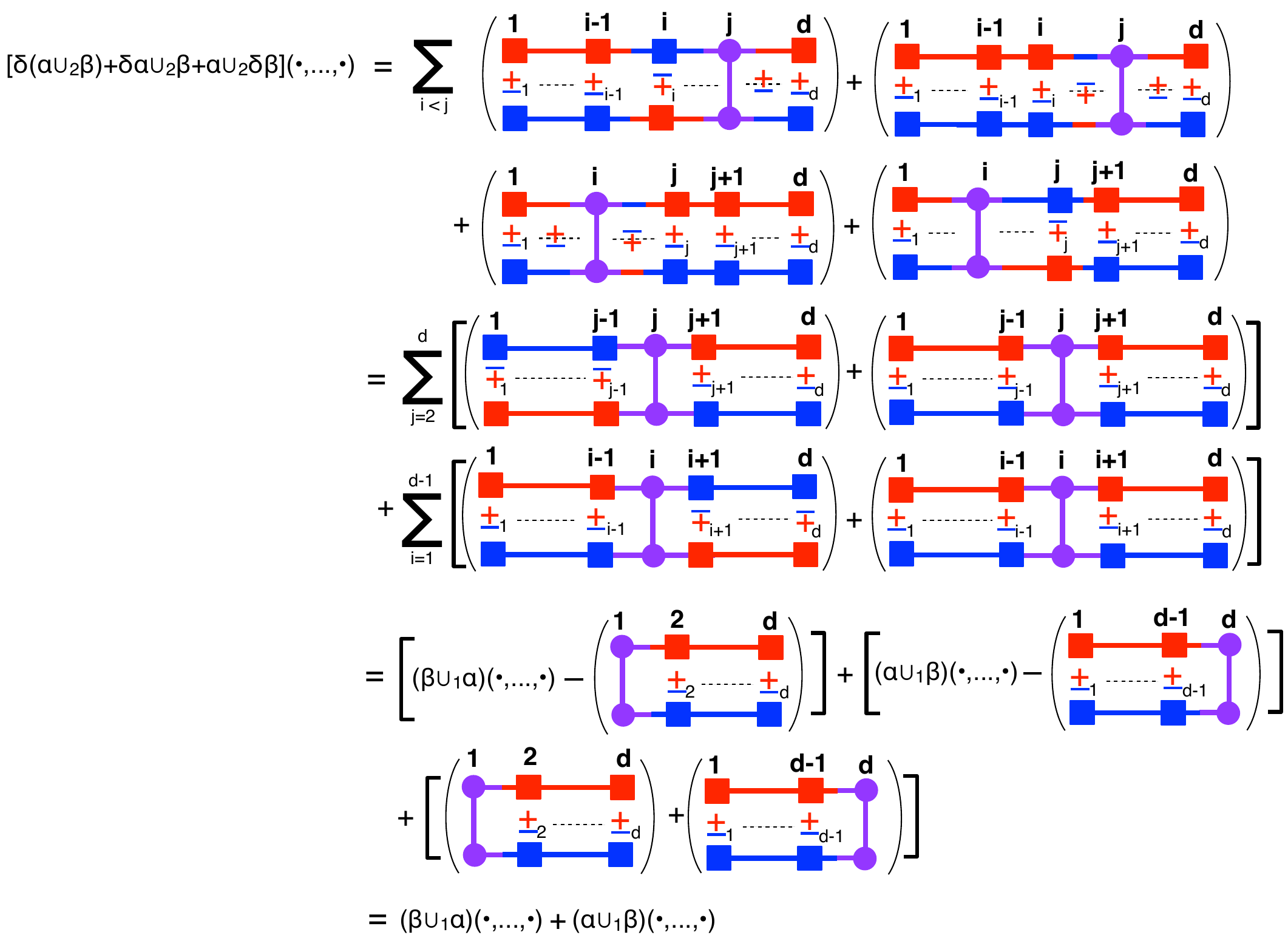}
  \caption[]{Proof of $\delta(\alpha \cup_2 \beta) + (\delta \alpha) \cup_2 \beta + \alpha \cup_2 (\delta \beta) = \alpha \cup_1 \beta + \beta \cup_1 \alpha$
  \begin{itemize}
      \item  The first equality is similar to the second equality of Figure~\ref{cup1BoundaryProof}. 
      \item   The second equality is gotten from telescoping terms in the first line. In particular, telescoping the top two lines (summing over $i<j$) of the second expression gives the first sum in the third expression. And telescoping the bottom two lines (summing over $j>i$) gives the second sum. We get $i: 1 \to d-1$ and $j:2 \to d$ because of the condition $i<j$. 
      \item  The third equality is as follows. The top terms of the fourth expression are rewriting the top-left plus bottom-right of the third expression. These differ from $(\alpha \cup_1 \beta)$ and $(\beta \cup_1 \alpha)$ because of the range of the summation. The bottom terms are from the cancellations between the top-right and bottom-left of the third expression. We are only left with the terms for $i=1$ and $j=d$.
      \item The fourth equality follows immediately.
  \end{itemize}
  }
  \label{cup2BoundaryProof}
\end{figure}

\begin{figure}[h!]
  \captionsetup{singlelinecheck=off}
  \centering
  \includegraphics[width=\linewidth]{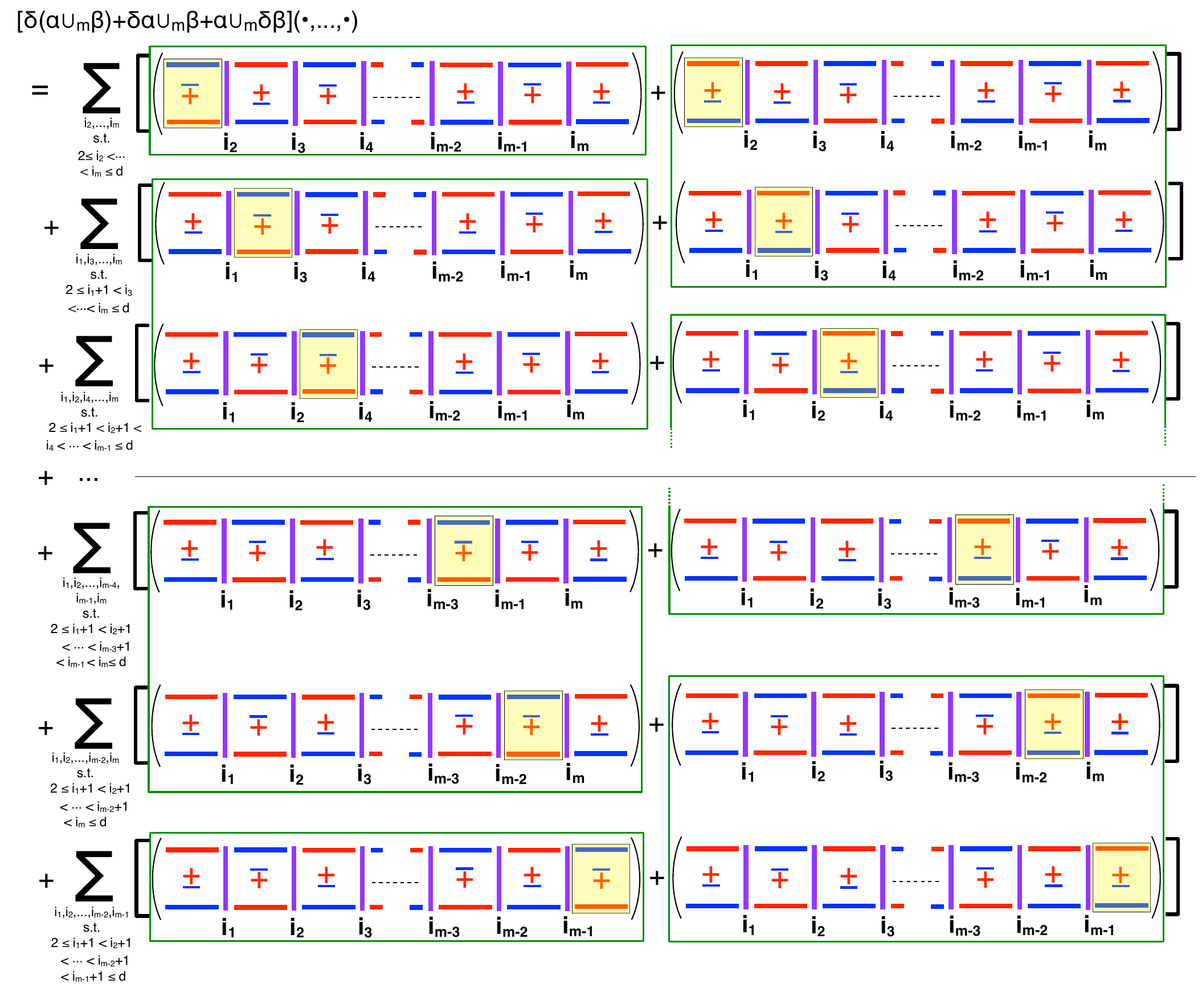}
  \caption[]{Part 1 of Proof of $\delta(\alpha \cup_m \beta) + (\delta \alpha) \cup_m \beta + \alpha \cup_m (\delta \beta) = \alpha \cup_{m-1} \beta + \beta \cup_{m-1} \alpha$ (for m even).
  \begin{itemize}
      \item This equality is similar to the second equality of Figure~\ref{cup2BoundaryProof}. Every line is obtained from the partial sum over one of of the original summation variables $\{i_1,\dots,i_m\}$. Each of these partial sums telescopes in the same way as the second equality of Figure~\ref{cup2BoundaryProof}. The boxes shaded in yellow are the final result of each telescoped part. Note the summation indices restrict which values of of the remaining $\{i_1,\dots,i_m\}$ are allowed.
      \item The green boxes are meant to pair up terms for the next equality in the proof, where a partial sum will telescope and cancel most terms except for some boundary terms. Note that the first and last green boxes containing only one term look similar to the $\cup_{m-1}$ definitions, except for slight differences in the index ranges.
  \end{itemize}
  }
  \label{cupMBoundaryProof_Part1}
\end{figure}

\begin{figure}[h!]
  \captionsetup{singlelinecheck=off}
  \centering
  \includegraphics[width=\linewidth]{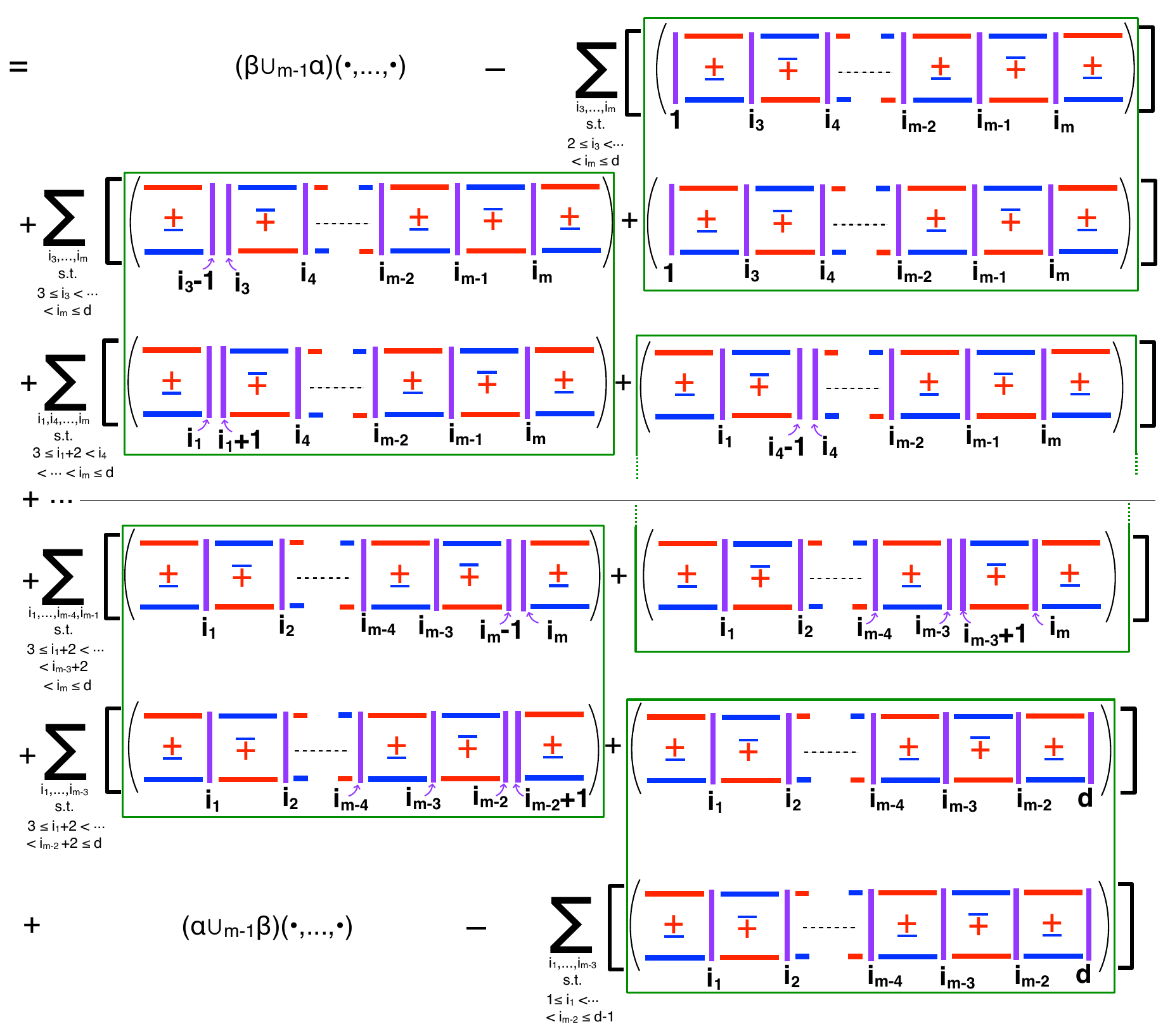}
  \caption[]{Part 2 of Proof of $\delta(\alpha \cup_m \beta) + (\delta \alpha) \cup_m \beta + \alpha \cup_m (\delta \beta) = \alpha \cup_{m-1} \beta + \beta \cup_{m-1} \alpha$ (for m even).
  \begin{itemize}
      \item The first and last lines correspond to the first and last green boxes of the previous part. Note that those green boxes are \textit{almost} the sums $\alpha \cup_{m-1} \beta$ and $\beta \cup_{m-1} \alpha$, except the indices being summed over are different than the $\cup_{m-1}$ definition. So, the first and last lines hold because they are precisely the $\cup_{m-1}$ products minus the terms listed.
      \item Again, every other line is obtained by a partial sum over one of the $\{i_1,\dots,i_m\}$, telescoping each of the green boxes of Part 1.
      \item Note again that we can pair up terms, as in the green boxes, where we will telescope with respect to some partial sums in the next part.
  \end{itemize}
  }
  \label{cupMBoundaryProof_Part2}
\end{figure}

\begin{figure}[h!]
  \captionsetup{singlelinecheck=off}
  \centering
  \includegraphics[width=\linewidth]{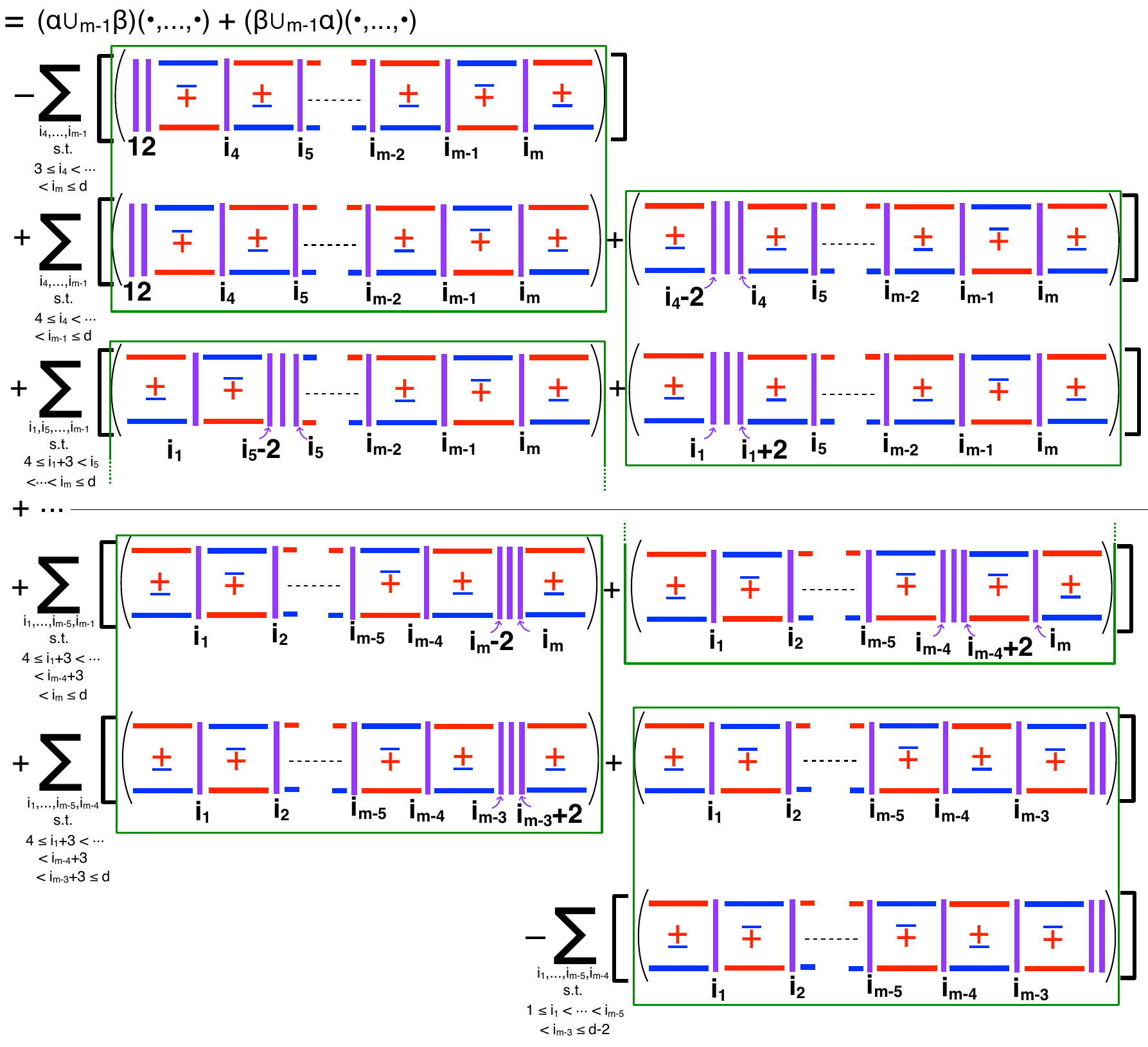}
  \caption[]{Part 3 of Proof of $\delta(\alpha \cup_m \beta) + (\delta \alpha) \cup_m \beta + \alpha \cup_m (\delta \beta) = \alpha \cup_{m-1} \beta + \beta \cup_{m-1} \alpha$ (for m even). 
  \begin{itemize}
      \item First, separate out the $\alpha \cup_{m-1} \beta + \beta \cup_{m-1} \alpha$, as on the top line. 
      \item Again, each other line is obtained by a telescoping sum of each of the green boxes of Part 2, leaving some boundary terms.
      \item And again, we can pair up these terms via the green boxes which we use for telescoping the sum in the next step. 
  \end{itemize}
  }
  \label{cupMBoundaryProof_Part3}
\end{figure}

\begin{figure}[h!]
  \captionsetup{singlelinecheck=off}
  \centering
  \includegraphics[width=\linewidth]{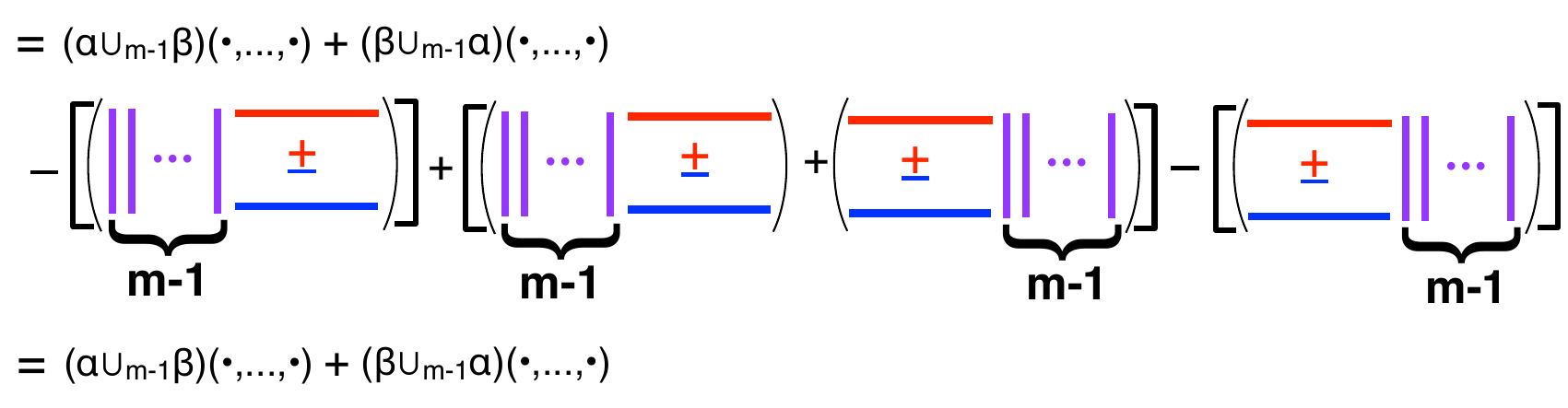}
  \caption[]{Part 4 of Proof of $\delta(\alpha \cup_m \beta) + (\delta \alpha) \cup_m \beta + \alpha \cup_m (\delta \beta) = \alpha \cup_{m-1} \beta + \beta \cup_{m-1} \alpha$ (for m even).
  \begin{itemize}
      \item We can iterate this process of telescoping the terms and pairing them up with each other. Each time we do this, the number of lines in the equation decreases by one.
      \item If we carry this process out the end, we are left with terms where all of the `purple walls' are grouped together like this. Each term in the square brackets is obtained by telescoping terms from the previous equality.
      \item We can cancel all these terms and we're left with just the $\alpha \cup_{m-1} \beta + \beta \cup_{m-1} \alpha$ part, which is what we expected.
  \end{itemize}
  }
  \label{cupMBoundaryProof_Part4}
\end{figure}

\subsection{$\delta(\alpha \cup_m \beta) = \delta \alpha \cup_m \beta + \alpha \cup_m \delta \beta + \alpha \cup_{m-1} \beta + \beta \cup_{m-1} \alpha$}
Now let's verify that the fundamental higher cup product equation 
\begin{equation*}
    \delta(\alpha \cup_m \beta) = \delta \alpha \cup_m \beta + \alpha \cup_m \delta \beta + \alpha \cup_{m-1} \beta + \beta \cup_{m-1} \alpha
\end{equation*}
of Prop~\ref{prop:cupM_coboundary} holds for these definitions. We'll first treat the special cases of $\cup_1$ and $\cup_2$ in more detail and use some lessons there to build up to the general proof. 

We saw before that $\delta(\alpha \cup_0 \beta) = \delta \alpha \cup_0 \beta + \alpha \cup_0 \delta \beta$ is because the coboundary $\delta$ replaces the $\bullet$ indices with either choice of $\{+,-\}$, and these replacements cancel out among each of the terms $\{\delta(\alpha \cup_0 \beta), \delta \alpha \cup_0 \beta, \alpha \cup_0 \delta \beta\}$. Similarly, we'll have that there will be many cancellations in $\delta(\alpha \cup_m \beta) + \delta \alpha \cup_m \beta + \alpha \cup_m \delta \beta$. However, for the $\cup_m$ case not all of the terms will cancel. This is because the term $\delta \alpha \cup_m \beta$ (similarly $\alpha \cup_m \delta \beta$) will have terms like 
\begin{equation*}
\delta\alpha(\dots,\bullet,\dots)\beta(\dots,\bullet,\dots) = \sum_{\dots} \alpha(\dots,+,\dots)\beta(\dots,\bullet,\dots) + \alpha(\dots,-,\dots)\beta(\dots,\bullet,\dots)
\end{equation*}
which are not cancelled out by $\delta(\alpha \cup_m \beta)$. In particular because the terms above will have $\alpha$ and $\beta$ sharing $(m-1)$ $\bullet$-indices, whereas every term in $\delta(\alpha \cup_m \beta)$ will have $\alpha$ and $\beta$ sharing $m$ $\bullet$-indices. But, all of the $m$ $\bullet$-indices terms will indeed cancel, and $\delta(\alpha \cup_m \beta) + \delta \alpha \cup_m \beta + \alpha \cup_m \delta \beta$ will only be left with terms like the one above where $\alpha$ and $\beta$ share exactly $(m-1)$ $\bullet$ indices. And after some cancellations between those terms, the remaining sum will end up exactly equalling $\alpha \cup_m \beta + \beta \cup_m \alpha$.

We'll illustrate this first for $\cup_1$.
\begin{prop}
$\delta(\alpha \cup_1 \beta) = \delta \alpha \cup_1 \beta + \alpha \cup_1 \delta \beta + \alpha \cup_{0} \beta + \beta \cup_{0} \alpha$
\end{prop}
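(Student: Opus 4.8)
The plan is to follow the diagrammatic argument of Figure~\ref{cup1BoundaryProof}, rendered as an explicit bookkeeping of terms. Working over $\Z_2$, the claimed identity is equivalent to
\begin{equation*}
\delta(\alpha \cup_1 \beta) + \delta\alpha \cup_1 \beta + \alpha \cup_1 \delta\beta = \alpha \cup_0 \beta + \beta \cup_0 \alpha,
\end{equation*}
and, as in all the previous computations, it suffices to evaluate both sides on the top cell $(\bullet_1 \cdots \bullet_d)$: on a cell carrying some $\pm$ entries the identity follows by holding those coordinates fixed and restricting the entire argument to the remaining $\bullet$-coordinates. A quick degree count shows that on $(\bullet_1 \cdots \bullet_d)$ every term selects the bidegree $p+q=d$ component of the graded product, so there is no loss in treating $\alpha,\beta$ as full inhomogeneous cochains throughout.

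First I would expand each of the three terms on the left as a sum of monomials $\alpha(z_1\cdots z_d)\beta(z'_1\cdots z'_d)$ and classify each by $r$, the number of coordinates with $z_k=z'_k=\bullet$ (the $\bullet$-indices \emph{shared} by the two arguments). From the $\mathrm{Int}_1$ definition, $\delta(\alpha\cup_1\beta)$ produces only monomials with $r=1$, while $\delta\alpha\cup_1\beta$ and $\alpha\cup_1\delta\beta$ each split into monomials with $r=1$ (when $\delta$ replaces a $\bullet$ at a non-shared coordinate) and monomials with $r=0$ (when $\delta$ replaces the single shared $\bullet$). To dispose of the $r=1$ part, fix the shared $\bullet$ at a coordinate $i$ and treat it as a spectator: the collection of $r=1$ monomials is then precisely $\delta(\alpha\cup_0\beta)+\delta\alpha\cup_0\beta+\alpha\cup_0\delta\beta$ evaluated on the remaining $d-1$ coordinates, in the sign convention dictated by the factors $(-1)^{\ell(j)}$ (an axis reflection for $j>i$, under which both $\delta$ and $\cup_0$ are natural). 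By Proposition~\ref{prop:cup0_coboundary} this vanishes, so summing over $i$ kills every $r=1$ term.

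What remains is the $r=0$ part, and recognizing it as $\alpha\cup_0\beta+\beta\cup_0\alpha$ is the crux. Call a coordinate in \emph{state A} if its pair is $(\bullet,-)$ or $(+,\bullet)$, and in \emph{state B} if it is $(\bullet,+)$ or $(-,\bullet)$ (within a state, the two options correspond to whether $\alpha$ or $\beta$ carries the $\bullet$). Then $\alpha\cup_0\beta$ (resp.\ $\beta\cup_0\alpha$) is the sum over configurations with every coordinate in state A (resp.\ state B). Define partial sums $P_i$ over configurations whose coordinates $1,\dots,i$ are in state A and $i+1,\dots,d$ in state B, so that $P_d=\alpha\cup_0\beta$ and $P_0=\beta\cup_0\alpha$. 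Tracking the sign $s\in\{+,-\}$ emitted when $\delta$ hits the shared $\bullet$ at $i$, I expect to show that the $r=0$ monomials from $\delta\alpha\cup_1\beta$ furnish exactly the $\beta$-carrying half of $P_i$ (for $s=+$) and of $P_{i-1}$ (for $s=-$), while those from $\alpha\cup_1\delta\beta$ furnish the $\alpha$-carrying half of $P_i$ and $P_{i-1}$; together the two halves assemble into $P_i+P_{i-1}$ for each $i$. Summing over $i=1,\dots,d$ then telescopes over $\Z_2$ to $P_0+P_d=\alpha\cup_0\beta+\beta\cup_0\alpha$, finishing the argument.

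The main obstacle is this telescoping step: one must correctly match the $\pm$ sign produced by the coboundary at the reconnection coordinate to the state-A/state-B bookkeeping and verify that the $\alpha$- and $\beta$-carrying halves reconstitute the full $P_i$. Once that dictionary is nailed down the cancellation is automatic, and the $r=1$ step is a direct appeal to the already-established $\cup_0$ Leibniz rule. The same scheme --- reduce to the top cell, cancel the maximally-sharing terms by the lower identity, telescope the remainder --- is precisely what Figures~\ref{cup2BoundaryProof}--\ref{cupMBoundaryProof_Part4} organize for general $m$.
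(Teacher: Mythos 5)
Your proposal is correct and follows essentially the same route as the paper's proof in Fig.~\ref{cup1BoundaryProof}: the monomials that retain a shared $\bullet$-index cancel for the same reason as in the $\cup_0$ Leibniz rule, and the leftover monomials (where $\delta$ hits the shared index) are grouped into the partial sums $P_i$ and telescoped over $\Z_2$ to $P_0+P_d=\beta\cup_0\alpha+\alpha\cup_0\beta$, which is exactly the paper's ``group the diagonal terms and telescope'' step. Your packaging of the first cancellation as an appeal to Proposition~\ref{prop:cup0_coboundary} via a coordinate reflection is a slightly more formal version of what the paper does directly, but it is not a different argument.
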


\begin{proof}
See Figure~\ref{cup1BoundaryProof}.
\end{proof}

The proof for $\cup_2$ will be a few more steps and will borrow from what we learned in the $\cup_1$ proof. We illustrate it next. 

\begin{prop}
$\delta(\alpha \cup_2 \beta) = \delta \alpha \cup_2 \beta + \alpha \cup_2 \delta \beta + \alpha \cup_{1} \beta + \beta \cup_{1} \alpha$
\end{prop}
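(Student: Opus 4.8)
The plan is to follow the same diagrammatic bookkeeping that establishes the $\cup_0$ and $\cup_1$ cases (Prop~\ref{prop:cup0_coboundary} and Figure~\ref{cup1BoundaryProof}), now specialized to $m=2$ as summarized in Figure~\ref{cup2BoundaryProof}. Working over $\Z_2$, it suffices to verify the identity after evaluating both sides on the generic cell $(\bullet_1 \cdots \bullet_d)$ all of whose coordinates are $\bullet$: any $+$ or $-$ coordinate is carried along identically in the $\alpha$- and $\beta$-arguments, so the claim reduces to the all-$\bullet$ case. Under this reduction every term on either side is a product $\alpha(z)\beta(z')$ indexed by a pair of strings $(z,z')$ over $\{\bullet,+,-\}$, which I draw as two stacked rows of boxes. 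The first move is to classify all terms appearing in $\delta(\alpha\cup_2\beta) + \delta\alpha\cup_2\beta + \alpha\cup_2\delta\beta$ by the number $r$ of positions at which $z$ and $z'$ are \emph{both} $\bullet$ (the number of bullet indices shared by $\alpha$ and $\beta$).

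The first step is the cancellation of the $r=2$ terms. Every term of $\delta(\alpha\cup_2\beta)$ has $r=2$, since $\delta$ fixes one coordinate to $\pm$ in both arguments simultaneously while the remaining $\mathrm{Int}_2$ frame still forces two common bullets. The $r=2$ contributions of $\delta\alpha\cup_2\beta$ (resp.\ $\alpha\cup_2\delta\beta$) are exactly those in which $\delta$ replaces a bullet of $\alpha$ (resp.\ $\beta$) \emph{outside} the two shared indices $i_1<i_2$. I would show these $r=2$ terms cancel in pairs mod $2$ by precisely the mechanism of the $\cup_0$ Leibniz rule: with the two shared bullets held fixed, the coboundary acting on the remaining coordinates reproduces the closed identity $\delta(\alpha\cup_0\beta)=\delta\alpha\cup_0\beta+\alpha\cup_0\delta\beta$ of Prop~\ref{prop:cup0_coboundary}, which has no remainder. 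After this cancellation only $r=1$ terms survive, and they arise solely from $\delta\alpha\cup_2\beta$ and $\alpha\cup_2\delta\beta$, namely from replacing one of the two bullets at $i_1$ or $i_2$.

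The second step is to telescope the surviving $r=1$ terms. As in the first equality of Figure~\ref{cup2BoundaryProof}, they organize into two families indexed by a pair $i<j$: one family from $\delta\alpha$ (with $j$ the broken bullet and $i$ the surviving shared bullet) and one from $\delta\beta$. For a fixed surviving index I would sum over the broken index; because the parity $\ell(j)$ of a coordinate's position (which dictates the $(-1)^{\ell}$ pattern in the definition of $\mathrm{Int}_m$) flips each time the broken bullet crosses the surviving one, consecutive terms coincide and cancel mod $2$, telescoping each family down to its endpoints $i=1$ and $j=d$. The remaining top terms reassemble, after matching index ranges, into $\alpha\cup_1\beta$ and $\beta\cup_1\alpha$ (they differ from the naive sums only at the summation boundaries), while the two cross terms at the corners cancel against one another; this is the third and fourth equalities of Figure~\ref{cup2BoundaryProof}. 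Collecting everything yields $\alpha\cup_1\beta+\beta\cup_1\alpha$, as claimed.

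The main obstacle, and the only genuinely delicate point, is the index-range bookkeeping in the telescoping: one must check that after the interior cancellations the leftover boundary terms match the cubical $\cup_1$ definition (via $\mathrm{Int}_1$) on the nose, with the single shared bullet ranging correctly over all $d$ positions and with the corner contributions at $(i,j)=(1,d)$ cancelling rather than contributing spuriously. This $m=2$ computation is deliberately the warm-up for the general even-$m$ case: the same telescope-and-pair procedure is then iterated $m$ times, each pass lowering the number of lines by one and producing the green-boxed remainders of Figures~\ref{cupMBoundaryProof_Part1}--\ref{cupMBoundaryProof_Part4}, until only $\alpha\cup_{m-1}\beta+\beta\cup_{m-1}\alpha$ remains.
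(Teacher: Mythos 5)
Your proposal is correct and follows essentially the same route as the paper's proof (Figure~\ref{cup2BoundaryProof}): reduce to the all-$\bullet$ cell, cancel the terms where $\alpha$ and $\beta$ share both bullet indices via the $\cup_0$-Leibniz mechanism, then telescope the surviving one-shared-bullet terms over the broken index and account for the boundary-of-summation and corner terms to recover $\alpha\cup_1\beta+\beta\cup_1\alpha$. The one delicate point you flag — matching the leftover endpoint terms against the index ranges in the $\mathrm{Int}_1$ definition — is exactly the content of the third and fourth equalities in the paper's figure.
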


\begin{proof}
See Figure~\ref{cup2BoundaryProof}.
\end{proof}

The general proof for $\cup_m$ will again borrow from what we learned about telescoping terms in the $\cup_1$ and $\cup_2$ proofs. It will be a recursive process of telescoping and we illustrate it in multiple figures.

\begin{proof}[Proof of Prop~\ref{prop:cupM_coboundary}]
See Figs.~\ref{cupMBoundaryProof_Part1}-\ref{cupMBoundaryProof_Part4}. We give the proof for $m$ even, but the case of $m$ odd is basically identical. 
\end{proof}

\afterpage{\clearpage}

\section{Proof of $\cup$ and $\cup_m$ identities over $\Z$} \label{app:proofOfCupM_OverZ}
Now, we prove Propositions~\ref{prop:cup0_coboundary_overZ},\ref{prop:cupM_coboundary_overZ}. In the following, to slightly ease the minus signs in the computations, we will consider a deformed version, the $\cup'_k$ product defined by the same pattern
\begin{equation} \label{cupPrimeK_OverZ_Formula}
(\alpha^{n+k} \cup'_k \beta^{d-n})(\bullet_1, \cdots, \bullet_d) := \sum_{1 =: i_0 \le i_1 < \cdots < i_k \le i_{k+1} := d} \sum_{\substack{(z_1 \cdots z_d), (z'_1 \cdots z'_d) \\ \text{each } z_{i_1} = z'_{i_1} = \cdots = z_{i_k} = z'_{i_k} = \bullet \\ z_j = (-1)^{\ell(j)}+ \text{ and } z'_j = \bullet \text{, OR}\\ z_j = \bullet \text{ and } z'_j = (-1)^{\ell(j)}- \\ \text{where } i_{\ell(j)} < j < i_{\ell(j)+1}}} (-1)^s \alpha(z_1 \cdots z_d) \beta(z'_1 \cdots z'_d)
\end{equation}
where instead the sign factors are
\begin{equation}
(-1)^s = (-1)^{\text{\# of total + signs}} \text{sgn}
\begin{pmatrix}
1   & \cdots & \cdots & \cdots & \cdots & \cdots & \cdots & \cdots & d \\
i_1 & \to    & i_k    & b_1    & \to    & b_{n}  & a_1    & \to    & a_{d-n-k}
\end{pmatrix}.
\end{equation}

We will show that this deformed version satisfies, for any $\alpha$ an $(n-1+k)$-cochain and $\beta$ a $(d-n)$-cochain
\begin{equation}
    \big(\delta\alpha \cup'_k \beta - (-1)^{n-1} \alpha \cup'_k \delta\beta \big) - (-1)^k \delta(\alpha \cup'_k \beta) = (-1)^{d-n} \alpha \cup'_{k-1} \beta + (-1)^{n(d-n-k) + k}\beta \cup'_{k-1} \alpha 
\end{equation}
which is equivalent to the Propositions~\ref{prop:cup0_coboundary_overZ},\ref{prop:cupM_coboundary_overZ} for $\cup$.

\begin{proof}
Analogously to the $\Z_2$ case, we will start by analyzing the terms coming from $\delta(\alpha \cup'_k \beta)$ and show that they all get cancelled out by corresponding terms in $\delta\alpha \cup'_k \beta$ and $\alpha \cup'_k \delta\beta$. The strategy going forward will look like the $\Z_2$ case, where we showed all the `leftover' terms from $\delta\alpha \cup'_k \beta - (-1)^{n-1}\alpha \cup'_k \delta\beta$ give the $\alpha \cup'_{k-1} \beta$ and $\beta \cup'_{k-1} \alpha$ terms plus a series of other terms that all inductively cancelled by a series of telescoping sums. We will see how the sign factors introduced here conspire to cancel out in the telescoping sum. 

Let's consider a term in $\delta(\alpha \cup'_k \beta)$. 
In general, any term $\alpha(z_1 \cdots z_d)\beta(z'_1 \cdots z'_d)$ will have some index $1 \le i \le d$ for which $z_i = z'_i \in \{+,-\}$ which comes from the coboundary $\delta$. 
And there will be some `shared' indices $\{i_1 < \cdots < i_k\} \subset \{1 \cdots d\} \backslash \{i\}$ for which each ${z_{{i_{\dots}}}} = {z'_{{i_{\dots}}}} = \bullet$. 
And defining $i_0 := 0, i_{k+1} :=d$, we'll have that there is some $J \in \{0,\dots,k\}$ for which $i_J < i < i_{J+1}$.
Given these choices $i,i_1,\cdots,i_k$, and for some fixed $z_j$ $j \not\in \{i,i_1,\dots,i_k\}$ constrained as in Eq.~\eqref{cupK_OverZ_Formula}.
, we'll have that a term will look like
\begin{equation*}
\text{from } \delta(\alpha \cup'_k \beta) \,\,\,
\begin{cases}
(-1)^{s_1} &\alpha(\cdots \bullet_{i_1} \cdots \bullet_{i_J} \cdots +_i \cdots \bullet_{i_{J+1}} \cdots \bullet_{i_k} \cdots) \\
&\cdot \beta(\cdots \bullet_{i_1} \cdots \bullet_{i_J} \cdots +_i \cdots \bullet_{i_{J+1}} \cdots \bullet_{i_k} \cdots) \\
+ (-1)^{s_2} &\alpha(\cdots \bullet_{i_1} \cdots \bullet_{i_J} \cdots -_i \cdots \bullet_{i_{J+1}} \cdots \bullet_{i_k} \cdots) \\
&\cdot \beta(\cdots \bullet_{i_1} \cdots \bullet_{i_J} \cdots -_i \cdots \bullet_{i_{J+1}} \cdots \bullet_{i_k} \cdots)
\end{cases}
\end{equation*}
where the $\cdots$ represent the $z_j$,$j \not\in \{i,i_1,\dots,i_k\}$ and the $\pm_i$ represent $+$ or $-$ labels on the $i^{th}$ coordinate. We'll write out the signs $(-1)^{s_{1,2}}$ later on. 

Now we'll consider corresponding terms in $\delta\alpha \cup'_k \beta$. As before, for any term $\delta\alpha(\tilde{z}_1 \cdots \tilde{z}_d)\beta(z'_1 \cdots z'_d)$ from the $\cup'_k$ formula will select out some $1 \le i_1 \cdots i_k \le d$ for which ${\tilde{z}_{{i_{\dots}}}} = {z'_{i_{\dots}}} = \bullet$. Then when we expand out $\delta\alpha(\tilde{z}_1 \cdots \tilde{z}_d)$ into a sum over $\alpha(z_1 \cdots z_d)$, we'll have  that the terms correspond to those where a $\bullet$ in the arguments $\tilde{z}$ of $\delta\alpha$ gets turned into a $+$ or $-$ one at a time. The terms that cancel out the $\delta(\alpha \cup'_k \beta)$ will be those for which the $\bullet$ that gets replaced is \textit{different} from one of the $\bullet_{i_1},\dots,\bullet_{i_k}$. And the terms that contribute to the $\alpha \cup'_{k-1} \beta$ and $\beta \cup'_{k-1} \alpha$ and cancel out in the telescoping sum will be those for which one of the $\bullet_{i_1},\dots,\bullet_{i_k}$ gets replaced by $+$ or $-$. A term from $\delta\alpha \cup'_k \beta$ cancelling the $\delta(\alpha \cup'_k \beta)$ will come from $\bullet_i$ turning into $\pm$ and will look like
\begin{equation*}
\text{from } \delta\alpha \cup'_k \beta \,\,\,
\begin{cases}
(-1)^{s_3} &\alpha(\cdots \bullet_{i_1} \cdots \bullet_{i_J} \cdots +_i \cdots \bullet_{i_{J+1}} \cdots \bullet_{i_k} \cdots) \\
&\cdot \beta(\cdots \bullet_{i_1} \cdots \bullet_{i_J} \cdots (-1)^J -_i \cdots \bullet_{i_{J+1}} \cdots \bullet_{i_k} \cdots) \\
+ (-1)^{s_4} &\alpha(\cdots \bullet_{i_1} \cdots \bullet_{i_J} \cdots -_i \cdots \bullet_{i_{J+1}} \cdots \bullet_{i_k} \cdots) \\
&\cdot \beta(\cdots \bullet_{i_1} \cdots \bullet_{i_J} \cdots (-1)^J -_i \cdots \bullet_{i_{J+1}} \cdots \bullet_{i_k} \cdots)
\end{cases}
\end{equation*}
where the $(-1)^J -_i$ argument goes into $\beta$ as to be consistent with conditions in Eq.~\eqref{cupK_OverZ_Formula}. 

Similarly, we'll have that the terms from $\alpha \cup_k \delta\beta$ that go to cancel the $\delta(\alpha \cup'_k \beta)$ will look like
\begin{equation*}
\text{from } \alpha \cup'_k \delta\beta \,\,\,
\begin{cases}
(-1)^{s_5} &\alpha(\cdots \bullet_{i_1} \cdots \bullet_{i_J} \cdots (-1)^J +_i \cdots \bullet_{i_{J+1}} \cdots \bullet_{i_k} \cdots) \\
&\cdot \beta(\cdots \bullet_{i_1} \cdots \bullet_{i_J} \cdots +_i \cdots \bullet_{i_{J+1}} \cdots \bullet_{i_k} \cdots) \\
+ (-1)^{s_6} &\alpha(\cdots \bullet_{i_1} \cdots \bullet_{i_J} \cdots (-1)^J +_i \cdots \bullet_{i_{J+1}} \cdots \bullet_{i_k} \cdots) \\
&\cdot \beta(\cdots \bullet_{i_1} \cdots \bullet_{i_J} \cdots -_i \cdots \bullet_{i_{J+1}} \cdots \bullet_{i_k} \cdots)
\end{cases}
\end{equation*}
coming from replacing $\bullet_i$, $i \not\in \{i_1 \cdots i_k\}$ $\pm$, and where the arguments $(-1)^J +_i$ in $\alpha$ come from the constraints from the $\cup'_k$ formula. 

Now, we write down formulas for the signs $(-1)^{s_{1 \cdots 6}}$. In the following, we'll use ``$\text{total \# of + signs}$'' to refer to the total number of $+$ signs among all $z_1,z'_1,\dots,z_d,z'_d$. And, we'll refer to $\tilde{b}_1 < \cdots < \tilde{b}_{n-1}$ as the subset of arguments ${z'_{\dots}}$ of $\beta$ that are either of $\pm$, \textit{except for} $i$. And we refer to $b_1 < \cdots < b_n = \{\tilde{b}_1, \dots, \tilde{b}_{n-1}\} \cup \{i\}$. Similarly, $\tilde{a}_1 < \cdots < \tilde{a}_{d-n-k}$ refer to the subset of arguments ${z_{\dots}}$ of $\alpha$ that are either of $\pm$ except $i$, and $a_1 < \cdots < a_{d-n+k+1} = \{\tilde{a}_1, \dots, \tilde{a}_{d-n+k}\} \cup \{i\}$. In addition, we'll define $\ell_b(i) \in \{1,\dots,n\}$ and $\ell_a(i) \in \{1,\dots,d-n+k+1\}$ to be the positions of $i$ within the sets $b_1 < \cdots < b_n$ and $a_1 < \cdots < a_{d-n+k+1}$ respectively. We'll have
\begin{equation*}
\begin{split}
(-1)^{s_1} &= \text{sgn}
\begin{pmatrix}
1   & \cdots & \cdots & i-1         & ;   & i+1             & \cdots      & \cdots & d \\
i_1 & \to    & i_k    & \tilde{b}_1 & \to & \tilde{b}_{n-1} & \tilde{a}_1 & \to    & \tilde{a}_{d-n-k} 
\end{pmatrix}
(-1)^{(\text{total \# of + signs})-1} (-1)^i \\
(-1)^{s_2} &= \text{sgn}
\begin{pmatrix}
1   & \cdots & \cdots & i-1         & ;   & i+1             & \cdots      & \cdots & d \\
i_1 & \to    & i_k    & \tilde{b}_1 & \to & \tilde{b}_{n-1} & \tilde{a}_1 & \to    & \tilde{a}_{d-n-k} 
\end{pmatrix}
(-1)^{(\text{total \# of + signs})} (-1)^i \\
(-1)^{s_{3,4}} &= \text{sgn}
\begin{pmatrix}
1   & \cdots & \cdots & \cdots & \cdots & \cdots & \cdots      & \cdots & d \\
i_1 & \to    & i_k    & b_1    & \to    & b_n    & \tilde{a}_1 & \to    & \tilde{a}_{d-n-k} 
\end{pmatrix}
(-1)^{(\text{total \# of + signs})} (-1)^{J + \ell_b(i)} \\
(-1)^{s_{5,6}} &= \text{sgn}
\begin{pmatrix}
1   & \cdots & \cdots & \cdots      & \cdots & \cdots          & \cdots & \cdots & d \\
i_1 & \to    & i_k    & \tilde{b}_1 & \to    & \tilde{b}_{n-1} & a_1    & \to    & a_{d-n-k+1} 
\end{pmatrix}
(-1)^{(\text{total \# of + signs})} (-1)^{J + \ell_a(i)} 
\end{split}
\end{equation*}
First, note that the factors $(-1)^\text{\# of + signs}$ appears because of the $\cup'_k$ defintion and that the $\mp$ sign in the definition of $\delta$ in Eq.~\eqref{coboundaryDefinitionOverZ} gives a $(-1)$ whenever $\bullet$ is replaced with a $+$. Note also that the $(-1)^{s_1}$ term has a factor of $(-1)^{\text{\# of + signs}-1}$ from these factors because the two $+_i,+_i$ in $\alpha,\beta$ only gained one $(-1)$ factor in total from $\delta$ in $\delta(\alpha \cup'_k \beta)$. The permutations given are the ones that appear in the $\cup'_k$ definitions, noting that for $(-1)^{s_{1,2}}$ the permutations only act on the indices $\{1 \cdots d\} \backslash \{i\}$. The factors of $(-1)^{i}$ in $(-1)^{s_{1,2}}$ come from the corresponding factor $(-1)^\ell$ in Eq~\eqref{coboundaryDefinitionOverZ}. And the $(-1)^{J+\ell_{b,a}(i)}$ factors in $(-1)^{s_{\{3,4\},\{5,6\}}}$ come from the fact that the $\bullet$ that gets replaced by $i$ in $\alpha$ (resp $\beta$) corresponds to a $\pm$ coordinate in $\beta$ (resp $\alpha$) which appears as the $(J+\ell_b(i))^{th}$ (resp $(J+\ell_a(i))^{th}$) $\bullet$ coordinate in $\alpha$ (resp $\beta$). 

In total, we note that we can rewrite the above signs as
\begin{equation*}
\begin{split}
(-1)^s = & (-1)^\text{total \# of + signs} \text{sgn}
\begin{pmatrix}
1   & \cdots & \cdots & \cdots      & \cdots & \cdots          &  \cdots & \cdots      & \cdots & d \\
i_1 & \to    & i_k    & \tilde{b}_1 & \to    & \tilde{b}_{n-1} & i       & \tilde{a}_1 & \to    & \tilde{a}_{d-n-k} 
\end{pmatrix} \\
& \times 
\begin{cases}
-(-1)^{k+n} \text{ if } s = s_1 \\
(-1)^{k+n} \text{ if } s = s_2 \\
(-1)^{J+n} \text{ if } s = s_{3,4} \\
-(-1)^{J} \text{ if } s = s_{5,6} \\
\end{cases}
\end{split}
\end{equation*}
by counting the signs needed to bring the original permutations to the one shown above. To reiterate, these signs came from looking at terms in $\delta(\alpha \cup'_k \beta)$ that we want to show are all eliminated by some terms in $\delta \alpha \cup'_k \beta$ and $\alpha \cup'_k \delta\beta$. From here we can see the combination of signs as in this proposition ensure that the subset of terms we've looked at in $(\delta\alpha^{n-1+k} \cup'_k \beta^{d-n} - (-1)^{n-1} \alpha^{n-1+k} \cup'_k \delta\beta^{d-n})$ cancel out all terms in $(-1)^k\delta(\alpha^{n-1+k} \cup'_k \beta^{d-n})$. In particular, if $J$ is even, we'd need
\begin{equation*}
(-1)^k (-1)^{s_1} = -(-1)^{n-1} (-1)^{s_5},\quad (-1)^k (-1)^{s_2} = (-1)^{s_4},\quad (-1)^{s_3} - (-1)^{n-1}(-1)^{s_6} = 0
\end{equation*}
corresponding to the cancellations of the $(z_i,z'_i) = (+_i,+_i),(-_i,-_i),(+_i,-_i)$ terms respectively. While for $J$ odd we need
\begin{equation*}
(-1)^k (-1)^{s_1} = (-1)^{s_3},\quad (-1)^k (-1)^{s_2} = -(-1)^{n-1}(-1)^{s_6}.\quad (-1)^{s_4} - (-1)^{n-1}(-1)^{s_5} = 0
\end{equation*}
corresponding to the cancellations of the $(z_i,z'_i) = (+_i,+_i),(-_i,-_i),(-_i,+_i)$ terms. One can check that these are true which indeed implies that all terms from $\delta(\alpha \cup'_k \beta)$ are accounted for. 

Note that at this point, the $k=0$ case for the $\cup'_0$ formula is now proved. 

Now we want to analyze the remaining terms in $(\delta\alpha \cup'_k \beta - (-1)^{n-1} \alpha \cup'_k \delta \beta) - (-1)^k \delta(\alpha \cup'_k \beta)$ and show that they give terms some terms proportional to $\alpha \cup'_{k-1} \beta$ and $\beta \cup'_{k-1} \alpha$ while all the remaining terms telescope away as in the proofs of the $\Z_2$ case in Figs.~\ref{cup1BoundaryProof},~\ref{cup2BoundaryProof},~\ref{cupMBoundaryProof_Part1}-\ref{cupMBoundaryProof_Part4}.

We'll first explicitly give the proof of these for the $\cup'_1$ case following Fig.~\ref{cup1BoundaryProof}. After this, we'll briefly outline the general case and leave the explicit computations to the reader.

Note that the first line of that figure follows from our analysis of the cancellations of $\delta(\alpha \cup'_k \beta)$. Now, we note that the terms that go into being proportional to $\beta \cup'_0 \alpha$ come from the terms $\delta\alpha \cup'_1 \beta$ with $i_1=1$ and $z_1 = \bullet_{i_1}$ being replaced by $-$ and from $-(-1)^{n-1} \alpha \cup'_1 \delta\beta$ with $i_1=1$ and $z'_1 = \bullet_{i_1}$ replaced with $+$. The sign factor for both of these terms from $\delta\alpha \cup'_1 \beta$ or $\alpha \cup'_1 \delta\beta$ can be written:
\begin{equation*}
\begin{split}
\text{sgn} &
\begin{pmatrix}
1   & \cdots      & \cdots & \cdots          &  \cdots     & \cdots & d \\
1   & \tilde{b}_1 & \to    & \tilde{b}_{n} & \tilde{a}_1 & \to    & \tilde{a}_{d-n-1} 
\end{pmatrix} (-1)^\text{\# of total + signs} \underbrace{(-1)}_{\text{from } \delta}  \text{  from  } \delta\alpha \cup'_1 \beta
\\
-(-1)^{n-1}
\text{sgn} &
\begin{pmatrix}
1   & \cdots      & \cdots & \cdots          &  \cdots     & \cdots & d \\
1   & \tilde{b}_1 & \to    & \tilde{b}_{n-1} & \tilde{a}_1 & \to    & \tilde{a}_{d-n}
\end{pmatrix} (-1)^\text{\# of total + signs} \underbrace{(-1)}_{\text{from } \delta}  \text{  from  } \alpha \cup'_1 \delta\beta
\end{split}
\end{equation*}
where $\tilde{b}$ are the coordinates $z'_j=\pm,j>1$ inputted into $\beta$ and the $\tilde{a}$ are the coordinates $z_j=\pm,j>1$ that get inputted into $\alpha$. And we note that the $\tilde{b}$ are labeled by $1 \to n$ and $\tilde{a}$ are enumerated by from $1 \to d-n-1$ for the $\delta\alpha \cup'_1 \beta$ terms, and they're labeled by $1 \to n-1$ and $1 \to d-n$ resp for the $\alpha \cup'_1 \delta\beta$. 
And, the sign factors that would come from the terms in $\beta \cup'_0 \alpha$ would be 
\begin{equation*}
\begin{split}
\text{sgn}
\begin{pmatrix}
1 & \cdots      & \cdots & \cdots            & \cdots      & \cdots & d \\
1 & \tilde{a}_1 & \to    & \tilde{a}_{d-n-1} & \tilde{b}_1 & \to    & \tilde{b}_{n} 
\end{pmatrix} (-1)^\text{\# of total + signs} \text{  for  } \begin{pmatrix}z_1 \\ z'_1\end{pmatrix} = \begin{pmatrix} - \\ \bullet \end{pmatrix}
\\
\text{sgn}
\begin{pmatrix}
1           & \cdots & \cdots          & \cdots & \cdots      & \cdots & d \\
\tilde{a}_1 & \to    & \tilde{a}_{d-n} & 1      & \tilde{b}_1 & \to    & \tilde{b}_{n-1} 
\end{pmatrix} (-1)^\text{\# of total + signs} \text{  for  } \begin{pmatrix}z_1 \\ z'_1\end{pmatrix} = \begin{pmatrix} \bullet \\ + \end{pmatrix}
\end{split}
\end{equation*}
Note that the top term above is the term we want to match to $\delta\alpha \cup'_1 \beta$ and the bottom term above is meant for $\alpha \cup'_1 \delta\beta$. And comparing the signs from the permutations gives that these contributions total up to $(-1)^{1+n+n(d-n)} \beta \cup'_0 \alpha$. 

We can carry out a similar analysis for the terms that contribute to being proportional to $\alpha \cup'_0 \beta$, which would be terms from $\delta\alpha \cup'_1 \beta$ with $i_1=d$ and $z_d = \bullet_{i_1}$ replaced with $+$, and from $\alpha \cup'_1 \delta\beta$ with $i_1=d$ and $z'_d = \bullet_{i_1}$ replaced with $-$. The sign factors for these terms would give:
\begin{equation*}
\begin{split}
\text{sgn} &
\begin{pmatrix}
1   & \cdots      & \cdots & \cdots          &  \cdots     & \cdots & d \\
d   & \tilde{b}_1 & \to    & \tilde{b}_{n} & \tilde{a}_1 & \to    & \tilde{a}_{d-n-1} 
\end{pmatrix} (-1)^\text{\# of total + signs} \underbrace{(-1)^{n+1}}_{\text{from } \delta}  \text{  from  } \delta\alpha \cup'_1 \beta
\\
-(-1)^{n-1}
\text{sgn} &
\begin{pmatrix}
1   & \cdots      & \cdots & \cdots          &  \cdots     & \cdots & d \\
d   & \tilde{b}_1 & \to    & \tilde{b}_{n-1} & \tilde{a}_1 & \to    & \tilde{a}_{d-n}
\end{pmatrix} (-1)^\text{\# of total + signs} \underbrace{(-1)^{d-n+1}}_{\text{from } \delta}  \text{  from  } \alpha \cup'_1 \delta\beta
\end{split}
\end{equation*}
And, the sign factors that come from $\alpha \cup'_0 \beta$ would look like 
\begin{equation*}
\begin{split}
\text{sgn}
\begin{pmatrix}
1           & \cdots & \cdots        & \cdots      & \cdots & \cdots            & d \\
\tilde{b}_1 & \to    & \tilde{b}_{n} & \tilde{a}_1 & \to    & \tilde{a}_{d-n-1} & d      
\end{pmatrix} (-1)^\text{\# of total + signs} \text{  for  } \begin{pmatrix}z_d \\ z'_d\end{pmatrix} = \begin{pmatrix} + \\ \bullet \end{pmatrix}
\\
\text{sgn}
\begin{pmatrix}
1           & \cdots & \cdots          & \cdots & \cdots      & \cdots & d \\
\tilde{b}_1 & \to    & \tilde{b}_{n-1} & d      & \tilde{a}_1 & \to    & \tilde{a}_{d-n}
\end{pmatrix} (-1)^\text{\# of total + signs} \text{  for  } \begin{pmatrix}z_d \\ z'_d\end{pmatrix} = \begin{pmatrix} \bullet \\ - \end{pmatrix}
\end{split}
\end{equation*}
This means that the signs from these permutations give contributions that total to $(-1)^{d+n}(\alpha \cup'_0 \beta)$.

Now, we should verify that the remaining terms left over indeed cancel out exactly, as in the telescoping sum in the second to third line of Fig.~\ref{cup1BoundaryProof}. Note that the telescoping part of the sum can be expressed:
\begin{equation*}
\begin{split}
\sum_{i=1}^{d-1} \sum_{\substack{(z_1 \cdots \hat{z}_i \hat{z}_{i+1} \cdots z_d) \\ (z'_1 \cdots \hat{z}'_i \hat{z}'_{i+1} \cdots z'_d)}} 
\bigg(
     & \alpha(\cdots +_i       -_{i+1}       \cdots) \beta (\cdots \bullet_i \bullet_{i+1} \cdots) ((-1)^{t_1}+(-1)^{t_2}) \\
   + & \alpha(\cdots +_i       \bullet_{i+1} \cdots) \beta (\cdots \bullet_i +_{i+1}       \cdots) ((-1)^{t_3}+(-1)^{t_4}) \\
   + & \alpha(\cdots \bullet_i -_{i+1}       \cdots) \beta (\cdots -_i       \bullet_{i+1} \cdots) ((-1)^{t_5}+(-1)^{t_6}) \\
   + & \alpha(\cdots \bullet_i \bullet_{i+1} \cdots) \beta (\cdots -_i       +_{i+1}       \cdots) ((-1)^{t_7}+(-1)^{t_8}) 
\bigg)
\end{split}
\end{equation*}
where $\hat{z}_i,\hat{z}_{i+1}$ mean don't sum over them (since they're summed over in the terms of the summand above). Here each of the $(-1)^{t_1} + (-1)^{t_2},\dots,(-1)^{t_7} + (-1)^{t_8}$ represent a sum of signs from two terms that we want to show are zero. We'll show explicitly how $(-1)^{t_1} + (-1)^{t_2} = (-1)^{t_3} + (-1)^{t_4} = 0$ and leave the other two cases to the readers. In the following we'll refer to $\tilde{b}$ as the $j$ for which $z'_j = \pm$ and $\tilde{a}$ as the $j$ for which $z_j = \pm$.

First, note that both the $t_1$ and $t_2$ terms come from terms in $\delta \alpha \cup'_1 \beta$. In particular, the $t_1$ term comes from $i'_1 = i$ in the $\cup'_1$ formula and replacing $z_i = \bullet_{i'_1} \to +_i$ while the $t_2$  term comes from $i''_1 = i+1$ in $\cup'_1$ and replacing $z_{i+1} = \bullet_{i''_1} \to -_{i+1}$. This means that the signs coming from these are
\begin{equation*}
\begin{split}
(-1)^{t_1} &= 
\text{sgn}
\begin{pmatrix}
1 & \cdots      & \cdots & \cdots        & \cdots & \cdots & d \\
i & \tilde{b}_1 & \to    & \tilde{b}_{n} & a'_1   & \to    & a'_{d-n-1} 
\end{pmatrix} (-1)^\text{\# of total + signs} (-1)^{\#\{\tilde{b} < i\}}
\\
(-1)^{t_2} &= 
\text{sgn}
\begin{pmatrix}
1   & \cdots      & \cdots & \cdots        & \cdots  & \cdots & d \\
i+1 & \tilde{b}_1 & \to    & \tilde{b}_{n} & a''_1   & \to    & a''_{d-n-1}     
\end{pmatrix} (-1)^\text{\# of total + signs} (-1)^{\#\{\tilde{b} < i+1\}}
\end{split}
\end{equation*}
where here we refer to $\{a'_1 < \cdots < a'_{d-n-1}\} = \{\tilde{a}_1 \cdots \tilde{a}_{d-n-2}\} \cup \{i\}$ and $\{a''_1 < \cdots < a''_{d-n-1}\} = \{\tilde{a}_1 \cdots \tilde{a}_{d-n-2}\} \cup \{i+1\}$, and $\#\{\tilde{b} < i\}$ is the number of $\tilde{b}$ coordinates less than $i$. Note that in this case $\#\{\tilde{b} < i\} = \#\{\tilde{b} < i+1\}$. From here we can see that $(-1)^{t_1} = -(-1)^{t_2}$ because the two permutation differ by a transposition of $i \leftrightarrow i+1$ which gives one $(-1)$ factor. 

Now, note that the $t_3$ term comes from $\delta \alpha \cup'_1 \beta$ with $z_i = i'_1 = i$ and replacing $\bullet_{i'_1} \to +_i$, and the $t_4$ term comes from $\alpha \cup'_1 \delta\beta$ with $i''_1 = i+1$ and replacing $z'_{i''_1} = \bullet_{i''_1} \to +_{i+1}$. Note that here, $\tilde{b}$ are indexed from $1 \to n-1$ and $\tilde{a}$ are indexed from $1 \to (d-n-1)$ to have indices compatible with cochain degrees. We have:
\begin{equation*}
\begin{split}
(-1)^{t_3} = 
& \text{sgn}
\begin{pmatrix}
1 & \cdots      & \cdots & \cdots          & \cdots  & \cdots & d \\
i & \tilde{b}_1 & \to    & \tilde{b}_{n-1} & a^{*}_1 & \to    & a^{*}_{d-n}
\end{pmatrix} (-1)^\text{\# of total + signs} (-1)^{\#\{\tilde{b} < i\}}
\\
(-1)^{t_4} = -(-1)^{n-1}
& \text{sgn}
\begin{pmatrix}
1   & \cdots  & \cdots & \cdots  & \cdots      & \cdots & d \\
i+1 & b^{*}_1 & \to    & b^{*}_n & \tilde{a}_1 & \to    & \tilde{a}_{d-n-1}
\end{pmatrix} (-1)^\text{\# of total + signs} (-1)^{\#\{\tilde{a} < i\} + 1}
\end{split}
\end{equation*}
where $\{a^{*}_1 < \cdots < a^{*}_{d-n}\} = \{\tilde{a}_1,\dots,\tilde{a}_{d-n-1}\} \cup \{i+1\}$ and $\{b^{*}_1 < \cdots < b^{*}_{n}\} = \{\tilde{b}_1,\dots,\tilde{b}_{n-1}\} \cup \{i\}$, and $\#\{\tilde{b} < i\}$ is as before and $\#\{\tilde{a} < i\}$ is the number of $\tilde{a}$ less than $a$. These all imply that 
\begin{equation*}
-(-1)^{t_3} = (-1)^{t_4} = 
\text{sgn}
\begin{pmatrix}
1       & \cdots & \cdots  & \cdots  & \cdots & d \\
b^{*}_1 & \to    & b^{*}_n & a^{*}_1 & \to    & a^{*}_{d-n}
\end{pmatrix} (-1)^\text{\# of total + signs}
\end{equation*}
which implies that the terms cancel.

Similar arguments give that $(-1)^{t_5}+(-1)^{t_6}=(-1)^{t_7}+(-1)^{t_8}=0$. And at this point, the $\cup'_1$ formula is proved over $\Z$. We note that essentially the exact same procedure of analyzing the signs of permutations works for general $\cup_k$ products to reproduce the result of Part 1 of the proof over $\Z_2$ in Fig.~\ref{cupMBoundaryProof_Part1}. 

From this point, the proof of the rest of the iterated telescoping argument for the $\cup'_k$ proof follows the same pattern as in Figs.~\ref{cupMBoundaryProof_Part1}-\ref{cupMBoundaryProof_Part4}. For example, Part 2 of the proof follows from two things. First, a computation showing that the first and last green boxes of Fig.~\ref{cupMBoundaryProof_Part1} give $(-1)^{n+k}\big(\alpha \cup'_{k-1} \beta + (-1)^{n(d-n-k+1)}\beta \cup'_{k-1} \alpha\big)$ minus the leftover terms on the first and last lines of Fig.~\ref{cupMBoundaryProof_Part2}. Second, by considering the collection of minus signs from the $\cup'_k$ definition, we'll have that all other pairs of terms in green boxes will have opposite signs that cancel out. It will always be the case from the $\cup'_k$ and $\delta$ definitions that there is a factor of $(-1)^\text{\# total + signs}$, a sign of some permutation, and the $(-1)^\ell$ term from $\delta$, and carefully writing out the sign factors cancel out in a similar way from the explicit account above of the first cancellations. Then we can iterate this procedure as in Figs.~\ref{cupMBoundaryProof_Part3}-\ref{cupMBoundaryProof_Part4} and the same procedure and sign cancellations hold eventually only leaving behind $(-1)^{n+k}\big(\alpha \cup'_{k-1} \beta + (-1)^{n(d-n-k+1)}\beta \cup'_{k-1} \alpha\big)$.
\end{proof}

\section{Framings of Curves, winding numbers, and computing induced spin strurctures} \label{app:framingsOfCurvesAndWindings}

\subsection{Framings}
First, we quickly review the framings that will be necessary. Since we're on a hypercubic lattice, it'll be a bit easier to do this than in the case of triangulations

First, we note that the background framing will be nothing more than the constant frame of vector fields that we used to define the thickening and shifting. So, we define the \textit{background framing} as
\begin{equation}
F^\text{bkgd} =
  \left(\begin{array}{@{}ccccccc}
    1      & 1                 & \cdots & 1                 \\
    1      & \frac{1}{2}       & \cdots & \frac{1}{d}       \\
    \vdots &                   &        & \vdots            \\
    1      & \frac{1}{2^{d-3}} & \cdots & \frac{1}{d^{d-3}} \\  \hline
    1      & \frac{1}{2^{d-2}} & \cdots & \frac{1}{d^{d-2}} \\ 
    1      & \frac{1}{2^{d-1}} & \cdots & \frac{1}{d^{d-1}} \\ 
  \end{array}\right)
\end{equation}
whose rows will give the local coordinates of the $d$ vector fields we call $v^{\text{bckd}}_1 \cdots v^{\text{bckd}}_d$ in local coordinates. And the \textit{shared framing} will be the first $(d-2)$ of these vectors
\begin{equation}
F^\text{shared} =
  \left(\begin{array}{@{}ccccccc}
    1      & 1                 & \cdots & 1                 \\
    1      & \frac{1}{2}       & \cdots & \frac{1}{d}       \\
    \vdots &                   &        & \vdots            \\
    1      & \frac{1}{2^{d-3}} & \cdots & \frac{1}{d^{d-3}} \\
  \end{array}\right)
\end{equation}
whose rows we'll call $v^{\text{shared}}_1 \cdots v^{\text{shared}}_d$, and will define a framings of any curves living on the dual 1-skeleton.

Now, let's define some curve that passes through some hypercube $\msquare_d$ along the dual 1-skeleton. First, we introduce short-hand for the line segments of the dual 1-skeleton. In the notation of this paper, the dual 1-skeleton is given by the cells $P^\vee_{(z_1 \cdots z_d)}$ such that exactly one $z_i \in {+,-}$ and all of the other $z_{j \neq i} = \bullet$. 


Our goal will be to compute the total winding along the curve in terms of partial windings obtained from traversing between $\ihat_{\pm_1} \to \jhat_{\pm_2}$. It's helpful to decompose the curve into several parts with reference points (A,B,C,D) along the curve, as in Fig.~\ref{pathForWinding}. From here forward, we'll assume WLOG that $i \neq j$; there would be no winding traversing from $\ihat_{\pm} \to \ihat_{\mp}$ because in that case, the curve moves in a `straight line' with respect to the background framing so there's no winding.

\begin{figure}[h!]
  \centering
  \includegraphics[width=0.25\linewidth]{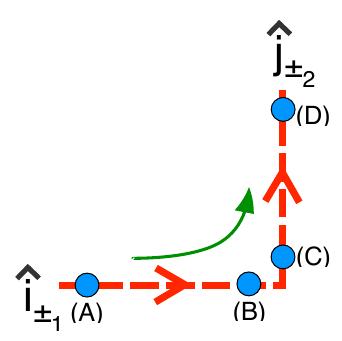}
  \caption{Path of a curve traversing along $\ihat_{\pm_1} \to \jhat_{\pm_2}$. We mark four convenient reference points (A,B,C,D) along the the curve.}
  \label{pathForWinding}
\end{figure}

Now, we define the tangent framing along this curve's segments \{(A) $\to$ (B),(B) $\to$ (C),(C) $\to$ (D)\}. Below, we use the notation that $(-1)^{p(1)}=\pm_1 1,(-1)^{p(2)}=\pm_2 1$ and $\delta_{i < j}$ is $1$ if $i<j$ and $0$ otherwise.

The tangent framing will consist of $d$ vector fields $v^{\text{tang}}_1,\cdots,v^{\text{tang}}_d$, the first $(d-2)$ of which are the `shared framing' with the background, so $v^{\text{tang}}_k = v^{\text{shared}}_k = v^{\text{bckd}}_k$ for $k \in \{1,\cdots,d-2\}$. The $d^{th}$ vector $v^{\text{tang}}_d$ will be proportional to the curve's tangent vector and the $(d-1)^{th}$ vector will be chosen so that $\det(F^\text{tang})$ has the same sign as $\det(F^\text{bckd})$, which is $\propto (-1)^{d \choose 2}$. Below, we write out the framing, where for the `(B) $\to$ (C)' segment, $t$ represents a parameter of the segment and ranges from $0 \to 1$.

\begin{equation}
F^\text{tang} = 
  \left(\begin{array}{@{}ccccccccccc}
    1      & \cdots & 1                     & 1                  & 1                     & \cdots & 1                     & 1                              & 1                     & \cdots & 1                  \\
    
    1      & \cdots & \frac{1}{i-1}         & \frac{1}{i}        & \frac{1}{i+1}         & \cdots & \frac{1}{j-1}         & \frac{1}{j}                    & \frac{1}{j+2}         & \cdots & \frac{1}{d}        \\
    
    \vdots &        & \vdots                &                    & \vdots                &        & \vdots                &                                & \vdots                &        & \vdots             \\
    
    1      & \cdots & \frac{1}{(i-1)^{d-232}}  & \frac{1}{(i+1)^{d-3}} & \cdots & \frac{1}{(j-1)^{d-3}} & \frac{1}{j^{d-3}}              & \frac{1}{(j+2)^{d-3}} & \cdots & \frac{1}{d^{d-3}}  \\ \hline
    
    0      & \cdots & 0                     & 0                  & 0                     & \cdots & 0                     & (-1)^{i+j+\delta_{i < j}+p(1)} & 0                     & \cdots & 0                  \\
    
    0      & \cdots & 0                     & (-1)^{p(1)+1}      & 0                     & \cdots & 0                     & 0                              & 0                     & \cdots & 0                  \\
    
  \end{array}\right) \text{ for  (A) } \to \text{ (B)}
\end{equation} 

\begin{equation}
F^\text{tang} = 
  \left(\begin{array}{@{}ccccccccccc}
    1      & \cdots & 1                                 & \cdots & 1                                     & \cdots & 1                  \\
    
    1      & \cdots & \frac{1}{i}                       & \cdots & \frac{1}{j}                           & \cdots & \frac{1}{d}        \\
    
    \vdots &        &                                   &        &                                       &        & \vdots             \\
    
    1      & \cdots & \frac{1}{i^{d-3}}                 & \cdots & \frac{1}{j^{d-3}}                     & \cdots & \frac{1}{d^{d-3}}  \\ \hline
    
    0      & \cdots & (-1)^{i+j+\delta_{i < j}+p(2)}*t  & \cdots & (-1)^{i+j+\delta_{i < j}+p(1)}*(1-t)  & \cdots & 0                  \\

    0      & \cdots & (-1)^{p(1)+1}*(1-t)               & \cdots & (-1)^{p(2)}*t                         & \cdots & 0                  \\
    
  \end{array}\right) \text{ for  (B) } \to \text{ (C)}
\end{equation} 

\begin{equation}
F^\text{tang} = 
  \left(\begin{array}{@{}ccccccccccc}
    1      & \cdots & 1                               & \cdots & 1                     & 1                  & \cdots & 1                  \\
    
    1      & \cdots & \frac{1}{i}                     & \cdots & \frac{1}{j-1}         & \frac{1}{j}        & \cdots & \frac{1}{d}        \\
    
    \vdots &        &                                 &        & \vdots                &                    &        & \vdots             \\
    
    1      & \cdots & \frac{1}{i^{d-3}}               & \cdots & \frac{1}{(j-1)^{d-3}} & \frac{1}{j^{d-3}}  & \cdots & \frac{1}{d^{d-3}}  \\ \hline
    
    0      & \cdots & (-1)^{i+j+\delta_{i < j}+p(2)}  & \cdots & 0                     & 0                  & \cdots & 0                  \\

    0      & \cdots & 0                               & \cdots & 0                     & (-1)^{p(2)}        & \cdots & 0                  \\
    
  \end{array}\right) \text{ for  (C) } \to \text{ (D)}
\end{equation} 

While this tangent framing is well-defined within each $\msquare_d$, we'd need to specify what the $(d-1)^{th}$ tangent framing's vector, $v^\text{tang}_{d-1}$ is going between adjacent hyper-cubes $\msquare_d(1) \to \msquare_d(2)$. We can just set $v^\text{tang}_{d-1} = t*v^\text{tang}_{d-1}(\msquare_d(1)) + (1-t)*v^\text{tang}_{d-1}(\msquare_d(2))$, where for $k=1,2$, $v^\text{tang}_{d-1}(\msquare_d(k))$ is the $(d-1)^{th}$ tangent frame vector at point (A) of $\msquare_d(2)$.

\subsection{Computing Windings}
First, let's recall what we mean by the windings. For some curve going around in a closed loop $\gamma(\tau)$ parameterized by a `time' variable $\tau$, we'll have that $F^\text{bckd} = F^{\text{rel}}(\tau) F^\text{tang}$ for some `relative framing' $F^{\text{rel}}(\tau) \in GL^+(d)$. Since the loop is closed $F^{\text{rel}}(\tau)$ will be some closed path representing a loop in $GL^+(d)$, which will lift to some path in $\Spin(d)$. At the end of the day, we want to compute whether this path in $\Spin(d)$ is closed or if the ending point is opposite to the starting point. If the start and end points in $\Spin(d)$ are different, then the induced spin structure on the framed loop will be non-bounding (i.e. `anti-periodic'), and if this curve in $\Spin(d)$ is a closed path, the induced spin structure will be bounding (i.e. `periodic'). We can compute this function $F^{\text{rel}}(\tau)$ explicitly and use the information of how it changes within each of the cubes $\msquare_d$ to compute piecewise contributions to the path in $\Spin(d)$. This will be possible since the framings share $(d-2)$ vectors, so restricting to the two vectors that actually vary with respect to each other will allow us to compute the endpoint of this path in terms of the `winding angle' those two vectors undergo with repsect to each other. In particular, this total winding angle will be $2\pi*\text{wind(loop)}$ for some integer `$\text{wind(loop)}$'. If `$\text{wind(loop)}$' is odd the induced spin structure is anti-periodic, and if `$\text{wind(loop)}$' is even it'll be periodic. And, the function $\sigma(\text{loop})$ is related as
\begin{equation}
    \sigma(\text{loop}) = -(-1)^{\text{wind(loop)}}
\end{equation}
which gives $1$ if the induced structure is anti-periodic and gives $-1$ if it's periodic.

To compute the windings, we should compare the last two vectors of $F^\text{bckd}$ and $F^\text{tang}$ and determine what angle in $2 \pi \Z$ they wind with respect to each other. This can be computed solely in terms of the information of $F^{\text{rel}}(\tau)$ on each leg of the journey within each $\msquare$. However, we won't compute $F^{\text{rel}}(\tau)$ directly - we'll instead compute the winding with respect to a homotopic background framing $\tilde{F}^\text{bckd}$ consisting of vector fields $\widehat{v}^{\text{bckd}}_1 \cdots \widehat{v}^{\text{bckd}}_d$. This modified framing is defined as follows:
\begin{equation}
\widehat{F}^\text{bckd} = 
  \left(\begin{array}{@{}ccccccccccc}
    1      & \cdots & 1                     & 1                  & 1                     & \cdots & 1                     & 1                          & 1                     & \cdots & 1                  \\
    
    \vdots &        & \vdots                &                    & \vdots                &        & \vdots                &                            & \vdots                &        & \vdots             \\
    
    1      & \cdots & \frac{1}{(i-1)^{d-3}} & \frac{1}{i^{d-3}}  & \frac{1}{(i+1)^{d-3}} & \cdots & \frac{1}{(j-1)^{d-3}} & \frac{1}{j^{d-3}}          & \frac{1}{(j+2)^{d-3}} & \cdots & \frac{1}{d^{d-3}}  \\ \hline
    
    0      & \cdots & 0                     & 0                  & 0                     & \cdots & 0                     & (-1)^{j+\delta_{i < j}+d}  & 0                     & \cdots & 0                  \\
    
    0      & \cdots & 0                     & (-1)^{(i+1+d)}     & 0                     & \cdots & 0                     & 0                          & 0                     & \cdots & 0                  \\
    
  \end{array}\right) \text{ for  (A)} \to \text{(B)}
\end{equation} 

\begin{equation}
\widehat{F}^\text{bckd} = 
  \left(\begin{array}{@{}ccccccccccc}
    1      & \cdots & 1                            & \cdots & 1                                & \cdots & 1                  \\
    
    \vdots &        &                              &        &                                  &        & \vdots             \\
    
    1      & \cdots & \frac{1}{i^{d-3}}            & \cdots & \frac{1}{j^{d-3}}                & \cdots & \frac{1}{d^{d-3}}  \\ \hline
    
    0      & \cdots & (-1)^{i+\delta_{i < j}+d+1}*t  & \cdots & (-1)^{j+\delta_{i < j}+d}*(1-t)  & \cdots & 0                  \\

    0      & \cdots & (-1)^{(i+1+d)}*(1-t)         & \cdots & (-1)^{(j+1+d)}*t                   & \cdots & 0                  \\
    
  \end{array}\right) \text{ for  (B)} \to \text{(C)}
\end{equation} 

\begin{equation}
\widehat{F}^{\text{bckd}} = 
  \left(\begin{array}{@{}ccccccccccc}
    1      & \cdots & 1                            & \cdots & 1                     & 1                  & \cdots & 1                  \\
    
    \vdots &        &                              &        & \vdots                &                    &        & \vdots             \\
    
    1      & \cdots & \frac{1}{i^{d-3}}            & \cdots & \frac{1}{(j-1)^{d-3}} & \frac{1}{j^{d-3}}  & \cdots & \frac{1}{d^{d-3}}  \\ \hline
    
    0      & \cdots & (-1)^{i+\delta_{i < j}+d+1}  & \cdots & 0                   & 0                  & \cdots & 0                \\

    0      & \cdots & 0                            & \cdots & 0                     & (-1)^{(j+1+d)}     & \cdots & 0                \\
    
  \end{array}\right) \text{ for  (C)} \to \text{(D)}
\end{equation} 
Note that $\widehat{F}^\text{bckd}$ can be obtained as a homotopy from $F^\text{bckd}$ in two steps. The first step is replacing the $d^{th}$ background vector field $v^\text{bckd}_d$ with the $d^{th}$ vector field $\widehat{v}^\text{bckd}_d$ in the above matrices $\widehat{v}^\text{bckd}_d$ via the homotopy $(1-s)*v^\text{bckd}_d + s*\widehat{v}^\text{bckd}_d$. This is okay to do since the determinant of the framing stays away from zero given the sign choices made above. The second step would be replacing the $(d-1)^{th}$ vector by a similar homotopy, which is again okay since the determinant will stay away from zero. 

For reference, the first step of only replacing $v^\text{bckd}_d \to \widehat{v}^\text{bckd}_d$ defines a framing $\tilde{F}^\textbf{bckd}$
\begin{equation}
\tilde{F}^\text{bckd} = 
  \left(\begin{array}{@{}ccccccccccc}
    1      & \cdots & 1                            & \cdots & 1                                & \cdots & 1                  \\
    
    \vdots &        &                              &        &                                  &        & \vdots             \\
    
    1      & \cdots & \frac{1}{i^{d-3}}            & \cdots & \frac{1}{j^{d-3}}                & \cdots & \frac{1}{d^{d-3}}  \\ \hline
    
    1      & \cdots & \frac{1}{i^{d-2}}            & \cdots & \frac{1}{j^{d-2}}                & \cdots & \frac{1}{d^{d-2}}  \\

    0      & \cdots & (-1)^{(i+1+d)}*(1-t)         & \cdots & (-1)^{(j+1+d)}*t                   & \cdots & 0                  \\
    
  \end{array}\right) \text{ for  (B)} \to \text{(C)}.
\end{equation} 
Note that this framing $\tilde{F}^\textbf{bckd}$ is independent of any specific path that taken on the dual 1-skeleton. In particular, away from the center of a $\msquare_d$ on a dual edge $\ihat_\pm$, this vector $\widehat{v}_{d}$ is parallel to the dual 1-skeleton and will point in the direction $(-1)^{i+1+d}$.

Computing $\widehat{F}^\text{bckd}(F^\text{tang})^{-1}$ on the `(B) $\to$ (C)' segment and examining the path that the bottom 2$\times$2 block takes in $GL^+(2)$ will tell us the winding. Write $\widehat{F}^\text{bckd}(\tau) = \widehat{F}^{\text{rel}}(\tau) F^\text{tang}(\tau)$ along the closed loop $\gamma(\tau)$ on the dual skeleton with $\tau$ being the time variable around the entire loop. The winding along a segment parameterized as $t_1 \to t_2$ would be determined by $\{\widehat{F}^{\text{rel}}(\tau) | t_1 \le \tau \le t_2\}$ . Using the windings above and a bit of algebra shows that the winding from $\ihat_{\pm_1} \to \jhat_{\pm_2}$ would be (recall we're assuming $i \neq j$):
\begin{equation}
2 \pi \, \text{wind}(\ihat_{\pm_1} \to \jhat_{\pm_2}) = (-1)^d
\begin{cases}
0    \quad &\text{ if } i+j+p(1)+p(2) = 1 \text{ (mod 2)}   \\
\pi  \quad &\text{ if } i+j+p(1)+p(2) = 0 \text{ (mod 2)}  \text{ AND } i < j \\
-\pi \quad &\text{ if } i+j+p(1)+p(2) = 0 \text{ (mod 2)}  \text{ AND } i > j
\end{cases}.
\end{equation}
Now, we can visualize these partial windings as in Fig.~\ref{fig:windingsGeneralDimensions}. The form of this figure comes from the fact that projecting away the shared framing directions are the same ones that determine the thickening and shifting. So, the pairs of dual edges that occur in the intersection between the dual 1-skeleton and its shifted versions are the pairs that appear in the $\cup_{d-2}$ product of $(d-1)$-cochains.
\begin{equation*}
(\alpha \cup_{d-2} \beta)(\msquare_d) = \sum_{1\le i<j \le d}(\alpha(\ihat_{(-1)^{i+1}})\beta(\jhat_{(-1)^{j+1}})+\alpha(\jhat_{(-1)^{j}})\beta(\ihat_{(-1)^{i}}))
\end{equation*}

\begin{figure}[h!]
  \centering
  \includegraphics[width=\linewidth]{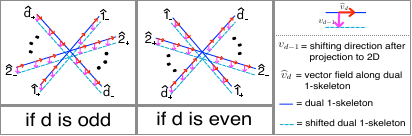}
  \caption{Vector fields $v_{d-1}, \widehat{v}_d$ that can be used to visualize the windings after projecting away the `shared framing directions' to 2D. Solid blue lines are the original dual 1-skeleton and the dashed light-blue lines are the shifted version. $\widehat{v}_d$ points along dual 1-skeleton except near the center. $v_{d-1}$ determines the shifting. (Left) The projected picture for $d$ odd. (Middle) The projected picture for $d$ even. (Right) Legend. }
  \label{fig:windingsGeneralDimensions}
\end{figure}

The total winding is given by
\begin{equation*}
\text{wind(loop)} = \sum_{\ihat_{\pm_1} \to \jhat_{\pm_2}} \text{wind}(\ihat_{\pm_1} \to \jhat_{\pm_2}),
\end{equation*}
so the induced spin structure on the framed curve would determine $\sigma(\text{loop})$ as a product over all such segments $\ihat_{\pm_1} \to \jhat_{\pm_2}$
\begin{equation}
    \sigma(\text{loop}) = - (-1)^{\sum{\ihat_{\pm_1} \to \jhat_{\pm_2}} \text{wind}(\ihat_{\pm_1} \to \jhat_{\pm_2})}.
\end{equation}

\bibliography{bibliography.bib}

\end{document}